\newtheorem{theorem}{Theorem}
\newtheorem{corollary}{Corollary}
\newtheorem{lemma}{Lemma}
\theoremstyle{definition}
\newtheorem{definition}{Definition}
\renewcommand{\r}{\rightarrow}
\newcommand{\be}{\begin{equation}}
\newcommand{\ee}{\end{equation}}
\newcommand{\C}{\mathcal}
\newcommand{\tb}{\textbf}
\newcommand{\ti}{\textit}
\tikzstyle{copy}=[fill=white, draw=black, shape=circle]
\tikzstyle{none}=[inner sep=0pt]
\tikzstyle{new}=[-, thick, draw={rgb,255: red,81; green,41; blue,241}]
\tikzstyle{new_dashed}=[-, thick, dashed, draw={rgb,255: red,81; green,41; blue,241}]
\tikzstyle{alice}=[-, fill={rgb,255: red,124; green,231; blue,255}, draw={rgb,255: red,12; green,60; blue,216}]
\tikzstyle{bob}=[-, fill={rgb,255: red,255; green,123; blue,125}, draw={rgb,255: red,171; green,18; blue,21}]
\tikzstyle{alice_dashed}=[-, dashed, draw={rgb,255: red,74; green,201; blue,255}]
\tikzstyle{bob_dashed}=[-, dashed, draw={rgb,255: red,255; green,123; blue,125}]
\begin{document}

\newcommand{\pisa}{Department of Physics ``E. Fermi'', University of Pisa, Largo B. Pontecorvo 3, 56127 Pisa, Italy} 
\newcommand{\pisashort}{Department of Physics ``E. Fermi'', University of Pisa, Pisa, Italy} 

\newcommand{\cfum}{Centro de F\'{i}sica, Universidade do Minho, Campus de Gualtar, 4710-057 Braga, Portugal} 
\newcommand{\cfumshort}{Centro de F\'{i}sica, Universidade do Minho, Braga, Portugal} 

\newcommand{\inl}{INL -- International Iberian Nanotechnology Laboratory, Av. Mestre Jos\'{e} Veiga s/n, 4715-330 Braga, Portugal} 
\newcommand{\inlshort}{INL -- International Iberian Nanotechnology Laboratory, Braga, Portugal} 


\title{Convexity of noncontextual wirings and how they order the set of correlations}

\author{Tiago Santos}
\email[These authors contributed equally\\ Corresponding author: ]{rafael.wagner@inl.int}
\affiliation{Department of Mathematical Physics, Institute of Physics,
University of S\~ao Paulo, R. do Mat\~ao 1371, S\~ao Paulo 05508-090,
SP, Brazil}

\author{Rafael Wagner}
\email[These authors contributed equally\\ Corresponding author: ]{rafael.wagner@inl.int}
\affiliation{\inlshort}
\affiliation{\cfumshort}
\affiliation{\pisashort}

\author{B\'{a}rbara Amaral}
\affiliation{Department of Mathematical Physics, Institute of Physics,
University of S\~ao Paulo, R. do Mat\~ao 1371, S\~ao Paulo 05508-090,
SP, Brazil}

\begin{abstract}
The resource theory of contextuality considers resourceful objects to be probabilistic data-tables, known as correlations or behaviors, that fail to have an explanation in terms of Kochen-Specker noncontextual models. In this work, we advance this resource theory, considering free operations to be noncontextual wirings (NCW). We show that all such wirings form a convex set. When restricted to Bell scenarios, we show that such wirings are not equivalent to local operations assisted by a common source of classical shared randomness (LOSR). The set of all NCW operations contains LOSR, but is strictly larger. We also prove several elementary facts about how different resources can be converted via NCW. As a concrete example, we show that there are pairs of behaviors that cannot be converted one into the other using NCW. Since resource conversion mathematically induces a pre-order over the set of all behaviors, our results reveal the intricate ordering induced by NCW in scenarios beyond Bell scenarios.
\end{abstract}

\maketitle

\tableofcontents

\section{Introduction}
\label{sec:intro}

Kochen--Specker (KS) contextuality~\cite{budroni2021quantum,kochen1975problem} is a well-known resource for quantum computation~\cite{howard2014contextuality,raussendorf2013contextuality,abramsky2017contextual,bermejo2017contextuality,raussendorf2020phase,oestereich2017reliable,bravyi2018quantum,bravyi2020quantum} and quantum communication~\cite{gupta2023quantum,saha2019stateindependent}. This resource is formally defined in terms of so-called compatibility scenarios (a.~k.~a. measurement scenarios). When KS-noncontextual models can explain the probability distributions from experimental data, the aforementioned applications lack any form of quantum advantage. Beyond these applications, KS-contextuality has also been shown useful for self-testing~\cite{xu2023stateindependent,bharti2019robustself,hu2023self,saha2020sumofsquares}, dimension witnessing~\cite{vidick2011ignorance,guhne2014bounding,ray2021graph}, and even in quantum foundations, for analyzing the  quantum Cheshire cat protocol~\cite{hance2023contextuality}, or allowing the introduction of new extended Wigner's friend arguments~\cite{szangolies2023quantum,walleghem2024extended}.

The term `resource' alone is vague and unclear. Within the resource theories framework~\cite{fritz2017resource,chitambar2019quantum} such terminology has a simple and precise meaning. Consider a set $\mathcal{U}$ and a proper subset denoted as $\mathcal{F}$. The term resource refers to some generic property, that can be of quantum states~\cite{chitambar2019quantum,gour2024resources}, quantum channels~\cite{gour2020dynamical,theurer2020quantifying}, probability distributions (as the ones we will consider), or yet other properties such as nonclassicality of common-cause~\cite{wolfe2020quantifying}, satisfied by elements in $\mathcal{U}\setminus \mathcal{F}$ but not by elements in $\mathcal{F}$. The intuition behind the word `resource' is that access to $\mathcal{U}\setminus \mathcal{F}$ allows one to do `more' than if only access to  $\mathcal{F}$ is granted. Elements of $\mathcal{F}$ are called `free'. For our case of interest, KS-contextuality, a property held by behaviors in compatibility scenarios, will be our defining notion of resource. Free resources will be behaviors having a KS-noncontextual model. 

In the resource theories framework, given that we have a notion of free objects determined by elements from the set $\mathcal{F}$, we can also have a notion of free operations as those that, at least, preserve this set. In other words, mappings $T:\mathcal{U} \to \mathcal{U}$ such that $T(\mathcal{F}) \subseteq \mathcal{F}$. Different works have considered different choices of free operations, meaning in our case those that do not create KS-contextual correlations from KS-noncontextual ones~\cite{horodecki2023rank,horodecki2015axiomatic,grudka2014quantifying,abramsky2019comonadic,karvonen2021neither,barbosa2023closing}.  \emph{Noncontextual wirings}~\cite{amaral2018noncontextual,amaral2019resource} is one such choice of free operations. These wirings can be intuitively described as the composition of pre- and post-processing, combined with adaptivity between KS-noncontextual correlations. They form a useful working subset of all possible free operations for a resource theory of KS-noncontextuality. However, there are currently two outstanding issues regarding the definition of noncontextual wirings. 

The first issue is the relation between NCW with the set of free operations known as \emph{local operations with shared randomness (LOSR)}, which has two non-equivalent formulations, one investigated by Gallego and Aolita~\cite{gallego2017nonlocality}, and the other introduced by Vicente~\cite{deVicente2014LOSR},  Geller and Piani~\cite{geller2014quantifying}. That these two formulations are not equivalent was pointed out by Wolfe \emph{et. al.}~\cite{wolfe2020quantifying}. In light of this, noncontextual wirings have been claimed to be equivalent to Gallego and Aolita's description of LOSR, which we will term as $\mathrm{LOSR}^\circ$, when we restrict wirings to act on compatibility scenarios mathematically equivalent to Bell scenarios. In this setting, it is unclear which relation (if any) exists between NCW and the Vicente-Geller-Piani description of LOSR, which we will term as $\mathrm{LOSR}^\bullet$. Beyond that, it has also been pointed out recently by Karvonen~\cite{karvonen2021neither} that general free operations for noncontextuality should also allow for wirings between different parties when applied to compatibility scenarios mathematically equivalent to Bell scenarios. This also suggests some further revision of the definition of NCW, or some previous results, as when restricted to Bell scenarios, this feature pointed out by Karvonen would imply that NCW is \emph{inequivalent} even to $\mathrm{LOSR}^\circ$, contrary to what was claimed in Refs.~\cite{amaral2018noncontextual,amaral2019resource}.

The second issue concerns the \emph{convexity} of the set of NCW operations. As pointed out by Wolfe~\emph{et.~al.} both the sets $\mathrm{LOSR}^\circ$ and NCW satisfy that the pre- and post-processings allowed have \emph{independent} sources of classical randomness. This is the reason why $\mathrm{LOSR}^\circ$ is a non-convex subset of $\mathrm{LOSR}^\bullet$, the latter allowing for the same common source of randomness between pre- and post-processing behaviors. This discussion suggests either that NCW operations should form a non-convex subset, or that we should revise its definition to allow for a common source of randomness. However, this is not entirely clear since, as pointed out by Ref.~\cite[Remark 28]{barbosa2023closing}, having access to a common source of shared randomness or to some form of adaptivity on the free operations results in similar expressive power regarding the possible free transformations. Anything one can do, operationally, the other can also do, and vice-versa.  

In this work, we resolve both the issues raised above. First, we show that both $\mathrm{LOSR}^\circ$ and $\mathrm{LOSR}^\bullet$ are proper subsets of NCW when restricted to compatibility scenarios mathematically equivalent to Bell scenarios. Hence, for such scenarios, any definition of LOSR is not equivalent to NCW. We also show that type-dependent NCW operations always form convex sets (in fact, convex polytopes) by showing that, due to the adaptivity, it is always possible to re-write the definition of a noncontextual wiring as if both pre- and post-processing boxes have a common source of shared randomness. 

Beyond the relationship between LOSR and NCW, we also prove structural results on how NCW operations can act on behaviors in compatibility scenarios. One aspect of the expressive power of free operations is their ability to relate different resources. If there exists a free operation $T$ such that $T(A) = B$ we write that $A \to B$. This free arrow defines a pre-order (a relation that is both reflexive and transitive) on the set of all possible objects in a resource theory. As part of the characterization of noncontextual wirings, we investigate some specific properties of how different objects can be transformed one into another. 
\begin{enumerate}
    \item We show that there are objects $B_1$ and $B_2$ such that neither $B_1$ can be transformed using NCW operations into $B_2$, and vice-versa. When this happens we denote $B_1 \nleftrightarrow B_2$, and say that the two behaviors are \emph{incomparable}.
    \item We also show that there are uncountably infinite sets where every pair of elements satisfy this property, i.e., \emph{every} pair of objects is incomparable.
    \item We prove the existence of triplets $B_1,B_2,B_3$ satisfying that $B_1 \nleftrightarrow B_2 \nleftrightarrow B_3$ but $B_3 \to B_1$.
    \item We also show there are infinite chains of behaviors $B_1 \to B_2 \to B_3 \to \dots $ comparable in sequence. 
    \item Finally, we show that there are infinitely many equivalence classes $[B]:= \{A: B \to A \text{ and }A \to B\}$, within the interval $B_1 \to B \to B_2$ marked by some pair of elements $B_1$ and $B_2$. 
\end{enumerate}
We refer to these structural properties as \emph{global comparability properties}. These results are perhaps unsurprising, as these properties were also shown in Ref.~\cite{wolfe2020quantifying} to be true for correlations violating the Clauser-Horne-Shimony-Holt (CHSH) inequalities~\cite{clauser1969proposed}, and when considering $\mathrm{LOSR}^\bullet$ as the free operations. Here we improve these findings by showing that for compatibility scenarios known as the $n$-cycle scenarios~\cite{araujo2013all}, for every integer $n\geq 3$, we can find families of behaviors satisfying all the global properties listed above, when considering NCW as the free operations.

The structure of our work is as follows. We start reviewing the theoretical concepts necessary to present our main results. In Sec.~\ref{sec: resource} we introduce the resource theories framework, with a presentation focusing on providing physical intuition to the mathematical descriptions introduced. Our presentation in this section is aimed towards researchers less familiar with mathematical topics, and motivations behind using this framework, implying that we have filled this section with examples and applications of different resource theoretic notions. Sec.~\ref{sec: contextuality} reviews KS-noncontextuality and Sec.~\ref{sec: RT of noncontextual wirings} the resource theory framed around noncontextual wirings introduced in Ref.~\cite{amaral2018noncontextual}. We suggest Sec.~\ref{sec: RT of noncontextual wirings} even to readers familiar with noncontextual wirings, as our presentation is significantly different from existing ones~\cite{amaral2018noncontextual,amaral2019resource,barbosa2023closing}. Our main results regarding the relation between NCW and LOSR are presented in Sec.~\ref{sec: NCW vs LOSR}. The convexity of type-dependent NCW operations is shown in Sec.~\ref{sec: convexity result}. In Sec.~\ref{sec: global} we present the comparability results described before and make some final remarks in Sec.~\ref{sec: discussion}.

\section{Background}\label{sec: background}

We begin by describing the basic mathematical elements of the resource theories framework~\cite{coecke2016mathematical,gonda2019monotones} in Sec.~\ref{sec: resource}. Although at first, the presentation will often seem quite abstract, we will complement the mathematical definitions with physically motivated interpretations. Sec.~\ref{sec: contextuality} introduces compatibility scenarios and the notion of Kochen-Specker noncontextuality. We then define the resource theory of Kochen-Specker contextuality where the free operations are noncontextual wirings in Sec.~\ref{sec: RT of noncontextual wirings}.

\subsection{Resource theories}
\label{sec: resource}

Let us start by defining an abstract resource theory:

\begin{definition}\label{def: resource theory}
    A \emph{resource theory} is a triplet $\mathcal{R} = (\mathcal{U},\mathcal{F},\mathcal{T})$ where $\mathcal{U}$ is a set of objects, $\mathcal{F} \subsetneq \mathcal{U}$ is a proper subset of objects called free, and $\mathcal{T}$ is a set of transformations $T: \mathcal{U} \to \mathcal{U}$ satisfying $T(\mathcal{F}) \subseteq \mathcal{F}$, called free transformations or free operations.
\end{definition}

Intuitively, $\mathcal{U}$ is the set of \emph{all objects} under study. For example, in static quantum resource theories~\cite{chitambar2019quantum}, $\mathcal{U} = \mathcal{D}(\mathcal{H})$, the set of all possible quantum states of a quantum system $\mathcal{H}$. Each object in this case is a quantum state $\rho \in \mathcal{D}(\mathcal{H})$. Above in Def.~\ref{def: resource theory}, the notion of sequential composition is given by the notion of composition of maps $T_2 \circ T_1$ as $T_2(T_1(\cdot))$. 

The subset $\mathcal{F}$ is the set of \emph{free} objects. The set of free objects corresponds to those that lack a certain property of interest, which we refer to as the \emph{resource}. Then, elements in $\mathcal{U} \setminus \mathcal{F}$ will correspond to \emph{resourceful} objects. We select some property of interest to be studied, but the terminology of \emph{resource} clearly suggests that we will \emph{use} resourceful objects to perform some task, that would not be possible without the presence of the resource.  

The definition of a resource theory has a clear operational intuition. However, being too strict with this intuition can sometimes lead to a very narrow view of the framework. For instance, objects in $\mathcal{F}$ are not necessarily those that are easily accessible in practice, with current technology, as the term \emph{free} suggests. For example, in the resource theory of nonstabilizerness~\cite{veitch2014resourcetheory,braviy2016improved}, crucial for quantum computation, highly entangled pure multipartite states are considered free. Yet, it is significantly challenging to prepare such states~\cite{bluvstein2023logical,moses2023race_track}. Neither the set $\mathcal{U}\setminus \mathcal{F}$ must be viewed as the set of scarce or rare objects. Almost every pure quantum state is entangled~\cite{gross2009most}, and still, this resource powers many information tasks~\cite{horodecki2009quantum}. Finally, objects in $\mathcal{U}\setminus \mathcal{F}$ should also not be considered as those that are hard to prepare with current technology. For instance, highly non-trivial coherent states can be prepared with current technology~\cite{giordani2021witnesses,giordani2023experimental}, and still, the resource theory of coherence is one of the most widely investigated resource theories in practice~\cite{baumgratz2014quantifying,streltsov17colloquium}.

Finally, the set $\mathcal{T}$ is considered the set of transformations that can be freely performed, in the precise sense that they do not generate resources. For entanglement theory, an example is the set of local operations and classical communications~\cite{horodecki2009quantum}. In this case, any free transformation $T$ acting on separable states will output another separable state. Another example of free operations in entanglement theory, that is a subset of the one just described, is the set of all local operations with shared randomness introduced in Ref.~\cite{dukaric2008limit} and subsequently developed in Refs.~\cite{buscemi2012all,deVicente2014LOSR,geller2014quantifying,wolfe2020quantifying,schmid2020typeindependent}.

When $\mathcal{U} \subseteq V$ is a subset of some vector space $V$ it is important to characterize the geometry of $\mathcal{F}$. For example, when the set of free objects is convex, some monotones such as the contextual fraction~\cite{abramsky2017contextual} and robustness~\cite{li2020contextualrobustness}, can be defined and calculated using standard linear programming techniques. Moreover, some general results within the resource theory approach, such as an operational interpretation using discrimination tasks~\cite{takagi2019general,takagi2019operational} and being able to map problems to conic programming problems~\cite{uola2019quantifying} are only possible if $\mathcal{F}$ is a convex (or also compact) set. It is also useful when the set $\mathcal{T}$ is convex, especially when $\mathcal{F}$ itself is convex, for consistency.  One of our main results (presented in  Sec.~\ref{sec: convexity result}) will be that the set of all noncontextual wirings of a fixed type is \emph{convex}. The notion of the `type' of a transformation will become clear later. 

\subsubsection{Pre-order induced by free operations}

We now introduce the idea of the \emph{pre-order} of objects in a resource theory. This idea intimately connects to (inter)conversion between objects via free operations. Given two objects $A, B \in \mathcal{U}$, we are interested to learn if there is a free transformation $T \in \mathcal{T}$ such that $T(A)=B$, denoted as $A \to B$. This problem has a practical motivation. If it is possible to freely obtain $B$ from $A$, it is sufficient to have access only to $A$ and one obtains $B$ (and a whole class of other objects) by freely acting on $A$. However, we will be interested in the somewhat foundational side of this question, where free transformations \emph{fix a structure} of which objects are comparable through the lenses of the operations in $\mathcal{T}$. 

Note that for most resource theories, there exists a free operation $A \to A$. One possibility is to assume that the identity map $\mathrm{id}: \mathcal{U} \to \mathcal{U}$ satisfying $\mathrm{id}(A)=A$ is a free operation. Moreover, free operations are also assumed to be transitive, i.e., if $A \to B$ and $B \to C$ then $A \to C$, for any $A, B, C \in \mathcal{U}$. The conjunction of these two properties implies that resource conversion $\to$ induces a \emph{pre-order} to the set $\mathcal{U}$. In mathematically oriented literature, one usually denotes the order relation as $\succeq$. In our case,
\begin{equation}
    A \succeq B\, \Leftrightarrow \exists \,T \in \mathcal{T}: T(A)=B.
\end{equation}

The notation $\succeq$ is more common in mathematical literature, while $\to$ is commonly used in physics-oriented literature, as it passes the intuitive idea that one object is being transformed into another. In what follows we will only make use of $\to$. It is, therefore, equivalent to defining a resource theory by specifying a set of free operations, or by specifying a certain pre-order of the objects $\mathcal{U}$. The fact that this relation is only a pre-order, as opposed to a total or a partial order,  introduces most of the `richness' of a given resource theory. This `richness' is represented by the plurality of possible (or impossible)  ways to compare different objects.

\begin{definition}[Equivalent, inequivalent and incomparable objects]\label{def: local comparability relations}
Let $\mathcal{R} = (\mathcal{U},\mathcal{F},\mathcal{T})$ be a resource theory and $B_1,B_2 \in \mathcal{U}$. We say that: 
\begin{itemize}
    \item $B_1$ is \emph{equivalent} to $B_2$ when $B_1 \rightarrow B_2$ and $B_2 \rightarrow B_1$. Otherwise, we say that $B_1$ and $B_2$ are \emph{inequivalent}.
    \item $B_1$ is \emph{incomparable} to $B_2$ when $B_1 \nrightarrow B_2$ and $B_2 \nrightarrow B_1$.
\end{itemize}
Here, we denote $A \nrightarrow B$ if there exists no $T \in \mathcal{T}$ such that $T(A)=B$. 
\end{definition}

From the perspective of the pre-order induced by free transformations, the definition above focuses on \emph{local comparability properties of a pair of objects}. Given two objects, how do they relate under free operations? Incomparability is particularly striking as, in a precise sense, discovering incomparable objects signals inequivalent classes of resourceful objects. In entanglement theory, for instance, it led to the discovery of different entanglement classes~\cite{dur2000three,verstraete2002four}.

\begin{lemma}[Free objects are equivalent]\label{Lemma: free equivalent}
    Consider $\mathcal{R} = (\mathcal{U},\mathcal{F},\mathcal{T})$ to be any resource theory. Suppose there exists $\star \in \mathcal{U}$ such that, for any $ A \in \mathcal{F}$ we have $\star \to A$ and  $ A \to \star$. Then, every pair $A, B \in \mathcal{F}$ is equivalent. 
\end{lemma}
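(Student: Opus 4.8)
The plan is to exploit transitivity of the free-conversion relation $\to$ together with the hypothesized object $\star$ that sits "below and above" every free object. The statement to prove is that any two free objects $A, B \in \mathcal{F}$ are equivalent, i.e. $A \to B$ and $B \to A$. Since equivalence is symmetric in $A$ and $B$, it suffices to establish $A \to B$ for arbitrary $A, B \in \mathcal{F}$; the reverse arrow follows by swapping their roles.

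First I would invoke the hypothesis with the specific free object $A$: since $A \in \mathcal{F}$, we have $A \to \star$. Then I would invoke the hypothesis again with the free object $B$: since $B \in \mathcal{F}$, we have $\star \to B$. Chaining these two arrows via transitivity of $\to$ (which holds in any resource theory, as recalled in the excerpt: if $A \to B$ and $B \to C$ then $A \to C$) yields $A \to B$. Swapping $A$ and $B$ throughout gives $B \to \star$ and $\star \to A$, hence $B \to A$. By Definition~\ref{def: local comparability relations}, $A$ and $B$ are equivalent, and since $A, B \in \mathcal{F}$ were arbitrary, every pair of free objects is equivalent.

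There is essentially no obstacle here: the proof is a two-step diagram chase and the only facts used are (i) the transitivity of $\to$, which is a standing assumption on resource theories spelled out just before Definition~\ref{def: local comparability relations}, and (ii) the existence and the two-sided property of $\star$, which is exactly the hypothesis of the lemma. If anything needs care, it is merely to note that $\star$ itself need not be a free object — it is only required to lie in $\mathcal{U}$ — but this plays no role since we only ever compose the arrows $A \to \star$ and $\star \to B$ without asserting anything intrinsic about $\star$. I would also remark (optionally) that this lemma is the formal underpinning of the common slogan that all free objects are "interconvertible for free," and that in the contextuality setting the role of $\star$ will later be played by a canonical noncontextual behavior.
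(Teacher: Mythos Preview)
Your proof is correct and follows essentially the same approach as the paper: use the hypothesis to obtain $A \to \star$ and $\star \to B$ (and their swapped versions), then invoke transitivity of the pre-order $\to$ to conclude $A \to B$ and $B \to A$. The paper's version is more terse, and your additional remark that $\star$ need not itself be free is a fine observation but not needed for the argument.
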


\begin{proof}
   Due to $\star$, $B \to \star \to A$ and $A \to \star \to B$, for every $A,B \in \mathcal{F}$. But since $\to$ is a pre-order, it is transitive, and therefore $B \to A$ and $A \to B$, implying $A$ and $B$ are equivalent.
\end{proof}

From above, it is often the case that all free objects of a given resource theory are equivalent. One can interpret $\star$ as a reference object, and free objects as those that can be freely constructed from the reference. Moreover, one can imagine that any object can be discarded towards this reference object if we consider discarding to be a free operation. 

In this work, we also investigate \emph{global comparability properties}, which can be understood as comparability properties beyond a single pair. They provide a broader view of the action of the pre-order. We focus on the following global properties, already recognized as relevant elsewhere~\cite{wolfe2020quantifying,gonda2019monotones}:

\begin{definition}[Global properties]\label{def: global properties} 
Let $\mathcal{R} = (\mathcal{U},\mathcal{F},\mathcal{T})$ be a resource theory, and let $\to$ be the pre-order induced by the free operations $\mathcal{T}$. 
\begin{itemize}
    \item When the pre-order is such that every pair of objects is either strictly ordered or equivalent, the set of objects is said to be \emph{totally pre-ordered}.
    
    \item If for every triplet of objects $A, B, C \in \mathcal{U}$, all distinct, such that when $A \nleftrightarrow B$ and $B \nleftrightarrow C$ then $A \nleftrightarrow C$, we say that the pre-order is \emph{weak}.
    
    \item A \emph{chain} is a subset of objects where every pair of elements is strictly ordered. The \emph{height} of the pre-order is the cardinality of the largest chain in this pre-order.
    
    \item Likewise, an \emph{antichain} is a subset of elements where every pair of elements is incomparable. The \emph{width} of the pre-order is the cardinality of the largest antichain contained in the pre-order.
    
    \item We say that an object $B_2$ lies in the \emph{interval} of objects $B_1$ and $B_3$ iff both $B_1 \rightarrow B_2$ and $B_2 \rightarrow B_3$ hold. If the number of equivalence classes $[B]$ that lie in the interval of a pair of objects $(B_1, B_2)$ is finite for every pair of inequivalent objects $(B_1, B_2)$, we say that the pre-order is \emph{locally finite}, otherwise it is said to be \emph{locally infinite}.
\end{itemize}
\end{definition}

On the last point above, note that a pre-order does not define an equivalence relation, since it does not need to be symmetric. However, we can define a new relation $A \sim B:\iff A \to B$ and $B \to A$. In this case, this new relation \emph{is} an equivalence relation, and divides $\mathcal{U}$ into equivalence classes, that are those mentioned above in the last point of Def.~\ref{def: global properties}.

\subsubsection{Resource Monotones}\label{subsec: resource monotones}

One of the most important uses of a resource theory formalism is for quantifying the resources contained in objects of the theory. 

\begin{definition}[Resource monotones]
    Let $(\mathcal{U}, \mathcal{F}, \mathcal{T})$ be any resource theory. We define a \emph{resource monotone} as a pre-order preserving function $\mathsf{m}: \C{U} \r \Bar{\mathbb{R}}:=\mathbb{R} \cup \{-\infty,\infty \}$, such that for all $A, B \in \C{U}$,
\begin{equation*}
 A \to B \implies \mathsf{m}(B) \leq \mathsf{m}(A).
\end{equation*}
\end{definition}

Note that, if the conditions of Lemma~\ref{Lemma: free equivalent} apply the monotone is constant for all free resources. Since for every $A, B \in \mathcal{F}$ we have that $\mathsf{m}(A) \leq \mathsf{m}(B)$ \emph{and} $\mathsf{m}(A)\geq \mathsf{m}(B)$ we then have that $\forall A \in \mathcal{F}, \mathsf{m}(A) = \mathsf{m}_\star$ constant, which is usually taken to be equal to zero $\mathsf{m}_\star = 0$. 

Intuitively then, a monotone function gives a quantitative measure of how resourceful an object is. Because of their order-preserving property, these functions give us insightful information about the resource theory, as we will see in Sec.~\ref{sec: global}. If we write $B = T(A)$ (since $A \to B$) for some $T \in \mathcal{T}$ then $\mathsf{m}(B) = \mathsf{m}(T(A)) \leq \mathsf{m}(A)$. Or from another angle, for any two objects $A, B$ such that $\mathsf{m}(A)> \mathsf{m}(B)$ there exists no free transformation $T$ such that $T(B)=A$. Intuitively, this expresses the idea that since $A$ is more resourceful than $B$ one cannot freely go to $A$ from $B$.

It is worth mentioning that the pre-order structure of objects in a resource theory is more fundamental than any single resource monotone. Clearly, any monotone maps the pre-order over $\mathcal{U}$ into the total order of real numbers. A resource monotone captures certain aspects of the pre-order by assigning numerical values to the objects, but unless the pre-order is a total order (implying that all elements in $\mathcal{U}$ are comparable), it can never contain the complete information available in the pre-order \cite{amaral2019resource}. Even though there were early works (specifically in early advances in entanglement theory) in which one of the goals was to find what would be \ti{the} correct resource monotone, once incomparable objects were discovered, it became clear that this was a wrong path. The pre-order is the fundamental structure, with any particular resource monotone being a coarse-grained description of the resource theory. 

It is unclear if resource monotones can be used to investigate \emph{global} properties, given that they totally order objects from $\mathcal{U}$. Surprisingly, this is the case. As it was highlighted in  Ref.~\cite{gonda2019monotones}, global properties of the pre-order can be characterized by finding \emph{sufficiently many} resource monotones~\cite{wolfe2020quantifying}. Another example is Ref.~\cite{duarte2018concentration}, which introduced contextuality monotones to study geometrical aspects of particular sets of possible behaviors (free objects or not) inside and outside the quantum set of correlations. Later in our work, we will use two resource monotones (so-called cost and yield monotones) to investigate the global comparability properties of the pre-order induced by noncontextual wirings.

\subsection{Kochen--Specker Contextuality}
\label{sec: contextuality}

Contextuality can be viewed as the impossibility of thinking about statistical results of measurements as revealing pre-existing objective properties of that system,  which are independent of the actual set of measurements one chooses to make~\cite{budroni2021quantum, kochen1975problem}. In our treatment, it is a property of behaviors defined with respect to compatibility scenarios (a.~k.~a. measurement scenarios~\cite{barbosa2023closing}).

\subsubsection{Compatibility scenarios}

\begin{definition}
    [From Refs.~\cite{amaral2018noncontextual,amaral2019resource, amaral2018graph}] A \emph{compatibility scenario} is a triplet $\Upsilon := (\C{M},\C{C},\C{O}^{\mathcal{M}})$, where $\C{M}$ is a finite set of measurements,  $\C{C}$ is a family of subsets of $\C{M}$, called maximal contexts, and $\mathcal{O}^{\mathcal{M}} = \prod_{x \in \mathcal{M}}\mathcal{O}^x$, where $\mathcal{O}^x$ are the outcomes of $x \in \mathcal{M}$. For all $\gamma,\gamma' \in \C{C}$,   $\gamma \subseteq \gamma'$ implies $\gamma = \gamma'$.
\end{definition}

Each context $\gamma \in \C{C}$ represents a set of measurements in $\C{M}$ that can be jointly performed. For each context $\gamma$, the set of all possible outcomes for the joint measurement of the measurements in $\gamma$ is the set $\C{O}^\gamma$. When we jointly perform the measurements of $\gamma$, our output is encoded in a tuple $\tb{s} \in \C{O}^\gamma$. Later we will consider scenarios where $\mathcal{O}^x = \mathcal{O}^{x'}$ for all $x,x' \in \mathcal{M}$, i.e. where all measurements have the same outcomes, and we will simply denote such sets of outcomes as $\mathcal{O}^x = \mathcal{O}^{x'} = \mathcal{O}$. Whenever this happens, we will also simply write $\Upsilon = (\mathcal{M},\mathcal{C},\mathcal{O})$ instead of $(\mathcal{M},\mathcal{C},\mathcal{O}^{\mathcal{M}})$ to simplify the notation.

\subsubsection{Behaviors (or boxes)}

Kochen-Specker noncontextuality will be viewed as a constraint satisfied by probabilistic data over compatibility scenarios, that we now define:

\begin{definition}[Behaviors]
    Given a scenario $\Upsilon =(\C{M},\C{C},\C{O}^{\C{M}})$, a \emph{behavior} (a. k. a. a box) $B$ in this scenario is a family of probability distributions, one for each maximal context $\gamma \in \C{C}$,

\begin{equation}\label{eq: behavior}
    B = \Bigg\{ p_\gamma : \C{O}^\gamma \r [0,1]  \Bigg |  \sum_{\tb{s} \in \C{O}^\gamma} p_\gamma (\tb{s}) = 1, \gamma \in \C{C} \Bigg\}.
\end{equation}
\end{definition}

Experimentally, behaviors are the result of running many times a protocol that prepares a certain system and performs sequential (ideal) measurements in $\gamma \in \mathcal{C}$ returning joint outcomes $\mathbf{s} \in \mathcal{O}^\gamma$. Because of that, they are also called correlations, probabilistic data-tables, or simply data-tables. They are only well-defined with respect to a compatibility scenario. 

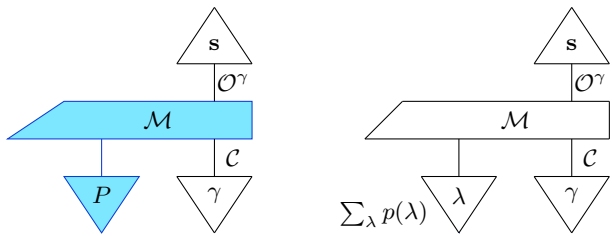
\begin{figure}
\centering
\begin{tikzpicture}
	\begin{pgfonlayer}{nodelayer}
		\node [style=none] (0) at (-2, -0.25) {};
		\node [style=none] (1) at (0.5, -0.25) {};
		\node [style=none] (2) at (0.5, -0.75) {};
		\node [style=none] (3) at (-2.75, -0.75) {};
		\node [style=none] (4) at (-2, -1.25) {};
		\node [style=none] (5) at (-1, -1.25) {};
		\node [style=none] (6) at (-1.5, -2) {};
		\node [style=none] (7) at (-1.5, -2) {};
		\node [style=none] (8) at (-1.5, -1.25) {};
		\node [style=none] (9) at (-1.5, -0.75) {};
		\node [style=none] (16) at (0, -0.25) {};
		\node [style=none] (17) at (0, -0.75) {};
		\node [style=none] (18) at (0, 0.25) {};
		\node [style=none] (19) at (0, -1.25) {};
		\node [style=none] (22) at (-0.5, 0.25) {};
		\node [style=none] (23) at (0.5, 0.25) {};
		\node [style=none] (24) at (0, 1) {};
		\node [style=none] (25) at (0, 0.5) {$\mathbf{s}$};
		\node [style=none] (26) at (-1.5, -1.5) {$P$};
		\node [style=none] (27) at (-0.75, -0.5) {$\mathcal{M}$};
		\node [style=none] (28) at (-0.5, -1.25) {};
		\node [style=none] (29) at (0.5, -1.25) {};
		\node [style=none] (30) at (0, -2) {};
		\node [style=none] (31) at (0, -1.5) {$\gamma$};
		\node [style=none] (33) at (0.25, 0) {$\mathcal{O}^\gamma$};
		\node [style=none] (34) at (0.25, -1) {$\mathcal{C}$};
		\node [style=none] (36) at (5.25, -0.25) {};
		\node [style=none] (37) at (5.25, -0.75) {};
		\node [style=none] (38) at (2, -0.75) {};
		\node [style=none] (39) at (2.75, -1.25) {};
		\node [style=none] (41) at (3.25, -2) {};
		\node [style=none] (42) at (3.25, -2) {};
		\node [style=none] (43) at (3.25, -1.25) {};
		\node [style=none] (44) at (3.25, -0.75) {};
		\node [style=none] (45) at (4.75, -0.25) {};
		\node [style=none] (46) at (4.75, -0.75) {};
		\node [style=none] (47) at (4.75, 0.25) {};
		\node [style=none] (48) at (4.75, -1.25) {};
		\node [style=none] (49) at (4.25, 0.25) {};
		\node [style=none] (50) at (5.25, 0.25) {};
		\node [style=none] (51) at (4.75, 1) {};
		\node [style=none] (52) at (4.75, 0.5) {$\mathbf{s}$};
		\node [style=none] (53) at (3.25, -1.5) {$\lambda$};
		\node [style=none] (54) at (4, -0.5) {$\mathcal{M}$};
		\node [style=none] (55) at (4.25, -1.25) {};
		\node [style=none] (56) at (5.25, -1.25) {};
		\node [style=none] (57) at (4.75, -2) {};
		\node [style=none] (58) at (4.75, -1.5) {$\gamma$};
		\node [style=none] (59) at (5, 0) {$\mathcal{O}^\gamma$};
		\node [style=none] (60) at (5, -1) {$\mathcal{C}$};
		\node [style=none] (61) at (2.5, -0.25) {};
		\node [style=none] (62) at (3.75, -1.25) {};
		\node [style=none] (63) at (2.25, -1.75) {$\sum_\lambda p(\lambda)$};
	\end{pgfonlayer}
	\begin{pgfonlayer}{edgelayer}
		\draw [style=alice] (1.center)
			 to (0.center)
			 to (3.center)
			 to (2.center)
			 to cycle;
		\draw [style=alice] (5.center)
			 to (4.center)
			 to (7.center)
			 to (6.center)
			 to cycle;
		\draw [style=alice] (8.center) to (9.center);
		\draw (16.center) to (18.center);
		\draw (17.center) to (19.center);
		\draw (23.center) to (22.center);
		\draw (22.center) to (24.center);
		\draw (24.center) to (23.center);
		\draw (28.center) to (29.center);
		\draw (28.center) to (30.center);
		\draw (30.center) to (29.center);
		\draw [style=alice] (42.center) to (41.center);
		\draw (45.center) to (47.center);
		\draw (46.center) to (48.center);
		\draw (50.center) to (49.center);
		\draw (49.center) to (51.center);
		\draw (51.center) to (50.center);
		\draw (55.center) to (56.center);
		\draw (55.center) to (57.center);
		\draw (57.center) to (56.center);
		\draw (38.center) to (61.center);
		\draw (61.center) to (36.center);
		\draw (36.center) to (37.center);
		\draw (37.center) to (38.center);
		\draw (39.center) to (62.center);
		\draw (62.center) to (42.center);
		\draw (42.center) to (39.center);
		\draw (43.center) to (44.center);
	\end{pgfonlayer}
\end{tikzpicture}

\caption{Diagrammatic representation of  $p_\gamma(\mathbf{s})$ in a compatibility scenario. (Left) Given a certain classical input $\gamma \in \mathcal{C}$ and a prepared system $P$ we perform joint measurements (or equivalently, sequential ideal measurements) selected from a set  $\mathcal{M}$ and obtain joint outcomes $\mathbf{s} \in \mathcal{O}^\gamma$. Labels in the wires denote the type of classical information they carry. Blue (colored) regions represent generic operational primitives (quantum, classical, or post-quantum). White (non-colored) regions represent classical operational primitives only. (Right) When we have a noncontextual behavior, each element $p_\gamma (\mathbf{s})$ in it takes the form of Eq.~\eqref{eq: noncontextual factorizable}. We can interpret the preparation $P$  as some state $\lambda$ sampled according to some randomness source $p(\lambda)$. In this case, the white (non-colored) regions depict classicality (noncontextuality). \label{fig: behavior} }
\end{figure}

Sometimes we also call a behavior a \emph{box} (see Fig.~\ref{fig: behavior}). Both terms are encountered in the literature, and we will use them interchangeably in our work. The intuition behind the term box is the following: Imagine the elements of $\C{M}$ as buttons of the box, and, for each measurement $x$, we imagine the box having $|\C{O}^x|$ output lights that inform us of the result of the measurements. The box has, therefore, certain rules as certain buttons cannot be jointly pressed (corresponding to certain measurements being incompatible). The information of allowed buttons to be jointly pressed is provided by maximal contexts $\gamma$. In this view, instead of sequential measurements one, equivalently, imagines that an experimental implementation is performing jointly all compatible measurements.

Behaviors may or may not satisfy what we call the \ti{no-disturbance condition}. Given two contexts
$\gamma$ and $\gamma '$, no-disturbance implies that the marginals for their intersection are well defined, and agree. If we have, for example, $\gamma = \{x, y\}$ and $\gamma' = \{y, z\}$, the no-disturbance condition implies:

\begin{equation*}
    \sum_{a} p_{\{x,y\}} (a,b) = \sum_{c} p_{\{y,z\}} (b,c).
\end{equation*}
\begin{definition}[No-disturbance set of behaviors]
 The \emph{no-disturbance set} $\mathrm{ND}(\Upsilon)$ is the set of behaviors that satisfy the no-disturbance condition for any intersection of contexts in the scenario $\Upsilon$.
\end{definition}

The defining idea of noncontextuality is the possibility of assigning a single probability distribution to the whole set $\C{O}^\C{M}$, that has marginals in each maximal context consistent with the behavior $B$. We call this probability distribution $p_\C{M} : \C{O}^\C{M} \r [0,1]$ a \ti{global section} for the scenario, that satisfy
\begin{equation}
    p_{\mathcal{M}}\vert_{\gamma}(\mathbf{s}) := \sum_{\mathbf{t} \in \mathcal{O}^\mathcal{M}:\mathbf{t}|_{\gamma}=\mathbf{s}}p_{\mathcal{M}}(\mathbf{t}) = p_\gamma(\mathbf{s}) 
\end{equation}
for all contexts $\gamma \in \mathcal{C}$ of $\Upsilon$ and all $\mathbf{s}\in \mathcal{O}^\gamma$. If this is possible, we say that the behavior $B=\{\{p_\gamma(\mathbf{s})\}_{\mathbf{s}\in \mathcal{O}^\gamma}\}_{\gamma \in \mathcal{C}}$ is KS-noncontextual.

\begin{definition}[Noncontextual set of behaviors]
 The \emph{noncontextual} set $\mathrm{NC}(\Upsilon)$ is the set of all behaviors for which there exists a global section with marginals over the maximal contexts of $\Upsilon$ returning the same distributions of the behavior.
\end{definition}

From the Abramsky-Brandenburger Theorem~\cite{abramsky2011sheaf}, KS-noncontextual behaviors can be equivalently written as

\begin{equation}\label{eq: noncontextual factorizable}
    p_\gamma (\tb{s}) = \sum_\lambda p(\lambda) \prod_{\gamma_i \in \gamma} p_{\gamma_i} (s_i|\lambda),
\end{equation}
where $\lambda \in \Lambda$ are any set of variables, and $p(\lambda)$ is a probability distribution over these variables, i.e., satisfies $\sum_\lambda p(\lambda)=1$ and $0 \leq p(\lambda) \leq 1, \forall \lambda$. Also, $p_{\gamma_i}(s_i|\lambda)$ are so-called response functions, satisfying that for any given $\lambda$ the mapping $p_{\gamma_i}(\cdot|\lambda)$ yields a valid probability distribution over $\mathcal{O}^{\gamma_i}$, for any $\gamma_i \in \gamma$ and also any $\gamma \in \mathcal{C}$. The above description has been historically linked to the existence of a noncontextual hidden-variable model for the behavior. When a behavior $B \equiv \{\{p_\gamma(\mathbf{s})\}_{\mathbf{s}}\}_\gamma$ has this precise form, it is said to be factorizable~\cite{abramsky2011sheaf,barbosa2022continuous}. All noncontextual behaviors satisfy the no-disturbance condition. Different notions of noncontextuality have been proposed for behaviors that do not respect no-disturbance~\cite{kujala2015necessary,amaral2018necessary,amaral2019characterizing,tezzin2020contextualitybydefault}. Any such approach will have certain drawbacks~\cite{tezzin2022impossibility}.

We say that $\mathcal{R} = (\mathcal{U}_{\mathrm{ND}},\mathcal{F}_{\mathrm{NC}},\mathcal{T}_{\mathrm{NC}})$ is a resource theory of KS-contextuality if $\mathcal{U}_{\mathrm{ND}}:= \bigsqcup_{\Upsilon}\mathbf{\mathrm{ND}}(\Upsilon)$ and $\mathcal{F}_{\mathrm{NC}}:=\bigsqcup_{\Upsilon}\mathbf{\mathrm{NC}}(\Upsilon)$, and the (disjoint) union is taken over all possible compatibility scenarios. For each fixed scenario $\Upsilon$ the objects are nondisturbing behaviors and free objects are noncontextual behaviors.  Free operations are those that preserve the noncontextuality of behaviors mapped across different compatibility scenarios. In this work, we focus on a specific set of free operations, that we will characterize later. Other operations can be taken as free, as was done in Refs.~\cite{abramsky2019comonadic,barbosa2023closing,karvonen2021neither}, that considered as free a broader class of operations than the noncontextual wirings. 

\subsubsection{The $n$-cycle noncontextuality inequalities.}

The only infinite family of compatibility scenarios that have been completely characterized is the family of $n$-cycle scenarios~\cite{araujo2013all}, which we will denote as $\Upsilon_n$. In such scenarios, one has $n$ dichotomic measurements  $\mathcal{M} = \{x_i\}_{i=1}^n$ and all maximal contexts are given by $\{x_i,x_{i+1}\}$, where here summation is taken to be module $n$. All vertices and facet-defining inequalities of the polytope $\mathrm{ND}(\Upsilon_n)$ are known~\cite{araujo2013all}. Moreover, the set of all facet-defining inequalities of $\mathrm{NC}(\Upsilon_n)$ is also known, and given by 

\begin{equation}\label{eq: noncontextuality inequalities}
    I_k^{(n)}(B) = \sum^{n-1}_{i=0} a_i \langle x_i x_{i+1} \rangle \leq n-2,
\end{equation}
with each value of $k$ being associated with a particular choice of values for $a_i \in \{-1, -1\}$ such that the number of terms $a_i = -1$ is odd. Above, we are mapping behaviors $B$ from $\Upsilon_n$ to two-point correlation functions via, letting $p_{\{x_i,x_j\}}(ab) \equiv p(ab|x_ix_j)$, 
\begin{align*}
\langle x_i x_{i+1}\rangle &= +p(00|x_ix_{i+1})+p(11|x_ix_{i+1})\\&-p(10|x_ix_{i+1})-p(01|x_ix_{i+1}).
\end{align*}Each label $k$ is therefore associated with a different facet of $\mathrm{NC}(\Upsilon_n)$, for each fixed choice $n$. The sets of points $\{\langle x_i x_j\rangle \}_{i,j}$ are called the \emph{sets of correlations}. In Sec.~\ref{sec: global} we will investigate how noncontextual wirings order such sets, focusing on the scenarios $\Upsilon_n$. For such scenarios, there is a one-to-one correspondence between the set of behaviors and the set of correlations~\cite{araujo2013all}.

Various properties of the polytopes $\mathrm{NC}(\Upsilon_n)$ are known. For every contextual behavior $B$ there is a unique $k$ for which $I_k^{(n)} (B) > n-2$. Ref.~\cite{choudhary2024lifting} investigated liftings of these inequalities to other compatibility scenarios. The values for the maximal quantum violations of the inequalities~\eqref{eq: noncontextuality inequalities} are known, and given by~\cite{araujo2013all}:

\be
    I_{Q}^{\mathrm{max}} = \begin{cases} \frac{3n \cos{(\frac{\pi}{n})} -n}{1 + \cos{(\frac{\pi}{n})}}, & \mbox{for odd } n, \\ n \cos{(\frac{\pi}{n})}, & \mbox{for even } n. \end{cases}
\ee

Behaviors for which the value of the $I$ function is larger than $I_{Q}^{\mathrm{max}}$ will also be of interest. Specifically, those that reach the algebraic maximum $I_k^{(n)}(B) = n$ will be used later. 

\subsection{Resource theory of Kochen-Specker contextuality under wirings}\label{sec: RT of noncontextual wirings}

\subsubsection{Pre- and post-processing}

To define the free operations of our resource theory, we begin by defining certain special operations that take behaviors (our objects) in a given scenario into other behaviors, potentially in another scenario.

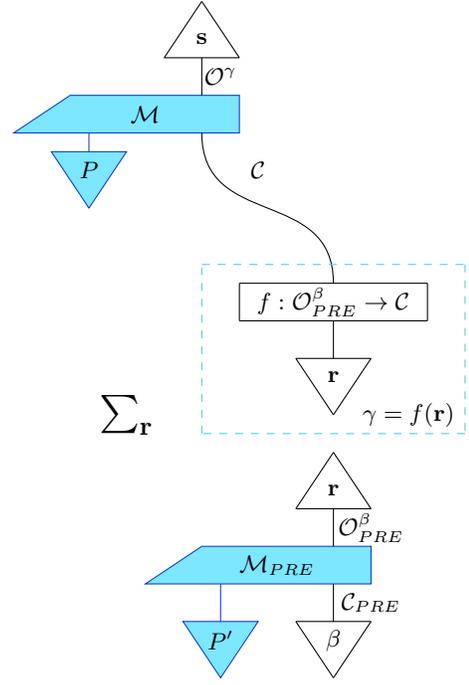
\begin{figure}[h]
    \centering
    \begin{tikzpicture}
	\begin{pgfonlayer}{nodelayer}
		\node [style=none] (0) at (-1.75, 0.25) {};
		\node [style=none] (1) at (0.5, 0.25) {};
		\node [style=none] (2) at (0.5, -0.25) {};
		\node [style=none] (3) at (-2.5, -0.25) {};
		\node [style=none] (4) at (-2, -0.5) {};
		\node [style=none] (5) at (-1, -0.5) {};
		\node [style=none] (6) at (-1.5, -1.25) {};
		\node [style=none] (7) at (-1.5, -1.25) {};
		\node [style=none] (8) at (-1.5, -0.5) {};
		\node [style=none] (9) at (-1.5, -0.25) {};
		\node [style=none] (16) at (0, 0.25) {};
		\node [style=none] (17) at (0, -0.25) {};
		\node [style=none] (18) at (0, 0.75) {};
		\node [style=none] (22) at (-0.5, 0.75) {};
		\node [style=none] (23) at (0.5, 0.75) {};
		\node [style=none] (24) at (0, 1.5) {};
		\node [style=none] (25) at (0, 1) {$\mathbf{s}$};
		\node [style=none] (26) at (-1.5, -0.75) {$P$};
		\node [style=none] (27) at (-0.75, 0) {$\mathcal{M}$};
		\node [style=none] (33) at (0.25, 0.5) {$\mathcal{O}^\gamma$};
		\node [style=none] (34) at (0.75, -0.75) {$\mathcal{C}$};
		\node [style=none] (35) at (0, -5.75) {};
		\node [style=none] (36) at (2.25, -5.75) {};
		\node [style=none] (37) at (2.25, -6.25) {};
		\node [style=none] (38) at (-0.75, -6.25) {};
		\node [style=none] (39) at (-0.25, -6.75) {};
		\node [style=none] (40) at (0.75, -6.75) {};
		\node [style=none] (41) at (0.25, -7.5) {};
		\node [style=none] (42) at (0.25, -7.5) {};
		\node [style=none] (43) at (0.25, -6.75) {};
		\node [style=none] (44) at (0.25, -6.25) {};
		\node [style=none] (45) at (1.75, -5.75) {};
		\node [style=none] (46) at (1.75, -6.25) {};
		\node [style=none] (47) at (1.75, -5.25) {};
		\node [style=none] (48) at (1.75, -6.75) {};
		\node [style=none] (49) at (1.25, -5.25) {};
		\node [style=none] (50) at (2.25, -5.25) {};
		\node [style=none] (51) at (1.75, -4.5) {};
		\node [style=none] (52) at (1.75, -5) {$\mathbf{r}$};
		\node [style=none] (53) at (0.25, -7) {$P'$};
		\node [style=none] (54) at (1, -6) {$\mathcal{M}_{PRE}$};
		\node [style=none] (55) at (1.25, -6.75) {};
		\node [style=none] (56) at (2.25, -6.75) {};
		\node [style=none] (57) at (1.75, -7.5) {};
		\node [style=none] (58) at (1.75, -7) {$\beta$};
		\node [style=none] (59) at (2.25, -5.5) {$\mathcal{O}^\beta_{PRE}$};
		\node [style=none] (60) at (2.25, -6.5) {$\mathcal{C}_{PRE}$};
		\node [style=none] (61) at (1.75, -3.25) {};
		\node [style=none] (62) at (2.25, -3.25) {};
		\node [style=none] (63) at (1.25, -3.25) {};
		\node [style=none] (64) at (1.75, -4) {};
		\node [style=none] (65) at (1.75, -3.5) {$\mathbf{r}$};
		\node [style=none] (66) at (0.5, -2.25) {};
		\node [style=none] (67) at (3, -2.25) {};
		\node [style=none] (68) at (0.5, -2.75) {};
		\node [style=none] (69) at (3, -2.75) {};
		\node [style=none] (70) at (1.75, -2.75) {};
		\node [style=none] (71) at (1.75, -2.25) {};
		\node [style=none] (72) at (1.75, -2.5) {$f:\mathcal{O}_{PRE}^\beta \to \mathcal{C}$ };
		\node [style=none] (73) at (0, -2) {};
		\node [style=none] (74) at (3.5, -2) {};
		\node [style=none] (75) at (0, -4.25) {};
		\node [style=none] (76) at (3.5, -4.25) {};
		\node [style=none] (77) at (2.75, -4) {$\gamma=f(\mathbf{r})$};
		\node [style=none] (78) at (-1, -4) {\Large$\sum_{\mathbf{r}}$};
	\end{pgfonlayer}
	\begin{pgfonlayer}{edgelayer}
		\draw [style=alice] (1.center)
			 to (0.center)
			 to (3.center)
			 to (2.center)
			 to cycle;
		\draw [style=alice] (5.center)
			 to (4.center)
			 to (7.center)
			 to (6.center)
			 to cycle;
		\draw [style=alice] (8.center) to (9.center);
		\draw (16.center) to (18.center);
		\draw (23.center) to (22.center);
		\draw (22.center) to (24.center);
		\draw (24.center) to (23.center);
		\draw [style=alice] (36.center)
			 to (35.center)
			 to (38.center)
			 to (37.center)
			 to cycle;
		\draw [style=alice] (40.center)
			 to (39.center)
			 to (42.center)
			 to (41.center)
			 to cycle;
		\draw [style=alice] (43.center) to (44.center);
		\draw (45.center) to (47.center);
		\draw (46.center) to (48.center);
		\draw (50.center) to (49.center);
		\draw (49.center) to (51.center);
		\draw (51.center) to (50.center);
		\draw (55.center) to (56.center);
		\draw (55.center) to (57.center);
		\draw (57.center) to (56.center);
		\draw (63.center) to (62.center);
		\draw (62.center) to (64.center);
		\draw (64.center) to (63.center);
		\draw (68.center) to (69.center);
		\draw (69.center) to (67.center);
		\draw (67.center) to (66.center);
		\draw (66.center) to (68.center);
		\draw [in=-90, out=90, looseness=0.75] (61.center) to (70.center);
		\draw [in=-90, out=90, looseness=1.25] (71.center) to (17.center);
		\draw [style={alice_dashed}] (75.center) to (73.center);
		\draw [style={alice_dashed}] (73.center) to (74.center);
		\draw [style={alice_dashed}] (74.center) to (76.center);
		\draw [style={alice_dashed}] (76.center) to (75.center);
	\end{pgfonlayer}
\end{tikzpicture}

    \caption{Diagrammatic representation of a pre-processing defined by $p_\beta(\mathbf{s})$ from Eq.~\eqref{eq: pre-processing behavior}. We start with some initial behavior $B_{PRE}$ (lower box) that outputs some set of outcomes. Using these outcomes, and a function $f$ that consistently maps outcomes $\mathbf{r} \in \mathcal{O}^\beta_{PRE}$ of this behavior to contexts $\mathcal{C}$ in another scenario, we input some choice of context $\gamma = f(\mathbf{r}) \in \mathcal{C}$ to perform measurements in $\mathcal{M}$. Note that we average over all possible outcomes $\mathbf{r}$, to get the probability $p_{\beta}(\mathbf{s})$. \label{fig: pre-processing boxes}}
\end{figure}

One of the basic operations is the operation of \ti{pre-processing} a behavior. Assume that we start with a given scenario $\Upsilon = (\mathcal{M},\mathcal{C},\mathcal{O}^{\mathcal{M}})$. We introduce a new scenario $\Upsilon_{PRE} = (\C{M}_{PRE}, \C{C}_{PRE}, \C{O}_{PRE})$\footnote{To ease the notation, we write here and in the remaining of our work $\mathcal{O}_{PRE} \equiv \mathcal{O}_{PRE}^{\mathcal{M}_{PRE}}$, and similarly for the scenarios $\Upsilon_{POS}$.}, with new measurements, contexts and measurement outcomes. Behaviors in this scenario are denoted as  $B_{PRE}$. We associate each output of $B_{PRE}$ with an input of $B$, in such a way that every output configuration of $B_{PRE}$ defines a possible input configuration in $B$, that is, associated with every output $\tb{r} \in \C{O}_{PRE}$, we have a possible context $f(\tb{r}) = \gamma \in \C{C}$.

With this, we define a new behavior $\C{W}_{PRE}(B)$ given by

\begin{equation}\label{eq: pre-processing behavior}
    p_\beta (\tb{s}) = \sum_{\tb{r} \in \mathcal{O}_{PRE}^\beta} p_\beta (\tb{r}) p_{\gamma=f(\tb{r})} (\tb{s}),
\end{equation}
where the sum runs over all outputs $\tb{r}$ associated with the context $\beta$ in $\mathcal{C}_{PRE}$. The probability distribution $p_\beta(\mathbf{r})$ is given by the behavior used for the pre-processing. Figure~\ref{fig: pre-processing boxes} presents a diagrammatic representation of such a behavior. The function $f: \mathcal{O}^\beta \to \mathcal{C}$ translates information of the outputs of a behavior $B_{PRE}$ in information for the inputs of a behavior in $B$, that corresponds to information about the context. Moreover, this function $f$ can be any, insofar as its image is a subset of $\mathcal{C}$. 

Analogously, we can define the \ti{post-processing} of a behavior. We again introduce $\Upsilon_{POS} = (\C{M}_{POS}, \C{C}_{POS}, \C{O}_{POS})$ together with a behavior $B_{POS}$. The same association is made between outputs $\tb{s} \in \C{O}^{\gamma}$ and contexts $g(\tb{s}) = \delta \in \C{M}_{POS}$. The new behavior obtained $\C{W}_{POS} (B)$ is given by

\begin{equation}\label{eq: post-processing behavior}
    p_\gamma (\tb{t}) = \sum_{\tb{s} \in \mathcal{O}^\gamma} p_\gamma (\tb{s}) p_{\delta=g (\tb{s})} (\tb{t}).
\end{equation}

We provide a diagrammatic representation of this behavior in Fig.~\ref{fig: post-processing boxes}. The probability $p_\gamma(\mathbf{s})$ is the probability that upon inputting the classical information of the context $\gamma$ we obtain the joint outcome $\mathbf{s}$. This joint outcome is then consistently mapped, by some function $g:\mathcal{O}^\gamma \to \mathcal{C}_{POS}$ towards some choice of context $\delta=g(\mathbf{s}) \in \mathcal{C}_{POS}$. At this point, $g$ can be any function. Later, when we allow  $g$ to have information about inputs and outputs of $B_{PRE}$, this same function will need to be restricted, for the operation to remain free.

Both pre- and post-processing behaviors introduced were considered in full generality. This operational description is valid for KS-noncontextual, quantum, and post-quantum behaviors. The free operations in a resource theory of KS-contextuality under wirings will be those that are adaptive combinations of pre- and post-processing where the behaviors used to construct the pre- and post-processings are KS-noncontextual. 

\begin{figure}
    \centering
    \begin{tikzpicture}
	\begin{pgfonlayer}{nodelayer}
		\node [style=none] (0) at (0.75, -0.75) {};
		\node [style=none] (1) at (2.75, -0.75) {};
		\node [style=none] (2) at (2.75, -1.25) {};
		\node [style=none] (3) at (0, -1.25) {};
		\node [style=none] (4) at (0.25, -1.75) {};
		\node [style=none] (5) at (1.25, -1.75) {};
		\node [style=none] (6) at (0.75, -2.5) {};
		\node [style=none] (7) at (0.75, -2.5) {};
		\node [style=none] (8) at (0.75, -1.75) {};
		\node [style=none] (9) at (0.75, -1.25) {};
		\node [style=none] (16) at (2.25, -0.75) {};
		\node [style=none] (17) at (2.25, -1.25) {};
		\node [style=none] (18) at (2.25, -0.25) {};
		\node [style=none] (22) at (1.75, -0.25) {};
		\node [style=none] (23) at (2.75, -0.25) {};
		\node [style=none] (24) at (2.25, 0.5) {};
		\node [style=none] (25) at (2.25, 0) {$\mathbf{s}$};
		\node [style=none] (26) at (0.75, -2) {$P$};
		\node [style=none] (27) at (1.5, -1) {$\mathcal{M}$};
		\node [style=none] (33) at (2.5, -0.5) {$\mathcal{O}^\gamma$};
		\node [style=none] (34) at (2.5, -1.5) {$\mathcal{C}$};
		\node [style=none] (35) at (2.75, 5.25) {};
		\node [style=none] (36) at (5, 5.25) {};
		\node [style=none] (37) at (5, 4.75) {};
		\node [style=none] (38) at (2, 4.75) {};
		\node [style=none] (39) at (2.5, 4.5) {};
		\node [style=none] (40) at (3.5, 4.5) {};
		\node [style=none] (41) at (3, 3.75) {};
		\node [style=none] (42) at (3, 3.75) {};
		\node [style=none] (43) at (3, 4.5) {};
		\node [style=none] (44) at (3, 4.75) {};
		\node [style=none] (45) at (4.5, 5.25) {};
		\node [style=none] (46) at (4.5, 4.75) {};
		\node [style=none] (47) at (4.5, 5.75) {};
		\node [style=none] (48) at (2.25, 2.75) {};
		\node [style=none] (49) at (4, 5.75) {};
		\node [style=none] (50) at (5, 5.75) {};
		\node [style=none] (51) at (4.5, 6.5) {};
		\node [style=none] (52) at (4.5, 6) {$\mathbf{t}$};
		\node [style=none] (53) at (3, 4.25) {$P''$};
		\node [style=none] (54) at (3.75, 5) {$\mathcal{M}_{POS}$};
		\node [style=none] (59) at (5, 5.5) {$\mathcal{O}^\delta_{POS}$};
		\node [style=none] (60) at (4.5, 3.75) {$\mathcal{C}_{POS}$};
		\node [style=none] (61) at (2.25, -1.75) {};
		\node [style=none] (62) at (2.75, -1.75) {};
		\node [style=none] (63) at (1.75, -1.75) {};
		\node [style=none] (64) at (2.25, -2.5) {};
		\node [style=none] (65) at (2.25, -2) {$\gamma$};
		\node [style=none] (66) at (2.25, 1.75) {};
		\node [style=none] (67) at (1.75, 1.75) {};
		\node [style=none] (68) at (2.75, 1.75) {};
		\node [style=none] (69) at (2.25, 1) {};
		\node [style=none] (70) at (2.25, 1.5) {$\mathbf{s}$};
		\node [style=none] (71) at (0.75, 2.75) {};
		\node [style=none] (72) at (3.25, 2.75) {};
		\node [style=none] (73) at (0.75, 2.25) {};
		\node [style=none] (74) at (3.25, 2.25) {};
		\node [style=none] (75) at (2, 2.5) {$g: \mathcal{O}^\gamma \to \mathcal{C}_{POS}$};
		\node [style=none] (76) at (2.25, 2.25) {};
		\node [style=none] (77) at (0.25, 3.25) {};
		\node [style=none] (78) at (4, 3.25) {};
		\node [style=none] (79) at (0.25, 0.75) {};
		\node [style=none] (80) at (4, 0.75) {};
		\node [style=none] (81) at (3.25, 1.25) {$\delta=g(\mathbf{s})$};
		\node [style=none] (82) at (-0.5, 1.25) {\Large$\sum_{\mathbf{s}}$};
	\end{pgfonlayer}
	\begin{pgfonlayer}{edgelayer}
		\draw [style=alice] (1.center)
			 to (0.center)
			 to (3.center)
			 to (2.center)
			 to cycle;
		\draw [style=alice] (5.center)
			 to (4.center)
			 to (7.center)
			 to (6.center)
			 to cycle;
		\draw [style=alice] (8.center) to (9.center);
		\draw (16.center) to (18.center);
		\draw (23.center) to (22.center);
		\draw (22.center) to (24.center);
		\draw (24.center) to (23.center);
		\draw [style=alice] (36.center)
			 to (35.center)
			 to (38.center)
			 to (37.center)
			 to cycle;
		\draw [style=alice] (40.center)
			 to (39.center)
			 to (42.center)
			 to (41.center)
			 to cycle;
		\draw [style=alice] (43.center) to (44.center);
		\draw (45.center) to (47.center);
		\draw [in=90, out=-90] (46.center) to (48.center);
		\draw (50.center) to (49.center);
		\draw (49.center) to (51.center);
		\draw (51.center) to (50.center);
		\draw (63.center) to (62.center);
		\draw (62.center) to (64.center);
		\draw (64.center) to (63.center);
		\draw (61.center) to (17.center);
		\draw (67.center) to (68.center);
		\draw (67.center) to (69.center);
		\draw (69.center) to (68.center);
		\draw (73.center) to (71.center);
		\draw (71.center) to (72.center);
		\draw (72.center) to (74.center);
		\draw (74.center) to (73.center);
		\draw [in=90, out=-90, looseness=0.75] (76.center) to (66.center);
		\draw [style={alice_dashed}] (80.center) to (78.center);
		\draw [style={alice_dashed}] (79.center) to (77.center);
		\draw [style={alice_dashed}] (77.center) to (78.center);
		\draw [style={alice_dashed}] (80.center) to (79.center);
	\end{pgfonlayer}
\end{tikzpicture}

    \caption{Diagrammatic representation of $p_\gamma(\mathbf{t})$ from Eq.~\eqref{eq: post-processing behavior}. We start with a behavior $B$ that outputs joint outcomes $\mathbf{s}$. Using these outcomes, and some function $g$ that consistently maps $\mathbf{s} \in \mathcal{O}^\gamma$ to maximal contexts in $ \mathcal{C}_{POS}$ we input some choice $\delta=g(\mathbf{s})$ to perform measurements in $\mathcal{M}_{POS}$. Note that we average over all possible outcomes $\mathbf{s}$ to obtain $p_\gamma(\mathbf{t})$.}
    \label{fig: post-processing boxes}
\end{figure}
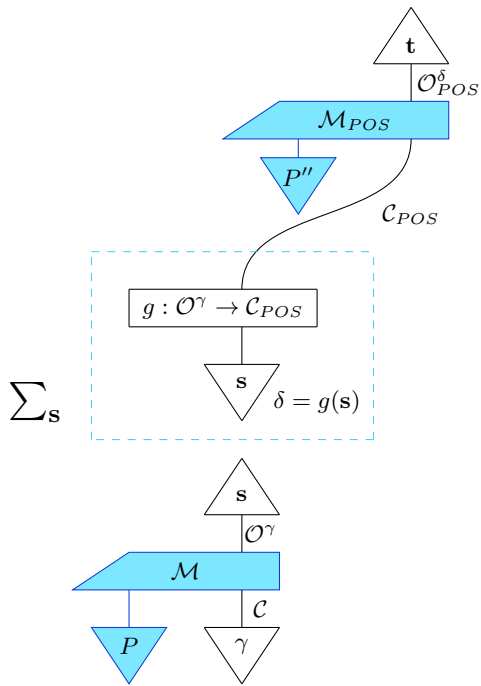

\subsubsection{Noncontextual wirings}

We can now define the free operations we will consider in this work: the  KS-\ti{noncontextual wirings} \cite{amaral2019resource}. We compose an arbitrary behavior $B$ with noncontextual pre-processing behaviors $B_{PRE} = \{\{p_\beta(\mathbf{r})\}_{\mathbf{r} \in \mathcal{O}^\beta}\}_{\beta \in \mathcal{C}} \in \Upsilon_{PRE}$ and a \emph{set} of adaptively chosen post-processing behaviors $\{\{B_{POS}^{(\mathbf{r},\beta)}\}_{\mathbf{r} \in \mathcal{O}^\beta}\}_{\beta \in \mathcal{C}}$. Importantly, we assume that the behaviors $B_{PRE}$ and $B_{POS}^{(\mathbf{r},\beta)}$ are noncontextual, for all $\mathbf{r}$ and $\beta$, otherwise, one could trivially create contextuality, and the operations would not be free. We will also restrict the structure of the adaptively chosen behaviors $B_{POS}^{(\mathbf{r},\beta)}$; they shall have a memory of the classical variables $\mathbf{r}$ or $\beta$ but remain factorizable. Moreover, we will also not allow the classical source of randomness $p(\lambda)$ (a. k. a. probability distribution sampling from some hidden-variable space, see Eq.~\eqref{eq: noncontextual factorizable}) from the KS-noncontextual behavior $B_{PRE}$ during the pre-processing to influence the choices of contexts during the post-processing. This is a choice made in Refs.~\cite{amaral2018noncontextual,amaral2019resource} that we will revisit later in our Results section in light of a discussion present in Ref.~\cite[App. A]{wolfe2020quantifying}.   

We show a diagrammatic representation of such a free operation in Fig.~\ref{fig: noncontextual wirings}. From left to right we increase the complexity of the wirings so that the rightmost operation is the most general noncontextual wiring operation. The colored boxes (in blue, color online) represent general behaviors (KS-noncontextual, quantum, and post-quantum). The boxes that are not colored represent only KS-noncontextual boxes, or classical inputs and outputs. Generic noncontextual wirings are divided into pre-processing and post-processing stages.

Let us describe the adaptivity, sometimes called a classical feed-forward operation,  allowed by the classical side channels. Each classical input $\beta$ and classical output $\mathbf{r}$ describing the behavior $B_{PRE}$ can be copied (white circles from Fig.~\ref{fig: noncontextual wirings}). Once copied, a side channel can transfer this information toward the post-processing stage, and depending on each such value, decide which post-processing operation to perform. This corresponds to a certain choice of behavior $B_{POS}^{(\mathbf{r},\beta)}$ -- conditioned on the classical feedforward of the values $\mathbf{r}$ and $\beta$ -- within a scenario $$\Upsilon_{POS}^{(\mathbf{r},\beta)} = (\mathcal{M}_{POS}^{(\mathbf{r},\beta)},\mathcal{C}_{POS}^{(\mathbf{r},\beta)},\mathcal{O}_{POS}^{(\mathbf{r},\beta)}).$$ To ease notation and description of the free operations (see  Fig.~\ref{fig: noncontextual wirings}) we can construct a large scenario $\Upsilon_{POS}$ by joining all possible $\Upsilon_{POS}^{(\mathbf{r},\beta)}$, 

\begin{equation*}
    \Upsilon_{POS} := \bigsqcup_{\beta \in \mathcal{C}_{PRE}} \bigsqcup_{\mathbf{r} \in \mathcal{O}^\beta} \Upsilon^{(\mathbf{r},\beta)}_{POS}.
\end{equation*}
Each element of $\Upsilon_{POS} = (\mathcal{M}_{POS},\mathcal{C}_{POS},\mathcal{O}_{POS})$ is the disjoint union of those pertaining to $\Upsilon^{(\mathbf{r},\beta)}_{POS}=(\mathcal{M}_{POS}^{(\mathbf{r},\beta)},\mathcal{C}_{POS}^{(\mathbf{r},\beta)},\mathcal{O}_{POS}^{(\mathbf{r},\beta)})$. This operation implies that each behavior $B$ from the scenario $\Upsilon_{POS}$ is the gluing of all behaviors in $\Upsilon^{(\mathbf{r},\beta)}$ but in a way that, for each choice $(\mathbf{r},\beta)$ only one behavior is considered, which has been termed the \emph{controlled choice}~\cite{abramsky2019comonadic} composition between scenarios (and behaviors). In terms of the associated convex polytope formed by $\mathrm{ND}(\Upsilon_{POS})$, the behaviors become $B_{POS} = ((B^{(\mathbf{r},\beta)})_{\mathbf{r} \in \mathcal{O}^\beta})_{\beta \in \mathcal{C}_{PRE}}$, that can be viewed as composing the convex polytopes $\mathrm{ND}(\Upsilon^{(\mathbf{r},\beta)}_{POS})$~\cite{wagner2023using}. 

As for when we have introduced post-processing operations, the wiring will need to adequate the outcomes  $\mathbf{s} \in \mathcal{O}^\gamma$ (related to $B$) to some maximal context $\mathcal{C}_{POS}^{(\mathbf{r},\beta)}$ within the scenario $\Upsilon_{POS}$ (see Fig.~\ref{fig: noncontextual wirings}). To that end, we now assume that there exists a family of functions $g(\cdot |\mathbf{r},\beta):\mathcal{O}^\gamma \to \mathcal{C}_{POS}^{(\mathbf{r},\beta)}$, and denote $\delta = g(\mathbf{s}|\mathbf{r},\beta)$. In words, the functions $g(\cdot|\mathbf{r},\beta)$ guarantee that the post-processing of $\mathbf{s}$ selects only compatible measurements to be jointly performed in the scenario $\Upsilon_{POS}$, depending on the pair $(\mathbf{r},\beta)$. See Fig.~\ref{fig: noncontextual wirings} for a description of how this can be operationally performed. Later we will discuss how general the function $g$ can be without being capable of creating contextuality out of noncontextual boxes.

Once the wirings happen, the outcomes of the final procedure will be some joint outcome $\mathbf{t} \in \mathcal{O}_{POS}^{\delta}$. Each output $\mathbf{t}$ of the post-processing box $B_{POS}$ can therefore be causally influenced by the inputs $\beta$ of $B_{PRE}$, and the outputs $\mathbf{r}$ of $B_{PRE}$~\cite{amaral2019resource}. However, this influence is not fully general, as commented in Ref.~\cite{amaral2018noncontextual}, we must demand that 

\begin{figure*}
    \centering
    \begin{tikzpicture}
	\begin{pgfonlayer}{nodelayer}
		\node [style=none] (0) at (2.25, 0) {};
		\node [style=none] (1) at (3.5, 0) {};
		\node [style=none] (2) at (3.5, -0.5) {};
		\node [style=none] (3) at (1.5, -0.5) {};
		\node [style=none] (9) at (2, -0.5) {};
		\node [style=none] (16) at (3, 0) {};
		\node [style=none] (17) at (3.25, -0.5) {};
		\node [style=none] (18) at (3, 0.5) {};
		\node [style=none] (22) at (2.5, 0.5) {};
		\node [style=none] (23) at (3.5, 0.5) {};
		\node [style=none] (24) at (3, 1.25) {};
		\node [style=none] (25) at (3, 0.75) {$\mathbf{s}$};
		\node [style=none] (27) at (2.75, -0.25) {$\mathcal{M}$};
		\node [style=none] (33) at (3.25, 0.25) {$\mathcal{O}^\gamma$};
		\node [style=none] (34) at (3.5, -0.75) {$\mathcal{C}$};
		\node [style=none] (45) at (4, 3.5) {};
		\node [style=none] (46) at (5.25, 3) {};
		\node [style=none] (47) at (4, 4) {};
		\node [style=none] (49) at (3.5, 4) {};
		\node [style=none] (50) at (4.5, 4) {};
		\node [style=none] (51) at (4, 4.75) {};
		\node [style=none] (52) at (4, 4.25) {$\mathbf{t}$};
		\node [style=none] (54) at (4.25, 3.25) {$\mathcal{M}_{POS}$};
		\node [style=none] (59) at (4.5, 3.75) {$\mathcal{O}^\delta_{POS}$};
		\node [style=none] (60) at (4.75, 2.75) {$\mathcal{C}_{POS}$};
		\node [style=none] (67) at (3.25, 1.25) {};
		\node [style=none] (71) at (3.5, 2) {};
		\node [style=none] (73) at (3.5, 1.5) {};
		\node [style=none] (112) at (3, -4) {};
		\node [style=none] (113) at (3.75, -3.5) {};
		\node [style=none] (114) at (4.25, -4) {};
		\node [style=none] (115) at (3.75, -2.75) {};
		\node [style=none] (116) at (4.25, -4.5) {};
		\node [style=none] (117) at (3.25, -2.75) {};
		\node [style=none] (118) at (4.25, -2.75) {};
		\node [style=none] (119) at (3.75, -2) {};
		\node [style=none] (120) at (3.75, -2.5) {$\mathbf{r}$};
		\node [style=none] (122) at (3.75, -3.75) {$\mathcal{M}_{PRE}$};
		\node [style=none] (123) at (3.75, -4.5) {};
		\node [style=none] (124) at (4.75, -4.5) {};
		\node [style=none] (125) at (4.25, -5.25) {};
		\node [style=none] (126) at (4.25, -4.75) {$\beta$};
		\node [style=none] (127) at (3.25, -3.25) {$\mathcal{O}^\beta_{PRE}$};
		\node [style=none] (128) at (4.75, -4.25) {$\mathcal{C}_{PRE}$};
		\node [style=none] (129) at (3.25, -1) {};
		\node [style=none] (130) at (3.75, -1) {};
		\node [style=none] (131) at (2.75, -1) {};
		\node [style=none] (132) at (3.25, -1.75) {};
		\node [style=none] (133) at (3.25, -1.25) {$\gamma$};
		\node [style=none] (149) at (2.75, 3) {};
		\node [style=none] (150) at (3.25, 3.5) {};
		\node [style=none] (151) at (5.75, 3.5) {};
		\node [style=none] (152) at (5.75, 3) {};
		\node [style=none] (153) at (3.25, 3) {};
		\node [style=none] (157) at (2.25, -4) {};
		\node [style=none] (158) at (2.75, -3.5) {};
		\node [style=none] (159) at (5, -3.5) {};
		\node [style=none] (160) at (5, -4) {};
		\node [style=none] (167) at (3.25, 2.75) {};
		\node [style=none] (168) at (2.75, 2.75) {};
		\node [style=none] (169) at (3.75, 2.75) {};
		\node [style=none] (170) at (3.25, 2) {};
		\node [style=none] (171) at (3.25, 2.5) {$\phi$};
		\node [style=none] (172) at (3, -4.5) {};
		\node [style=none] (173) at (2.5, -4.5) {};
		\node [style=none] (174) at (3.5, -4.5) {};
		\node [style=none] (175) at (3, -5.25) {};
		\node [style=none] (176) at (3, -4.75) {$\lambda$};
		\node [style=none] (177) at (1.75, -4.75) {$\sum_\lambda p(\lambda)$};
		\node [style=none] (178) at (2, 2.5) {$\sum_\phi p(\phi)$};
		\node [style=none] (179) at (1.5, -1) {};
		\node [style=none] (180) at (2.5, -1) {};
		\node [style=none] (181) at (2, -1.75) {};
		\node [style=none] (182) at (2, -1.25) {$P$};
		\node [style=none] (183) at (2, -1) {};
		\node [style=none] (184) at (7.75, 0) {};
		\node [style=none] (185) at (9, 0) {};
		\node [style=none] (186) at (9, -0.5) {};
		\node [style=none] (187) at (7, -0.5) {};
		\node [style=none] (188) at (7.5, -0.5) {};
		\node [style=none] (189) at (8.5, 0) {};
		\node [style=none] (190) at (8.5, -0.5) {};
		\node [style=none] (191) at (8.5, 0.5) {};
		\node [style=none] (192) at (8, 0.5) {};
		\node [style=none] (193) at (9, 0.5) {};
		\node [style=none] (194) at (8.5, 1.25) {};
		\node [style=none] (195) at (8.5, 0.75) {$\mathbf{s}$};
		\node [style=none] (196) at (8.25, -0.25) {$\mathcal{M}$};
		\node [style=none] (197) at (8.75, 0.25) {$\mathcal{O}^\gamma$};
		\node [style=none] (198) at (8.75, -0.75) {$\mathcal{C}$};
		\node [style=none] (199) at (9.5, 3.5) {};
		\node [style=none] (200) at (10.75, 3) {};
		\node [style=none] (201) at (9.5, 4) {};
		\node [style=none] (203) at (9, 4) {};
		\node [style=none] (204) at (10, 4) {};
		\node [style=none] (205) at (9.5, 4.75) {};
		\node [style=none] (206) at (9.5, 4.25) {$\mathbf{t}$};
		\node [style=none] (207) at (9.75, 3.25) {$\mathcal{M}_{POS}$};
		\node [style=none] (208) at (10, 3.75) {$\mathcal{O}^\delta_{POS}$};
		\node [style=none] (209) at (10.25, 2.75) {$\mathcal{C}_{POS}$};
		\node [style=none] (211) at (8.75, 1.25) {};
		\node [style=none] (213) at (9.25, 0.5) {};
		\node [style=none] (221) at (8.5, -4) {};
		\node [style=none] (222) at (9.25, -3.5) {};
		\node [style=none] (223) at (9.75, -4) {};
		\node [style=none] (224) at (9.25, -2.75) {};
		\node [style=none] (225) at (9.75, -4.5) {};
		\node [style=none] (226) at (8.75, -2.75) {};
		\node [style=none] (227) at (9.75, -2.75) {};
		\node [style=none] (228) at (9.25, -2) {};
		\node [style=none] (229) at (9.25, -2.5) {$\mathbf{r}$};
		\node [style=none] (230) at (9.25, -3.75) {$\mathcal{M}_{PRE}$};
		\node [style=none] (231) at (9.25, -4.5) {};
		\node [style=none] (232) at (10.25, -4.5) {};
		\node [style=none] (233) at (9.75, -5.25) {};
		\node [style=none] (234) at (9.75, -4.75) {$\beta$};
		\node [style=none] (235) at (8.75, -3.25) {$\mathcal{O}^\beta_{PRE}$};
		\node [style=none] (236) at (10.25, -4.25) {$\mathcal{C}_{PRE}$};
		\node [style=none] (237) at (8.5, -2) {};
		\node [style=none] (238) at (9, -2) {};
		\node [style=none] (239) at (8, -2) {};
		\node [style=none] (240) at (8.5, -2.75) {};
		\node [style=none] (241) at (8.5, -2.25) {$\mathbf{r}$};
		\node [style=none] (242) at (8.25, -1) {};
		\node [style=none] (243) at (8.75, -1) {};
		\node [style=none] (244) at (8.25, -1.5) {};
		\node [style=none] (245) at (8.75, -1.5) {};
		\node [style=none] (246) at (8.5, -1.5) {};
		\node [style=none] (247) at (8.5, -1) {};
		\node [style=none] (248) at (8.5, -1.25) {$f$};
		\node [style=none] (249) at (8.25, 3) {};
		\node [style=none] (250) at (8.75, 3.5) {};
		\node [style=none] (251) at (11.25, 3.5) {};
		\node [style=none] (252) at (11.25, 3) {};
		\node [style=none] (253) at (8.75, 3) {};
		\node [style=none] (254) at (7.75, -4) {};
		\node [style=none] (255) at (8.25, -3.5) {};
		\node [style=none] (256) at (10.5, -3.5) {};
		\node [style=none] (257) at (10.5, -4) {};
		\node [style=none] (261) at (11.25, -4) {};
		\node [style=none] (263) at (8.75, 2.75) {};
		\node [style=none] (264) at (8.25, 2.75) {};
		\node [style=none] (265) at (9.25, 2.75) {};
		\node [style=none] (266) at (8.75, 2) {};
		\node [style=none] (267) at (8.75, 2.5) {$\phi$};
		\node [style=none] (268) at (8.5, -4.5) {};
		\node [style=none] (269) at (8, -4.5) {};
		\node [style=none] (270) at (9, -4.5) {};
		\node [style=none] (271) at (8.5, -5.25) {};
		\node [style=none] (272) at (8.5, -4.75) {$\lambda$};
		\node [style=none] (273) at (7.25, -4.75) {$\sum_\lambda p(\lambda)$};
		\node [style=none] (274) at (7.5, 2.5) {$\sum_\phi p(\phi)$};
		\node [style=none] (275) at (7, -1) {};
		\node [style=none] (276) at (8, -1) {};
		\node [style=none] (277) at (7.5, -1.75) {};
		\node [style=none] (278) at (7.5, -1.25) {$P$};
		\node [style=none] (279) at (7.5, -1) {};
		\node [style=none] (280) at (13.25, 0) {};
		\node [style=none] (281) at (14.5, 0) {};
		\node [style=none] (282) at (14.5, -0.5) {};
		\node [style=none] (283) at (12.5, -0.5) {};
		\node [style=none] (284) at (13, -0.5) {};
		\node [style=none] (285) at (14, 0) {};
		\node [style=none] (286) at (14, -0.5) {};
		\node [style=none] (287) at (14, 0.5) {};
		\node [style=none] (288) at (13.5, 0.5) {};
		\node [style=none] (289) at (14.5, 0.5) {};
		\node [style=none] (290) at (14, 1.25) {};
		\node [style=none] (291) at (14, 0.75) {$\mathbf{s}$};
		\node [style=none] (292) at (13.75, -0.25) {$\mathcal{M}$};
		\node [style=none] (293) at (14.25, 0.25) {$\mathcal{O}^\gamma$};
		\node [style=none] (294) at (14.25, -0.75) {$\mathcal{C}$};
		\node [style=none] (295) at (15, 3.5) {};
		\node [style=none] (296) at (15.75, 3) {};
		\node [style=none] (297) at (15, 4) {};
		\node [style=none] (298) at (15.75, 2) {};
		\node [style=none] (299) at (14.5, 4) {};
		\node [style=none] (300) at (15.5, 4) {};
		\node [style=none] (301) at (15, 4.75) {};
		\node [style=none] (302) at (15, 4.25) {$\mathbf{t}$};
		\node [style=none] (303) at (15.25, 3.25) {$\mathcal{M}_{POS}$};
		\node [style=none] (304) at (15.5, 3.75) {$\mathcal{O}^\delta_{POS}$};
		\node [style=none] (305) at (15.25, 2.5) {$\mathcal{C}_{POS}$};
		\node [style=none] (306) at (14.75, 1.25) {};
		\node [style=none] (307) at (14.25, 1.25) {};
		\node [style=none] (308) at (15.25, 1.25) {};
		\node [style=none] (309) at (14.75, 0.5) {};
		\node [style=none] (310) at (14.75, 1) {$\mathbf{s}$};
		\node [style=none] (311) at (14.5, 2) {};
		\node [style=none] (312) at (17, 2) {};
		\node [style=none] (313) at (14.5, 1.5) {};
		\node [style=none] (314) at (17, 1.5) {};
		\node [style=none] (315) at (15.75, 1.75) {$g$};
		\node [style=none] (316) at (14.75, 1.5) {};
		\node [style=none] (317) at (14, -4) {};
		\node [style=none] (318) at (14.75, -3.5) {};
		\node [style=none] (319) at (15.25, -4) {};
		\node [style=none] (320) at (14.75, -2.75) {};
		\node [style=none] (321) at (15.25, -4.75) {};
		\node [style=none] (322) at (14.25, -2.75) {};
		\node [style=none] (323) at (15.25, -2.75) {};
		\node [style=none] (324) at (14.75, -2) {};
		\node [style=none] (325) at (14.75, -2.5) {$\mathbf{r}$};
		\node [style=none] (326) at (14.75, -3.75) {$\mathcal{M}_{PRE}$};
		\node [style=none] (327) at (14.75, -4.75) {};
		\node [style=none] (328) at (15.75, -4.75) {};
		\node [style=none] (329) at (15.25, -5.5) {};
		\node [style=none] (330) at (15.25, -5) {$\beta$};
		\node [style=none] (331) at (14.25, -3.25) {$\mathcal{O}^\beta_{PRE}$};
		\node [style=none] (332) at (15.75, -4.25) {$\mathcal{C}_{PRE}$};
		\node [style=none] (333) at (14, -2) {};
		\node [style=none] (334) at (14.5, -2) {};
		\node [style=none] (335) at (13.5, -2) {};
		\node [style=none] (336) at (14, -2.75) {};
		\node [style=none] (337) at (14, -2.25) {$\mathbf{r}$};
		\node [style=none] (338) at (13.75, -1) {};
		\node [style=none] (339) at (14.25, -1) {};
		\node [style=none] (340) at (13.75, -1.5) {};
		\node [style=none] (341) at (14.25, -1.5) {};
		\node [style=none] (342) at (14, -1.5) {};
		\node [style=none] (343) at (14, -1) {};
		\node [style=none] (344) at (14, -1.25) {$f$};
		\node [style=none] (345) at (13.75, 3) {};
		\node [style=none] (346) at (14.25, 3.5) {};
		\node [style=none] (347) at (16.75, 3.5) {};
		\node [style=none] (348) at (16.75, 3) {};
		\node [style=none] (349) at (14.25, 3) {};
		\node [style=none] (350) at (13.25, -4) {};
		\node [style=none] (351) at (13.75, -3.5) {};
		\node [style=none] (352) at (16, -3.5) {};
		\node [style=none] (353) at (16, -4) {};
		\node [style=none] (355) at (16, 1.5) {};
		\node [style=copy] (356) at (15.25, -4.5) {};
		\node [style=none] (357) at (16.75, -4) {};
		\node [style=none] (358) at (16.75, 1.5) {};
		\node [style=none] (359) at (14.25, 2.75) {};
		\node [style=none] (360) at (13.75, 2.75) {};
		\node [style=none] (361) at (14.75, 2.75) {};
		\node [style=none] (362) at (14.25, 2) {};
		\node [style=none] (363) at (14.25, 2.5) {$\phi$};
		\node [style=none] (364) at (14, -4.5) {};
		\node [style=none] (365) at (13.5, -4.5) {};
		\node [style=none] (366) at (14.5, -4.5) {};
		\node [style=none] (367) at (14, -5.25) {};
		\node [style=none] (368) at (14, -4.75) {$\lambda$};
		\node [style=none] (369) at (12.75, -4.75) {$\sum_\lambda p(\lambda)$};
		\node [style=none] (370) at (13, 2.5) {$\sum_\phi p(\phi)$};
		\node [style=none] (371) at (12.5, -1) {};
		\node [style=none] (372) at (13.5, -1) {};
		\node [style=none] (373) at (13, -1.75) {};
		\node [style=none] (374) at (13, -1.25) {$P$};
		\node [style=none] (375) at (13, -1) {};
		\node [style=none] (377) at (5.75, 2.25) {};
		\node [style=none] (378) at (4.75, 2.25) {};
		\node [style=none] (379) at (5.25, 1.5) {};
		\node [style=none] (380) at (5.25, 2.25) {};
		\node [style=none] (381) at (10.75, 2.25) {};
		\node [style=none] (382) at (11.25, 2.25) {};
		\node [style=none] (383) at (10.25, 2.25) {};
		\node [style=none] (384) at (10.75, 1.5) {};
		\node [style=none] (385) at (10.75, 2) {$\delta$};
		\node [style=none] (386) at (5.25, 2) {$\delta$};
		\node [style=copy] (387) at (14.75, -3) {};
		\node [style=none] (388) at (16, -2.5) {};
	\end{pgfonlayer}
	\begin{pgfonlayer}{edgelayer}
		\draw [style=alice] (1.center)
			 to (0.center)
			 to (3.center)
			 to (2.center)
			 to cycle;
		\draw (16.center) to (18.center);
		\draw (23.center) to (22.center);
		\draw (22.center) to (24.center);
		\draw (24.center) to (23.center);
		\draw (45.center) to (47.center);
		\draw (50.center) to (49.center);
		\draw (49.center) to (51.center);
		\draw (51.center) to (50.center);
		\draw (113.center) to (115.center);
		\draw (114.center) to (116.center);
		\draw (118.center) to (117.center);
		\draw (117.center) to (119.center);
		\draw (119.center) to (118.center);
		\draw (123.center) to (124.center);
		\draw (123.center) to (125.center);
		\draw (125.center) to (124.center);
		\draw (131.center) to (130.center);
		\draw (130.center) to (132.center);
		\draw (132.center) to (131.center);
		\draw (152.center) to (149.center);
		\draw (149.center) to (150.center);
		\draw (150.center) to (151.center);
		\draw (151.center) to (152.center);
		\draw (157.center) to (160.center);
		\draw (160.center) to (159.center);
		\draw (159.center) to (158.center);
		\draw (158.center) to (157.center);
		\draw (168.center) to (169.center);
		\draw (168.center) to (170.center);
		\draw (170.center) to (169.center);
		\draw (167.center) to (153.center);
		\draw (173.center) to (174.center);
		\draw (173.center) to (175.center);
		\draw (175.center) to (174.center);
		\draw (172.center) to (112.center);
		\draw [style=alice] (179.center)
			 to (181.center)
			 to (180.center)
			 to cycle;
		\draw [style=alice] (183.center) to (9.center);
		\draw [style=alice] (185.center)
			 to (184.center)
			 to (187.center)
			 to (186.center)
			 to cycle;
		\draw (189.center) to (191.center);
		\draw (193.center) to (192.center);
		\draw (192.center) to (194.center);
		\draw (194.center) to (193.center);
		\draw (199.center) to (201.center);
		\draw (204.center) to (203.center);
		\draw (203.center) to (205.center);
		\draw (205.center) to (204.center);
		\draw (222.center) to (224.center);
		\draw (223.center) to (225.center);
		\draw (227.center) to (226.center);
		\draw (226.center) to (228.center);
		\draw (228.center) to (227.center);
		\draw (231.center) to (232.center);
		\draw (231.center) to (233.center);
		\draw (233.center) to (232.center);
		\draw (239.center) to (238.center);
		\draw (238.center) to (240.center);
		\draw (240.center) to (239.center);
		\draw (244.center) to (245.center);
		\draw (245.center) to (243.center);
		\draw (243.center) to (242.center);
		\draw (242.center) to (244.center);
		\draw [in=-90, out=90, looseness=0.75] (237.center) to (246.center);
		\draw [in=-90, out=90, looseness=0.75] (247.center) to (190.center);
		\draw (252.center) to (249.center);
		\draw (249.center) to (250.center);
		\draw (250.center) to (251.center);
		\draw (251.center) to (252.center);
		\draw (254.center) to (257.center);
		\draw (257.center) to (256.center);
		\draw (256.center) to (255.center);
		\draw (255.center) to (254.center);
		\draw (264.center) to (265.center);
		\draw (264.center) to (266.center);
		\draw (266.center) to (265.center);
		\draw (263.center) to (253.center);
		\draw (269.center) to (270.center);
		\draw (269.center) to (271.center);
		\draw (271.center) to (270.center);
		\draw (268.center) to (221.center);
		\draw [style=alice] (275.center)
			 to (277.center)
			 to (276.center)
			 to cycle;
		\draw [style=alice] (279.center) to (188.center);
		\draw [style=alice] (281.center)
			 to (280.center)
			 to (283.center)
			 to (282.center)
			 to cycle;
		\draw (285.center) to (287.center);
		\draw (289.center) to (288.center);
		\draw (288.center) to (290.center);
		\draw (290.center) to (289.center);
		\draw (295.center) to (297.center);
		\draw [in=90, out=-90] (296.center) to (298.center);
		\draw (300.center) to (299.center);
		\draw (299.center) to (301.center);
		\draw (301.center) to (300.center);
		\draw (307.center) to (308.center);
		\draw (307.center) to (309.center);
		\draw (309.center) to (308.center);
		\draw (313.center) to (311.center);
		\draw (311.center) to (312.center);
		\draw (312.center) to (314.center);
		\draw (314.center) to (313.center);
		\draw [in=90, out=-90, looseness=0.75] (316.center) to (306.center);
		\draw (318.center) to (320.center);
		\draw (319.center) to (321.center);
		\draw (323.center) to (322.center);
		\draw (322.center) to (324.center);
		\draw (324.center) to (323.center);
		\draw (327.center) to (328.center);
		\draw (327.center) to (329.center);
		\draw (329.center) to (328.center);
		\draw (335.center) to (334.center);
		\draw (334.center) to (336.center);
		\draw (336.center) to (335.center);
		\draw (340.center) to (341.center);
		\draw (341.center) to (339.center);
		\draw (339.center) to (338.center);
		\draw (338.center) to (340.center);
		\draw [in=-90, out=90, looseness=0.75] (333.center) to (342.center);
		\draw [in=-90, out=90, looseness=0.75] (343.center) to (286.center);
		\draw (348.center) to (345.center);
		\draw (345.center) to (346.center);
		\draw (346.center) to (347.center);
		\draw (347.center) to (348.center);
		\draw (350.center) to (353.center);
		\draw (353.center) to (352.center);
		\draw (352.center) to (351.center);
		\draw (351.center) to (350.center);
		\draw [in=-90, out=0] (356) to (357.center);
		\draw (357.center) to (358.center);
		\draw (360.center) to (361.center);
		\draw (360.center) to (362.center);
		\draw (362.center) to (361.center);
		\draw (359.center) to (349.center);
		\draw (365.center) to (366.center);
		\draw (365.center) to (367.center);
		\draw (367.center) to (366.center);
		\draw (364.center) to (317.center);
		\draw [style=alice] (371.center)
			 to (373.center)
			 to (372.center)
			 to cycle;
		\draw [style=alice] (375.center) to (284.center);
		\draw (129.center) to (17.center);
		\draw (378.center) to (377.center);
		\draw (377.center) to (379.center);
		\draw (379.center) to (378.center);
		\draw (380.center) to (46.center);
		\draw (383.center) to (382.center);
		\draw (382.center) to (384.center);
		\draw (384.center) to (383.center);
		\draw (381.center) to (200.center);
		\draw [in=-90, out=0, looseness=1.25] (387) to (388.center);
		\draw (388.center) to (355.center);
	\end{pgfonlayer}
\end{tikzpicture}

    \caption{Forming a noncontextual wiring. From left to right we increase wiring `complexity', towards a completely general noncontextual wiring. (Left) We start with three independent statistics, in three independent scenarios. Two are noncontextual behaviors and one is a general behavior. (Middle) For each result $\mathbf{r}$, occurring with probability $p_\beta(\mathbf{r}) \in B_{PRE}$ a noncontextual behavior, we use a function $f$ to output a valid context $\gamma = f(\mathbf{r})$ in $\mathcal{C}$. (Right) Similarly to the pre-processing, for each outcome $\mathbf{s}$ a function $g$ outputs a valid context $\delta = g(\mathbf{s})$ in $\mathcal{C}_{POS} \equiv \mathcal{C}_{POS}^{(\mathbf{r},\beta)}$ that can now be adaptively chosen. Depending on the joint outcomes $\mathbf{r}$ of the pre-selected behavior and maximal contexts $\beta$ of the pre-selected scenario we may choose any noncontextual behavior $B_{POS}^{(\mathbf{r},\beta)}$ within a scenario $\Upsilon^{(\mathbf{r},\beta)}_{POS}$ with the condition that knowledge of $\mathbf{r}$ and $\beta$ preserves the factorizability of the noncontextual behavior, as described by Eq.~\eqref{eq: adaptively respecting factorizability}. For clarity of the presentation, we have omitted summations  $\sum_{\mathbf{r},\mathbf{s}}$. As before, the causal order of the operations goes from bottom (input context $\beta$) to top (joint output $\mathbf{t}$). }
    \label{fig: noncontextual wirings}
\end{figure*}
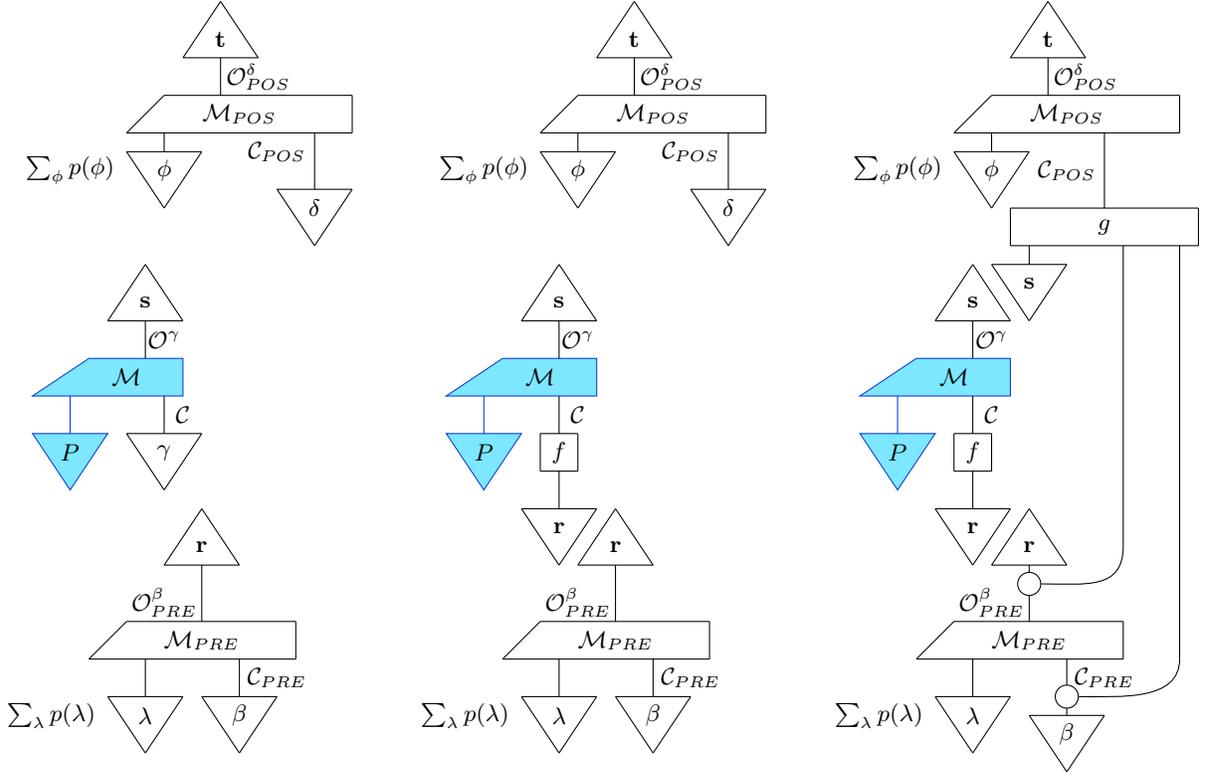

\begin{equation}\label{eq: adaptively respecting factorizability}
    p_{\delta} (\tb{t} |\tb{s}, \tb{r},\beta) =  \sum_\phi p(\phi) \prod_{\delta_i \in \delta = g(\mathbf{s}|\mathbf{r},\beta)} p_{\delta_i} (t_i | \phi).
\end{equation}
The above restriction translates, in words, to the fact that $B_{POS}$ remains a KS-noncontextual behavior, even with the adaptivity from the classical inputs $\beta$ and $\mathbf{r}$. The behavior $B_{POS}$ must be described by a factorizable hidden-variable model, and the function $g$ cannot use the information of $\mathbf{r}$ and $\beta$ to change that. Recall that a noncontextual factorizable model is of the form given by Eq.~\eqref{eq: noncontextual factorizable}, where the factorizability corresponds to the fact that the total probability distribution of a context $\gamma$ is described by a source of classical randomness $p(\lambda)$ and the product of independent distributions $p_{\gamma_i}(s_i|\lambda)$ for each element $\gamma_i$ of the context $\gamma$, and $s_i$ the associated outcome of measurement $\gamma_i$. Eq.~\eqref{eq: adaptively respecting factorizability} requires that the same structure remains valid, for every $\delta_i \in \delta = g(\mathbf{s}|\mathbf{r},\beta)$ and outcomes $t_i$ of this measurement $\delta_i$. Even though we can now use the feedforward classical information to choose different scenarios, and consequently behaviors therein, this structure remains valid.

With this construction, we get the final scenario once  wirings take place $$\Upsilon_{W} := (\C{M}_{PRE}, \C{C}_{PRE}, \C{O}_{POS}),$$ where behaviors $\C{W}(B) \in \Upsilon_{W}$, are the set of all probabilities of the form

\begin{equation}\label{eq: Noncontextual wiring behavior}
    p_\beta (\tb{t}) = \sum_{\tb{r} \in \mathcal{O}^\beta,\tb{s}\in \mathcal{O}^\gamma} p_\beta (\tb{r}) p_{\gamma=f(\tb{r})} \left(\tb{s}\right) p_{\delta=g(\mathbf{s}|\mathbf{r},\beta)}\left(\tb{t}\right),
\end{equation}
where $\beta \in \mathcal{C}_{PRE}$ and $\mathbf{t} \in \mathcal{O}_{POS}^{(\mathbf{r},\beta)}$ for some choice of pre-processing input $\beta$ and the successful pre-processing output $\mathbf{r}$. Importantly, $p_\beta(\mathbf{r})$ has the form given by Eq.~\eqref{eq: noncontextual factorizable} and $p_{\delta=g(\mathbf{s}|\mathbf{r},\beta)}(\mathbf{t})$ has the form given by Eq.~\eqref{eq: adaptively respecting factorizability}. The behavior $B=\{p_\gamma(\mathbf{s})\}$ can be any non-disturbing behavior. For future reference, we collect what we have discussed so far in the following definition:

\begin{definition}[Noncontextual wirings]\label{def: noncontextual wirings}
    Let $\Upsilon = (\mathcal{M},\mathcal{C},\mathcal{O}^{\mathcal{M}})$ be a compatibility scenario. Let also $\Upsilon_{PRE} = (\mathcal{M}_{PRE},\mathcal{C}_{PRE},\mathcal{O}_{PRE})$ be any (pre-processing) compatibility scenario, and for all $(\mathbf{r},\beta) \in \mathcal{O}_{PRE}^\beta \times \mathcal{C}_{PRE}$, let  $\Upsilon_{POS}^{(\mathbf{r},\beta)} = (\mathcal{M}_{POS}^{(\mathbf{r},\beta)},\mathcal{C}_{POS}^{(\mathbf{r},\beta)},\mathcal{O}_{POS}^{(\mathbf{r},\beta)})$ be also generic (collections of post-processing) compatibility scenarios. A noncontextual wiring is a mapping $\mathcal{W}: \Upsilon \to \Upsilon_W$ defined via Eq.~\eqref{eq: Noncontextual wiring behavior}, satisfying: 
    \begin{enumerate}
        \item $\{p_\beta(\mathbf{r})\}$ is a noncontextual behavior in $\Upsilon_{PRE}$.
        \item \{$p_{\delta = g(\mathbf{s}|\mathbf{r},\beta)}(\mathbf{t})\}$ is a noncontextual behavior in $\Upsilon_{POS}^{(\mathbf{r},\beta)}$, for all $\mathbf{r},\beta$.
        \item The function $g(\cdot|\mathbf{r},\beta): \mathcal{O}^\gamma \to \mathcal{C}_{POS}^{(\mathbf{r},\beta)}$, for each $\gamma \in \mathcal{C}$, must satisfy $g(\mathbf{s}|\mathbf{r},\beta) = \cup_i g_i(s_i|\mathbf{r},\beta)$, i.e., each outcome $s_i \in \mathbf{s}$ of a measurement $\gamma_i \in \gamma$ is associated independently to a given measurement in $\mathcal{M}_{POS}^{(\mathbf{r},\beta)}$ by some function $g_i(\cdot |\mathbf{r},\beta): \mathcal{O}^{\gamma_i} \to \mathcal{M}_{POS}^{(\mathbf{r},\beta))}$ such that $g(\mathbf{s}|\mathbf{r},\beta) \in \mathcal{C}_{POS}^{(\mathbf{r},\beta)}$.
    \end{enumerate}
    Where $\Gamma_W = (\mathcal{M}_{PRE},\mathcal{C}_{PRE},\mathcal{O}_{POS})$, with $\mathcal{O}_{POS}$ given by the disjoint union of all $\mathcal{O}_{POS}^{(\mathbf{r},\beta)}$.
\end{definition}

This particular class of operations will be denoted as $\mathrm{NCW}$. Definition~\ref{def: noncontextual wirings} is involved, yet we hope Fig.~\ref{fig: noncontextual wirings} can provide an intuitive diagrammatic representation. The restriction on the possible choices of $g$ is important so that choices of contexts during the post-processing stage cannot be determined by non-local influences (as will be discussed in Sec.~\ref{sec: NCW vs LOSR}, this implies that in a bipartite Bell scenario $g$ `breaks' into two boxes $g_{\mathrm A}$ and $g_{\mathrm B}$ for each party, see Figs.~\ref{fig: LOSR^circ} and~\ref{fig: LOSR^bullet}).   As was shown in Ref.~\cite{amaral2019resource}, if $B \in \mathrm{ND}(\Upsilon)$ then $\mathcal{W}(B) \in \mathrm{ND}(\Upsilon_W)$. More importantly, operations in $\mathrm{NCW}$ preserve KS-noncontextuality, i.e.,  if $B \in \mathrm{NC}(\Upsilon)$ then $\mathcal{W}(B) \in \mathrm{NC}(\Upsilon_W)$. The resource theory $\mathcal{R}_{\mathrm{NCW}} = (\mathcal{U}_{\mathrm{ND}},\mathcal{F}_{\mathrm{NC}},\mathrm{NCW})$ is the resource theory of KS-noncontextuality where free operations are taken to be noncontextual wirings.

We call the \emph{type} of a box $B$ to be the scenario $\Upsilon = (\C{M}, \C{C}, \C{O}^{\C{M}})$ with respect to which the box $B$ is defined. This terminology is introduced because the same scenario has many different behaviors (possible non-disturbing boxes) all having the same `type' (i.e., all having the same compatibility structure for maximal contexts, number of measurements, and number of joint outcomes per context). Now, as our NCW operations take boxes in $\Upsilon = (\C{M}, \C{C}, \C{O}^{\C{M}})$ to boxes in $\Upsilon_{W} = (\C{M}_{PRE}, \C{C}_{PRE}, \C{O}_{POS})$, we define the {\ti{type of an operation}} $\C{W}$ as the pair of input/output types of scenarios $\Upsilon \rightarrow \Upsilon_W$. The set of all noncontextual wiring operations of type $\Upsilon_{\mathrm{in}} \rightarrow \Upsilon_{\mathrm{out}}$ is denoted by $\mathrm{NCW}(\Upsilon_{\mathrm{in}} \rightarrow \Upsilon_{\mathrm{out}})$. As mentioned above, some of our results will be type-specific, meaning results concerning type-dependent operations~\footnote{It is useful, for clarity and consistency, to distinguish our usage of `type' from others encountered in the literature. For example, Refs.~\cite{schmid2021unscrambling} use the word `type' to denote the systems considered (classical, quantum or GPT) within a given process theory. Hence, a type labels a different system. Ref.~\cite{barbosa2023closing} considers a `type' to be the same as considered here, i.e., the compatibility scenario (that they term measurement scenario), and equivalently the type of a simulation to be the pair of scenarios in the domain and codomain of a free operation. Refs.~\cite{rosset2020type} considered `type' to denote input/output structure of choices of systems of a given process, e.g. a generic bipartite channel has the type $\mathsf{QQ} \to \mathsf{QQ}$ because it has inputs and outputs two quantum systems, while a bipartite shared randomness has the type $\mathsf{II} \to \mathsf{CC}$ that inputs the trivial system and outputs two classical systems. Finally, Ref.~\cite{wolfe2020quantifying} used `type' to denote the prescription of the number of inputs/outputs in a bipartite Bell scenario, for example, they write $\left(\begin{matrix}
    2 & 2 \\ 2 & 2
\end{matrix}\right)$ corresponding to two labels as inputs/outputs per `wing' in a Bell causal structure. The whole matrix is called the `type' of the specific common cause scenario.}.

One elementary local comparability result is the following one:
\begin{lemma}\label{Lemma: noncontextual wiring always returning the same box}
    For every $B \in \mathrm{ND}(\Upsilon)\setminus \mathrm{NC}(\Upsilon)$ and every $B_{NC} \in \mathrm{NC}(\Upsilon)$ there exists some noncontextual wiring $\mathcal{W}$ such that $B_{NC} = \mathcal{W}(B)$, i.e., $B \to B_{NC}$.
\end{lemma}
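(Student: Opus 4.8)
The plan is to build a noncontextual wiring that \emph{discards} the input box $B$ and re-prepares $B_{NC}$ directly from its (free) pre- and post-processing resources; in the language following Lemma~\ref{Lemma: free equivalent}, this is the ``discard-and-prepare'' operation witnessing $B\to B_{NC}$ for any free target $B_{NC}$. Concretely, take $\Upsilon_{PRE}$ to have the same measurements and maximal contexts as $\Upsilon$ but a single dummy outcome $\ast$ for every measurement, and let the pre-processing behavior be the deterministic one, $p_\beta(\ast)=1$ for all $\beta\in\mathcal{C}_{PRE}=\mathcal{C}$ (trivially noncontextual). Fix any context $\gamma^\star\in\mathcal{C}$ and let $f$ be the constant map $\ast\mapsto\gamma^\star$, so the middle box $B$ is always queried on $\gamma^\star$. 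For each $\beta\in\mathcal{C}$, let $\Upsilon_{POS}^{(\ast,\beta)}$ be the single-context scenario with one measurement $m_\beta$ whose outcome set is $\mathcal{O}^\beta$, let $g(\cdot\,|\,\ast,\beta):\mathcal{O}^{\gamma^\star}\to\mathcal{C}_{POS}^{(\ast,\beta)}=\{\{m_\beta\}\}$ be the (necessarily constant) map onto $\{m_\beta\}$, and let the associated post-processing behavior be the single distribution $(B_{NC})_\beta$ over $\mathcal{O}^\beta$. A behavior on a single-context scenario is automatically noncontextual and factorizable.

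With these choices, Eq.~\eqref{eq: Noncontextual wiring behavior} collapses:
\begin{equation*}
    p_\beta(\mathbf{t})=\sum_{\mathbf{s}\in\mathcal{O}^{\gamma^\star}}p_\beta(\ast)\,p_{\gamma^\star}(\mathbf{s})\,(B_{NC})_\beta(\mathbf{t})=(B_{NC})_\beta(\mathbf{t})\sum_{\mathbf{s}\in\mathcal{O}^{\gamma^\star}}p_{\gamma^\star}(\mathbf{s}),
\end{equation*}
since $(B_{NC})_\beta(\mathbf{t})$ depends on neither $\mathbf{s}$ nor the (unique) pre-processing outcome. Using $p_\beta(\ast)=1$ and the normalization $\sum_{\mathbf{s}\in\mathcal{O}^{\gamma^\star}}p_{\gamma^\star}(\mathbf{s})=1$ (valid because $B$ is a behavior), we obtain $p_\beta(\mathbf{t})=(B_{NC})_\beta(\mathbf{t})$ for every $\beta$ and $\mathbf{t}$, i.e.\ $\mathcal{W}(B)=B_{NC}$. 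Note that the argument uses only normalization of the contextwise distributions of $B$ --- neither its no-disturbance nor its contextuality plays any role, which is why the statement in fact holds for \emph{every} behavior on $\Upsilon$, a fortiori for every $B\in\mathrm{ND}(\Upsilon)\setminus\mathrm{NC}(\Upsilon)$.

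It remains to verify that $\mathcal{W}$ is a bona fide element of $\mathrm{NCW}(\Upsilon\to\Upsilon)$, i.e.\ the three conditions of Definition~\ref{def: noncontextual wirings}: (1) $\{p_\beta(\ast)\}$ is noncontextual --- immediate; (2) each $\{p_{\delta=g(\mathbf{s}|\ast,\beta)}(\mathbf{t})\}=(B_{NC})_\beta$ is noncontextual in $\Upsilon_{POS}^{(\ast,\beta)}$ and obeys Eq.~\eqref{eq: adaptively respecting factorizability}, which holds because a single-context behavior is factorizable and the chosen model carries no information about $\mathbf{s},\ast,\beta$; and (3) $g(\cdot\,|\,\ast,\beta)$ factorizes as $\cup_i g_i$, which here is trivial since the target context $\{m_\beta\}$ is a singleton, so each component map $g_i$ is simply constant at $m_\beta$. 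One then checks that the controlled-choice gluing of the $\Upsilon_{POS}^{(\ast,\beta)}$ yields the output type $\Upsilon_W$ whose contextwise outcome set over $\beta$ is $\mathcal{O}^\beta$, i.e.\ $\Upsilon_W=\Upsilon$, so that $\mathcal{W}(B)=B_{NC}$ is an identity of behaviors in $\Upsilon$. The only place requiring any care is this type-matching bookkeeping together with condition~(3); the probabilistic content is the one-line normalization collapse above, and I do not anticipate a genuine obstacle.
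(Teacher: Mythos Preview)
Your proposal is correct and follows the same ``discard-and-prepare'' idea as the paper's one-line proof, which simply says to choose a noncontextual wiring that is merely a post-processing discarding the outcomes of $B$ and returning $B_{NC}$. You spell out explicitly the pre-processing (trivial, to fix the type), the constant choice of middle context $\gamma^\star$, the single-measurement post-processing scenarios $\Upsilon_{POS}^{(\ast,\beta)}$, and the verification of the three conditions of Definition~\ref{def: noncontextual wirings}; this level of detail is well beyond what the paper provides, but the underlying construction is the same.
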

\begin{proof}
    We can choose a noncontextual wiring that is merely a post-processing where we discard the outcomes from $B$ and for which the post-processing behavior is given by $B_{NC}$.
\end{proof}

\subsubsection{Deterministic noncontextual wirings}\label{sec: noncontextual deterministic}

Now we look for a special class of operations, the set of \ti{deterministic noncontextual wirings}. Consider a pre-processing behavior

\be
    p_{\beta} (\tb{r}) =  \sum_\phi p(\phi) \prod_{i} p_{\beta_i} (r_i | \phi).
\ee
It is well known that, without loss of generality, we can assume $p_{\beta_i}$ to take values in $\{0,1\}$ for all $i$~\cite{barbosa2022continuous}. This is possible because we can allow any source of randomness to be due to the source $p(\phi)$. In case such values are deterministically assigned, no randomness $p(\phi)$ is considered, and the behavior is said to be deterministic and takes the form

\begin{equation}\label{eq: noncontextual deterministic}
    d_\beta (\tb{r}) = \prod_i d_{\beta_i} (r_i).
\end{equation}

Above, $d_{\beta_i}(r_i) \in \{0,1\}$, for all $i$. Hence, in a deterministic behavior, each context selects a unique output string $\tb{r}$.  This allows us to formulate a definition for a deterministic noncontextual wiring operation. 

\begin{definition}[Deterministic noncontextual wirings]
 We say that a noncontextual wiring operation is \emph{deterministic} when both pre- and post-processings are constructed using noncontextual deterministic behaviors. 
\end{definition}

\section{Results}

\subsection{Restriction over Bell scenarios}\label{sec: NCW vs LOSR}

We start our results section by analyzing which set of free transformations noncontextual wirings become when we restrict noncontextual scenarios to be those where the compatibility structure is isomorphic to some Bell scenario~\cite{brunner2014bell}. Recall that, albeit different (both conceptually and operationally) notions of nonclassicality,  every Bell scenario can be viewed as a compatibility scenario and every behavior in a Bell scenario can be viewed as a behavior in that compatibility scenario~\cite{abramsky2011sheaf,abramsky2016possibilities,abramsky2012cohomology}. Formally, the no-disturbance condition is mathematically equivalent to the no-signaling condition in Bell scenarios. Moreover, when we view a Bell scenario as a compatibility scenario the noncontextual polytope in that scenario coincides with the local polytope~\cite{amaral2017geometrical,choudhary2024lifting}. The behaviors described by some factorizable hidden-variable model as described by Eq.~\eqref{eq: noncontextual factorizable} become equivalent to what is known as a \emph{local} hidden variable model.

\subsubsection{Nonconvex set of local operations and shared randomness}

Various notions of free operations have been introduced to investigate nonclassicality within Bell scenarios, captured by violations of Bell inequalities. In particular, for the case of \emph{local operations and shared randomness}, the situation is rather confusing (as pointed out by Ref.~\cite{wolfe2020quantifying}) as different proposals have this same name. Let us start by defining $\mathrm{LOSR}^\circ$ as described in Ref.~\cite{gallego2017nonlocality}. For our purposes, it will suffice to discuss bipartite Bell scenarios only. Consider a scenario where Alice can perform $X$ measurements with outcomes $A$ and Bob can perform $Y$ measurements with outcomes $B$. The compatibility scenario associated with this Bell scenario is described by one where the set of measurements is $X \cup Y$, the set of maximal contexts are the pairs $\{x,y\}$ and finally, the set of all outcomes $\mathcal{O}^{\mathcal{M}}$ is such that all joint outcomes satisfy  $\mathcal{O}^{\{x,y\}} = A \times B$. The behaviors in this scenario are given by $\{p(ab|xy)\}_{a,b,x,y}$. 

Fig.~\ref{fig: LOSR^circ} depicts a transformation within the set $\mathrm{LOSR}^\circ$ acting on a bipartite Bell scenario. An initial common source of randomness, denoted by the preparation of some state $\lambda$ with probability $p(\lambda)$, is shared between Alice and Bob, each of which perform some set of experiments having classical inputs $x',y'$ (choices of measurements) and classical outputs $x,y$ (outcomes of their measurements). 

Each party has an associated side channel (e.g., a classical memory) that copies the classical inputs and outputs and store this information to be used later during a post-processing stage. Based upon the measurements $a$ at this intermediate stage, Alice then uses this information, and the information of $x'$ and $x$ previously stored, to choose some other post-processing measurements using some function  $g_{\mathrm{A}}(a|x',x)$, and obtain a final outcome $a'$, similarly for Bob. In this case, we describe these operations by some mapping $B = \{p(ab|xy)\} \mapsto \{p(a'b'|x'y')\} = B'$ where $B' = \mathcal{L}(B)$ for some $\mathcal{L} \in \mathrm{LOSR}^\circ$. Writing in full detail, these operations must have the form

\begin{align}    &p(a'b'|x'y')=\sum_{a,b,x,y}p(xy|x'y')p(ab|x,y)\times \label{eq: LOSR^circ}\\&\times  p(a'b'|g_{\mathrm{A}}(a|x,x'),g_{\mathrm{B}}(b|y,y')),\nonumber
\end{align}
where above both the pre-processing distribution  $p(xy|x'y')$ and the post-processing distribution $p(a'b'|g_{\mathrm{A}}(a|x,x'),g_{\mathrm{B}}(b|y,y'))$ are described by some local hidden-variable model. Also, each distribution has its own common source of classical randomness (see Fig.~\ref{fig: LOSR^circ}).

\begin{figure}
    \centering
    \begin{tikzpicture}
	\begin{pgfonlayer}{nodelayer}
		\node [style=none] (0) at (-1.25, 1.5) {};
		\node [style=none] (1) at (-2.25, 1.5) {};
		\node [style=none] (2) at (-2.25, 1) {};
		\node [style=none] (3) at (-0.75, 1) {};
		\node [style=none] (4) at (-1.75, 1.5) {};
		\node [style=none] (5) at (-1.75, 1) {};
		\node [style=none] (6) at (-1.75, 1.75) {};
		\node [style=none] (7) at (-2.25, 1.75) {};
		\node [style=none] (8) at (-1.25, 1.75) {};
		\node [style=none] (9) at (-1.75, 2.5) {};
		\node [style=none] (14) at (0.75, 1.5) {};
		\node [style=none] (15) at (1.75, 1.5) {};
		\node [style=none] (16) at (1.75, 1) {};
		\node [style=none] (17) at (0.25, 1) {};
		\node [style=none] (18) at (0.75, 1) {};
		\node [style=none] (19) at (1.25, 1.5) {};
		\node [style=none] (20) at (1.25, 1) {};
		\node [style=none] (21) at (1.25, 1.75) {};
		\node [style=none] (22) at (0.75, 1.75) {};
		\node [style=none] (23) at (1.75, 1.75) {};
		\node [style=none] (24) at (1.25, 2.5) {};
		\node [style=none] (25) at (1.25, 2) {$b$};
		\node [style=none] (30) at (-0.75, 0.25) {};
		\node [style=none] (31) at (0.25, 0.25) {};
		\node [style=none] (32) at (-0.25, -0.5) {};
		\node [style=none] (33) at (-0.25, 0) {$\Psi$};
		\node [style=none] (34) at (0, 0.25) {};
		\node [style=none] (35) at (-0.5, 0.25) {};
		\node [style=none] (36) at (-1.25, 1) {};
		\node [style=none] (37) at (-1.75, 2) {$a$};
		\node [style=none] (38) at (2.25, 2.5) {};
		\node [style=none] (39) at (1.75, 2.5) {};
		\node [style=none] (40) at (2.75, 2.5) {};
		\node [style=none] (41) at (2.25, 1.75) {};
		\node [style=none] (42) at (2.25, 2.25) {$b$};
		\node [style=none] (43) at (-2.75, 2.5) {};
		\node [style=none] (44) at (-3.25, 2.5) {};
		\node [style=none] (45) at (-2.25, 2.5) {};
		\node [style=none] (46) at (-2.75, 1.75) {};
		\node [style=none] (47) at (-2.75, 2.25) {$a$};
		\node [style=none] (48) at (2, 0) {};
		\node [style=none] (49) at (1.5, 0) {};
		\node [style=none] (50) at (2.5, 0) {};
		\node [style=none] (51) at (2, 0.75) {};
		\node [style=none] (52) at (2, 0.25) {$y$};
		\node [style=none] (53) at (-2.5, 0) {};
		\node [style=none] (54) at (-3, 0) {};
		\node [style=none] (55) at (-2, 0) {};
		\node [style=none] (56) at (-2.5, 0.75) {};
		\node [style=none] (57) at (-2.5, 0.25) {$x$};
		\node [style=none] (58) at (-0.5, -1.75) {};
		\node [style=none] (59) at (-1, -1.75) {};
		\node [style=none] (60) at (0.5, -1.75) {};
		\node [style=none] (61) at (-0.25, -2.5) {};
		\node [style=none] (62) at (-0.25, -2) {$\lambda$};
		\node [style=none] (63) at (-3, -0.5) {};
		\node [style=none] (64) at (-2, -0.5) {};
		\node [style=none] (65) at (-3, -1) {};
		\node [style=none] (66) at (-2, -1) {};
		\node [style=none] (67) at (-1.5, -1) {};
		\node [style=none] (68) at (2.5, -1) {};
		\node [style=none] (69) at (1, -1) {};
		\node [style=none] (70) at (2.5, -0.5) {};
		\node [style=none] (71) at (2.5, -0.5) {};
		\node [style=none] (72) at (1.5, -0.5) {};
		\node [style=none] (73) at (0, -1.75) {};
		\node [style=none] (74) at (2, -1) {};
		\node [style=none] (75) at (1.5, -1) {};
		\node [style=none] (76) at (-2.5, -1) {};
		\node [style=none] (77) at (-2.5, -1.75) {};
		\node [style=none] (78) at (-3, -1.75) {};
		\node [style=none] (79) at (-2, -1.75) {};
		\node [style=none] (80) at (-2.5, -2.5) {};
		\node [style=none] (81) at (-2.5, -2) {$x'$};
		\node [style=none] (82) at (2, -1.75) {};
		\node [style=none] (83) at (1.5, -1.75) {};
		\node [style=none] (84) at (2.5, -1.75) {};
		\node [style=none] (85) at (2, -2.5) {};
		\node [style=none] (86) at (2, -2) {$y'$};
		\node [style=none] (87) at (1.75, 4.75) {};
		\node [style=none] (88) at (1.25, 4.75) {};
		\node [style=none] (89) at (2.25, 4.75) {};
		\node [style=none] (90) at (1.75, 5.5) {};
		\node [style=none] (91) at (1.75, 5) {$b'$};
		\node [style=none] (92) at (-2.5, 4.75) {};
		\node [style=none] (93) at (-3, 4.75) {};
		\node [style=none] (94) at (-2, 4.75) {};
		\node [style=none] (95) at (-2.5, 5.5) {};
		\node [style=none] (96) at (-2.5, 5) {$a'$};
		\node [style=none] (97) at (-0.5, 2.75) {};
		\node [style=none] (98) at (-1, 2.75) {};
		\node [style=none] (99) at (0.5, 2.75) {};
		\node [style=none] (100) at (-0.25, 2) {};
		\node [style=none] (101) at (-0.25, 2.5) {$\phi$};
		\node [style=none] (102) at (-3, 4.25) {};
		\node [style=none] (103) at (-2, 4.25) {};
		\node [style=none] (104) at (-3, 3.75) {};
		\node [style=none] (105) at (-2, 3.75) {};
		\node [style=none] (106) at (-1.5, 3.75) {};
		\node [style=none] (107) at (2.25, 3.75) {};
		\node [style=none] (108) at (0.75, 3.75) {};
		\node [style=none] (109) at (2.25, 4.25) {};
		\node [style=none] (110) at (2.25, 4.25) {};
		\node [style=none] (111) at (1.25, 4.25) {};
		\node [style=none] (112) at (0, 2.75) {};
		\node [style=none] (113) at (1.75, 3.75) {};
		\node [style=none] (114) at (1.25, 3.75) {};
		\node [style=none] (115) at (-2.5, 3.75) {};
		\node [style=none] (116) at (-1.75, 0.75) {};
		\node [style=none] (117) at (-2.25, 0.75) {};
		\node [style=none] (118) at (-1.25, 0.75) {};
		\node [style=none] (119) at (-1.75, 0) {};
		\node [style=none] (120) at (-1.75, 0.5) {$x$};
		\node [style=none] (126) at (1.25, 0.75) {};
		\node [style=none] (127) at (0.75, 0.75) {};
		\node [style=none] (128) at (1.75, 0.75) {};
		\node [style=none] (129) at (1.25, 0) {};
		\node [style=none] (130) at (1.25, 0.5) {$y$};
		\node [style=none] (132) at (-2.5, -0.5) {};
		\node [style=none] (133) at (2, -0.5) {};
		\node [style=none] (134) at (-2.5, 4.25) {};
		\node [style=none] (135) at (1.75, 4.25) {};
		\node [style=none] (136) at (1.25, 3.5) {};
		\node [style=none] (137) at (1.25, 3) {};
		\node [style=none] (138) at (4, 3.5) {};
		\node [style=none] (139) at (4, 3) {};
		\node [style=none] (140) at (1.75, 3.5) {};
		\node [style=none] (141) at (2.25, 3) {};
		\node [style=none] (142) at (3, 3) {};
		\node [style=none] (143) at (3.5, 3) {};
		\node [style=none] (144) at (-4.5, 3.5) {};
		\node [style=none] (145) at (-4.5, 3) {};
		\node [style=none] (146) at (-2, 3.5) {};
		\node [style=none] (147) at (-2, 3) {};
		\node [style=none] (148) at (-2.5, 3.5) {};
		\node [style=none] (149) at (-3.25, 3.25) {$g_{\mathrm{A}}$};
		\node [style=none] (150) at (2.75, 3.25) {$g_{\mathrm{B}}$};
		\node [style=copy] (151) at (2, -1.25) {};
		\node [style=copy] (152) at (-2.5, -1.25) {};
		\node [style=copy] (153) at (-2.5, -0.25) {};
		\node [style=copy] (154) at (2, -0.25) {};
		\node [style=none] (155) at (3, 0.5) {};
		\node [style=none] (156) at (3.5, 0.5) {};
		\node [style=none] (157) at (-3.5, 3) {};
		\node [style=none] (158) at (-4, 3) {};
		\node [style=none] (159) at (-3.5, 0.5) {};
		\node [style=none] (160) at (-4, 0.5) {};
		\node [style=none] (161) at (-2.75, 3) {};
	\end{pgfonlayer}
	\begin{pgfonlayer}{edgelayer}
		\draw [style=alice] (1.center)
			 to (0.center)
			 to (3.center)
			 to (2.center)
			 to cycle;
		\draw (4.center) to (6.center);
		\draw (8.center) to (7.center);
		\draw (7.center) to (9.center);
		\draw (9.center) to (8.center);
		\draw [style=alice] (15.center)
			 to (14.center)
			 to (17.center)
			 to (16.center)
			 to cycle;
		\draw (19.center) to (21.center);
		\draw (23.center) to (22.center);
		\draw (22.center) to (24.center);
		\draw (24.center) to (23.center);
		\draw [style=alice] (30.center)
			 to (32.center)
			 to (31.center)
			 to cycle;
		\draw [style=alice, in=-90, out=90] (34.center) to (18.center);
		\draw [style=alice, in=-90, out=90, looseness=0.75] (35.center) to (36.center);
		\draw (39.center) to (40.center);
		\draw (39.center) to (41.center);
		\draw (41.center) to (40.center);
		\draw (44.center) to (45.center);
		\draw (44.center) to (46.center);
		\draw (46.center) to (45.center);
		\draw (50.center) to (49.center);
		\draw (49.center) to (51.center);
		\draw (51.center) to (50.center);
		\draw (55.center) to (54.center);
		\draw (54.center) to (56.center);
		\draw (56.center) to (55.center);
		\draw (59.center) to (60.center);
		\draw (59.center) to (61.center);
		\draw (61.center) to (60.center);
		\draw (63.center) to (64.center);
		\draw (65.center) to (63.center);
		\draw (67.center) to (64.center);
		\draw (66.center) to (67.center);
		\draw (65.center) to (66.center);
		\draw (68.center) to (69.center);
		\draw (70.center) to (68.center);
		\draw (72.center) to (69.center);
		\draw (71.center) to (72.center);
		\draw (70.center) to (71.center);
		\draw [in=-90, out=90, looseness=0.75] (58.center) to (66.center);
		\draw [in=-90, out=90, looseness=0.75] (73.center) to (75.center);
		\draw (78.center) to (79.center);
		\draw (78.center) to (80.center);
		\draw (80.center) to (79.center);
		\draw (83.center) to (84.center);
		\draw (83.center) to (85.center);
		\draw (85.center) to (84.center);
		\draw (89.center) to (88.center);
		\draw (88.center) to (90.center);
		\draw (90.center) to (89.center);
		\draw (94.center) to (93.center);
		\draw (93.center) to (95.center);
		\draw (95.center) to (94.center);
		\draw (98.center) to (99.center);
		\draw (98.center) to (100.center);
		\draw (100.center) to (99.center);
		\draw (102.center) to (103.center);
		\draw (104.center) to (102.center);
		\draw (106.center) to (103.center);
		\draw (105.center) to (106.center);
		\draw (104.center) to (105.center);
		\draw (107.center) to (108.center);
		\draw (109.center) to (107.center);
		\draw (111.center) to (108.center);
		\draw (110.center) to (111.center);
		\draw (109.center) to (110.center);
		\draw [in=-90, out=90] (97.center) to (105.center);
		\draw [in=-90, out=90] (112.center) to (114.center);
		\draw (117.center) to (118.center);
		\draw (117.center) to (119.center);
		\draw (119.center) to (118.center);
		\draw (127.center) to (128.center);
		\draw (127.center) to (129.center);
		\draw (129.center) to (128.center);
		\draw (77.center) to (76.center);
		\draw (82.center) to (74.center);
		\draw (133.center) to (48.center);
		\draw (132.center) to (53.center);
		\draw (134.center) to (92.center);
		\draw (135.center) to (87.center);
		\draw (137.center) to (136.center);
		\draw (136.center) to (138.center);
		\draw (138.center) to (139.center);
		\draw (139.center) to (137.center);
		\draw (140.center) to (113.center);
		\draw (38.center) to (141.center);
		\draw (145.center) to (144.center);
		\draw (144.center) to (146.center);
		\draw (146.center) to (147.center);
		\draw (147.center) to (145.center);
		\draw (148.center) to (115.center);
		\draw (155.center) to (142.center);
		\draw (156.center) to (143.center);
		\draw [in=-90, out=0] (154) to (155.center);
		\draw [in=0, out=-90, looseness=1.25] (156.center) to (151);
		\draw (160.center) to (158.center);
		\draw (159.center) to (157.center);
		\draw [in=-180, out=-90] (159.center) to (153);
		\draw [in=180, out=-90, looseness=1.25] (160.center) to (152);
		\draw (43.center) to (161.center);
		\draw (116.center) to (5.center);
		\draw (126.center) to (20.center);
	\end{pgfonlayer}
\end{tikzpicture}
    \caption{Example of a local operation with shared randomness from the set $\mathrm{LOSR}^\circ$. For simplicity we have omitted the terms $\sum_\lambda p(\lambda), \sum_\phi p(\phi)$. Note that each pre- and post-processing stages of operations have independent sources of classical randomness. Furthermore, side channels only forward information from pre- to post-processing locally.}
    \label{fig: LOSR^circ}
\end{figure}
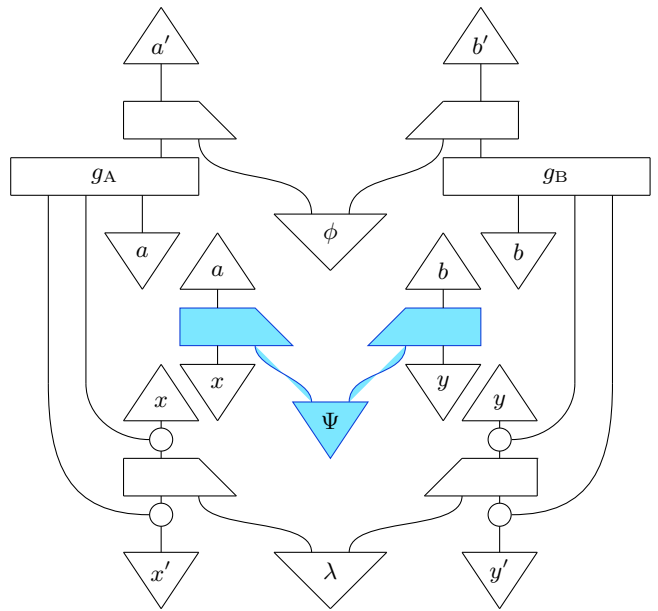

It is straightforward to generalize this operational description to multipartite Bell scenarios. Ref.~\cite{gallego2017nonlocality} showed that these operations are free, in the sense that if a behavior has a locally causal explanation given by some local hidden-variable model, once a transformation $\mathcal{L} \in \mathrm{LOSR}^\circ$ takes place, the new behavior also has some locally causal explanation. What we are interested here is in making the following simple remark
\begin{equation}
    \mathrm{LOSR}^\circ \subsetneq \mathrm{NCW}|_{\text{Bell}},
\end{equation}
Where $\mathrm{NCW}|_{\text{Bell}}$ is the subset of all noncontextual wiring transformations of type $\mathrm{NCW}(\Upsilon_1^{\text{Bell}} \to \Upsilon_2^{\text{Bell}})$ where both $\Upsilon_1^{\text{Bell}},\Upsilon_2^{\text{Bell}}$ are compatibility scenarios mathematically isomorphic to some Bell scenario. In words, when restricted to Bell scenarios, the noncontextual wirings contain, but are not equivalent to the local operations and shared randomness described by transformations in $\mathrm{LOSR}^\circ$, and characterized operationally as in Fig.~\ref{fig: LOSR^circ}.

To see why this is the case one compares the operation presented in Fig.~\ref{fig: LOSR^circ} with the noncontextual wiring presented in Fig.~\ref{fig: noncontextual wirings}. During the post-processing stage, noncontextual wirings allow for choosing a specific input \emph{context} provided we have information of a past \emph{context} considered. In particular, every transformation in $\mathrm{LOSR}^\circ$ can be described in terms of noncontextual wirings in such scenarios. However, a generic $\mathrm{NCW}|_{\text{Bell}}$ operation also allows both parties to have information about both classical inputs/outputs of the pre-processing box to choose a post-processing. Operationally, this implies that general NCW operations allow for a classical side channel sending the information of inputs $x'$ and $x$ made by Alice during a pre-processing stage on her side to Bob's choices of post-processing, and vice-versa. This corresponds to a choice of a generic function $g(\cdot | \mathbf{r},\beta)$ that has inputs of the contexts used in the pre-processing stage (in this case, $\beta = \{x',y'\}$) and their joint outcomes (in this case, $\mathbf{r} = (x,y)$).

This is a subtle difference. To be concrete, if we consider Eq.~\eqref{eq: LOSR^circ}, an operation in $\mathrm{NCW}|_{\text{Bell}}$ also allow that each function $g_{\mathrm{A}}$ takes into consideration $y$ and $y'$, and vice-versa. In this case, we might have that Alice chooses her measurements using  $g_{\mathrm{A}}(a|x,y,x',y')$ and the same for Bob $g_{\mathrm{B}}(b|x,y,x',y')$. Refs.~\cite{amaral2018noncontextual,amaral2019resource} have overlooked this subtle difference, which became clear due to the discussion provided by Ref.~\cite{wolfe2020quantifying}. Notably, this remark was made by Karvonen in Ref.~\cite{karvonen2021neither}, who wrote ``At the level of transformations, nonlocality is no longer a special case of contextuality. (...) This is because our wirings are slightly too general as they allow the $i$-th party to wire some of their measurements to measurements belonging to other parties.''  The only remark to be made here is that $g$ \emph{cannot} be completely general (as opposed to $f$), as this would allow some non-local (or contextual) correlations to be created. When applied to Bell scenarios, $g$ must also factorize its influence, i.e., $g = g_{\mathrm{A}} \cup g_{\mathrm{B}}$, as in Fig.~\ref{fig: LOSR^circ}. In more general cases, we simply require that $g(\mathbf{s}|\mathbf{r},\beta) = \cup_i g_i(s_i|\mathbf{r},\beta))$ as described in Def.~\ref{def: noncontextual wirings}. The main result of this subsection (and the following one) is then making Karvonen's remark concrete in the context of noncontextual wirings, a \emph{strict} subset of the free operations Karvonen considered in Ref.~\cite{karvonen2021neither}. 

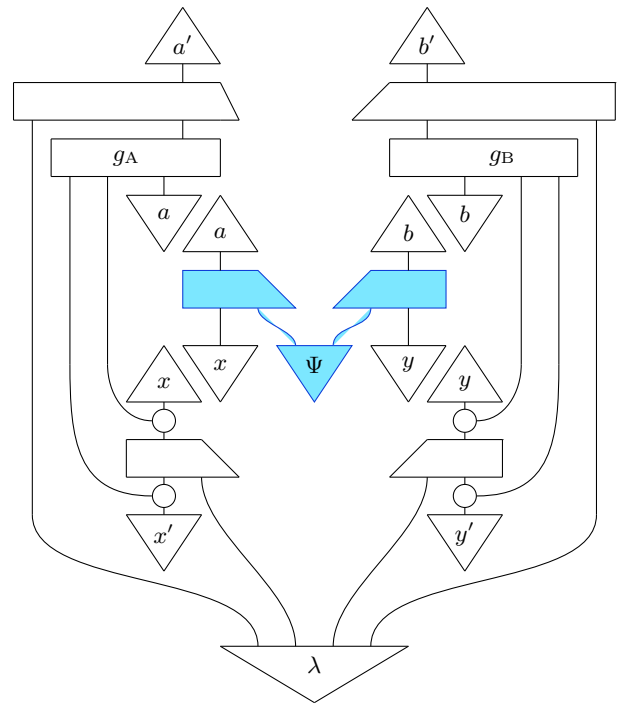
\begin{figure}
    \centering
    \begin{tikzpicture}
	\begin{pgfonlayer}{nodelayer}
		\node [style=none] (0) at (-0.75, 2.25) {};
		\node [style=none] (1) at (-1.75, 2.25) {};
		\node [style=none] (2) at (-1.75, 1.75) {};
		\node [style=none] (3) at (-0.25, 1.75) {};
		\node [style=none] (4) at (-1.25, 2.25) {};
		\node [style=none] (5) at (-1.25, 1.75) {};
		\node [style=none] (6) at (-1.25, 2.5) {};
		\node [style=none] (7) at (-1.75, 2.5) {};
		\node [style=none] (8) at (-0.75, 2.5) {};
		\node [style=none] (9) at (-1.25, 3.25) {};
		\node [style=none] (14) at (0.75, 2.25) {};
		\node [style=none] (15) at (1.75, 2.25) {};
		\node [style=none] (16) at (1.75, 1.75) {};
		\node [style=none] (17) at (0.25, 1.75) {};
		\node [style=none] (18) at (0.75, 1.75) {};
		\node [style=none] (19) at (1.25, 2.25) {};
		\node [style=none] (20) at (1.25, 1.75) {};
		\node [style=none] (21) at (1.25, 2.5) {};
		\node [style=none] (22) at (0.75, 2.5) {};
		\node [style=none] (23) at (1.75, 2.5) {};
		\node [style=none] (24) at (1.25, 3.25) {};
		\node [style=none] (25) at (1.25, 2.75) {$b$};
		\node [style=none] (30) at (-0.5, 1.25) {};
		\node [style=none] (31) at (0.5, 1.25) {};
		\node [style=none] (32) at (0, 0.5) {};
		\node [style=none] (33) at (0, 1) {$\Psi$};
		\node [style=none] (34) at (0.25, 1.25) {};
		\node [style=none] (35) at (-0.25, 1.25) {};
		\node [style=none] (36) at (-0.75, 1.75) {};
		\node [style=none] (37) at (-1.25, 2.75) {$a$};
		\node [style=none] (38) at (2, 3.25) {};
		\node [style=none] (39) at (1.5, 3.25) {};
		\node [style=none] (40) at (2.5, 3.25) {};
		\node [style=none] (41) at (2, 2.5) {};
		\node [style=none] (42) at (2, 3) {$b$};
		\node [style=none] (43) at (-2, 3.25) {};
		\node [style=none] (44) at (-2.5, 3.25) {};
		\node [style=none] (45) at (-1.5, 3.25) {};
		\node [style=none] (46) at (-2, 2.5) {};
		\node [style=none] (47) at (-2, 3) {$a$};
		\node [style=none] (48) at (2, 0.5) {};
		\node [style=none] (49) at (1.5, 0.5) {};
		\node [style=none] (50) at (2.5, 0.5) {};
		\node [style=none] (51) at (2, 1.25) {};
		\node [style=none] (52) at (2, 0.75) {$y$};
		\node [style=none] (53) at (-2, 0.5) {};
		\node [style=none] (54) at (-2.5, 0.5) {};
		\node [style=none] (55) at (-1.5, 0.5) {};
		\node [style=none] (56) at (-2, 1.25) {};
		\node [style=none] (57) at (-2, 0.75) {$x$};
		\node [style=none] (58) at (-0.25, -2.75) {};
		\node [style=none] (59) at (-1.25, -2.75) {};
		\node [style=none] (60) at (1.25, -2.75) {};
		\node [style=none] (61) at (0, -3.5) {};
		\node [style=none] (62) at (0, -3) {$\lambda$};
		\node [style=none] (63) at (-2.5, 0) {};
		\node [style=none] (64) at (-1.5, 0) {};
		\node [style=none] (65) at (-2.5, -0.5) {};
		\node [style=none] (66) at (-1.5, -0.5) {};
		\node [style=none] (67) at (-1, -0.5) {};
		\node [style=none] (68) at (2.5, -0.5) {};
		\node [style=none] (69) at (1, -0.5) {};
		\node [style=none] (70) at (2.5, 0) {};
		\node [style=none] (71) at (2.5, 0) {};
		\node [style=none] (72) at (1.5, 0) {};
		\node [style=none] (73) at (0.25, -2.75) {};
		\node [style=none] (74) at (2, -0.5) {};
		\node [style=none] (75) at (1.5, -0.5) {};
		\node [style=none] (76) at (-2, -0.5) {};
		\node [style=none] (77) at (-2, -1) {};
		\node [style=none] (78) at (-2.5, -1) {};
		\node [style=none] (79) at (-1.5, -1) {};
		\node [style=none] (80) at (-2, -1.75) {};
		\node [style=none] (81) at (-2, -1.25) {$x'$};
		\node [style=none] (82) at (2, -1) {};
		\node [style=none] (83) at (1.5, -1) {};
		\node [style=none] (84) at (2.5, -1) {};
		\node [style=none] (85) at (2, -1.75) {};
		\node [style=none] (86) at (2, -1.25) {$y'$};
		\node [style=none] (87) at (1.5, 5) {};
		\node [style=none] (88) at (1, 5) {};
		\node [style=none] (89) at (2, 5) {};
		\node [style=none] (90) at (1.5, 5.75) {};
		\node [style=none] (91) at (1.5, 5.25) {$b'$};
		\node [style=none] (92) at (-1.75, 5) {};
		\node [style=none] (93) at (-2.25, 5) {};
		\node [style=none] (94) at (-1.25, 5) {};
		\node [style=none] (95) at (-1.75, 5.75) {};
		\node [style=none] (96) at (-1.75, 5.25) {$a'$};
		\node [style=none] (97) at (-4, 4.75) {};
		\node [style=none] (98) at (-1.25, 4.75) {};
		\node [style=none] (99) at (-4, 4.25) {};
		\node [style=none] (100) at (-1.25, 4.25) {};
		\node [style=none] (101) at (-1, 4.25) {};
		\node [style=none] (102) at (4, 4.25) {};
		\node [style=none] (103) at (0.5, 4.25) {};
		\node [style=none] (104) at (4, 4.75) {};
		\node [style=none] (105) at (4, 4.75) {};
		\node [style=none] (106) at (1, 4.75) {};
		\node [style=none] (107) at (1.5, 4.25) {};
		\node [style=none] (108) at (1, 4.25) {};
		\node [style=none] (109) at (-1.75, 4.25) {};
		\node [style=none] (110) at (-1.25, 1.25) {};
		\node [style=none] (111) at (-1.75, 1.25) {};
		\node [style=none] (112) at (-0.75, 1.25) {};
		\node [style=none] (113) at (-1.25, 0.5) {};
		\node [style=none] (114) at (-1.25, 1) {$x$};
		\node [style=none] (120) at (1.25, 1.25) {};
		\node [style=none] (121) at (0.75, 1.25) {};
		\node [style=none] (122) at (1.75, 1.25) {};
		\node [style=none] (123) at (1.25, 0.5) {};
		\node [style=none] (124) at (1.25, 1) {$y$};
		\node [style=none] (126) at (-2, 0) {};
		\node [style=none] (127) at (2, 0) {};
		\node [style=none] (128) at (-1.75, 4.75) {};
		\node [style=none] (129) at (1.5, 4.75) {};
		\node [style=none] (130) at (1, 4) {};
		\node [style=none] (131) at (1, 3.5) {};
		\node [style=none] (132) at (3.5, 4) {};
		\node [style=none] (133) at (3.5, 3.5) {};
		\node [style=none] (134) at (1.5, 4) {};
		\node [style=none] (135) at (2, 3.5) {};
		\node [style=none] (136) at (2.75, 3.5) {};
		\node [style=none] (137) at (3.25, 3.5) {};
		\node [style=none] (138) at (-3.5, 4) {};
		\node [style=none] (139) at (-3.5, 3.5) {};
		\node [style=none] (140) at (-1.25, 4) {};
		\node [style=none] (141) at (-1.25, 3.5) {};
		\node [style=none] (142) at (-1.75, 4) {};
		\node [style=none] (143) at (-2.5, 3.75) {$g_{\mathrm{A}}$};
		\node [style=none] (144) at (2.5, 3.75) {$g_{\mathrm{B}}$};
		\node [style=copy] (145) at (2, -0.75) {};
		\node [style=copy] (146) at (-2, -0.75) {};
		\node [style=copy] (147) at (-2, 0.25) {};
		\node [style=copy] (148) at (2, 0.25) {};
		\node [style=none] (149) at (2.75, 1) {};
		\node [style=none] (150) at (3.25, 1) {};
		\node [style=none] (151) at (-2.75, 3.5) {};
		\node [style=none] (152) at (-3.25, 3.5) {};
		\node [style=none] (153) at (-2.75, 1) {};
		\node [style=none] (154) at (-3.25, 1) {};
		\node [style=none] (155) at (-2, 3.5) {};
		\node [style=none] (156) at (0.75, -2.75) {};
		\node [style=none] (157) at (-0.75, -2.75) {};
		\node [style=none] (158) at (-3.75, 4.25) {};
		\node [style=none] (159) at (3.75, 4.25) {};
		\node [style=none] (160) at (-3.75, -1) {};
		\node [style=none] (161) at (3.75, -1) {};
	\end{pgfonlayer}
	\begin{pgfonlayer}{edgelayer}
		\draw [style=alice] (3.center)
			 to (2.center)
			 to (1.center)
			 to (0.center)
			 to cycle;
		\draw (4.center) to (6.center);
		\draw (8.center) to (7.center);
		\draw (7.center) to (9.center);
		\draw (9.center) to (8.center);
		\draw [style=alice] (16.center)
			 to (15.center)
			 to (14.center)
			 to (17.center)
			 to cycle;
		\draw (19.center) to (21.center);
		\draw (23.center) to (22.center);
		\draw (22.center) to (24.center);
		\draw (24.center) to (23.center);
		\draw [style=alice] (32.center)
			 to (31.center)
			 to (30.center)
			 to cycle;
		\draw [style=alice, in=-90, out=90] (34.center) to (18.center);
		\draw [style=alice, in=-90, out=90, looseness=0.75] (35.center) to (36.center);
		\draw (39.center) to (40.center);
		\draw (39.center) to (41.center);
		\draw (41.center) to (40.center);
		\draw (44.center) to (45.center);
		\draw (44.center) to (46.center);
		\draw (46.center) to (45.center);
		\draw (50.center) to (49.center);
		\draw (49.center) to (51.center);
		\draw (51.center) to (50.center);
		\draw (55.center) to (54.center);
		\draw (54.center) to (56.center);
		\draw (56.center) to (55.center);
		\draw (59.center) to (60.center);
		\draw (59.center) to (61.center);
		\draw (61.center) to (60.center);
		\draw (63.center) to (64.center);
		\draw (65.center) to (63.center);
		\draw (67.center) to (64.center);
		\draw (66.center) to (67.center);
		\draw (65.center) to (66.center);
		\draw (68.center) to (69.center);
		\draw (70.center) to (68.center);
		\draw (72.center) to (69.center);
		\draw (71.center) to (72.center);
		\draw (70.center) to (71.center);
		\draw [in=-90, out=90, looseness=0.75] (58.center) to (66.center);
		\draw [in=-90, out=90, looseness=0.75] (73.center) to (75.center);
		\draw (78.center) to (79.center);
		\draw (78.center) to (80.center);
		\draw (80.center) to (79.center);
		\draw (83.center) to (84.center);
		\draw (83.center) to (85.center);
		\draw (85.center) to (84.center);
		\draw (89.center) to (88.center);
		\draw (88.center) to (90.center);
		\draw (90.center) to (89.center);
		\draw (94.center) to (93.center);
		\draw (93.center) to (95.center);
		\draw (95.center) to (94.center);
		\draw (97.center) to (98.center);
		\draw (99.center) to (97.center);
		\draw (101.center) to (98.center);
		\draw (100.center) to (101.center);
		\draw (99.center) to (100.center);
		\draw (102.center) to (103.center);
		\draw (104.center) to (102.center);
		\draw (106.center) to (103.center);
		\draw (105.center) to (106.center);
		\draw (104.center) to (105.center);
		\draw (111.center) to (112.center);
		\draw (111.center) to (113.center);
		\draw (113.center) to (112.center);
		\draw (121.center) to (122.center);
		\draw (121.center) to (123.center);
		\draw (123.center) to (122.center);
		\draw (77.center) to (76.center);
		\draw (82.center) to (74.center);
		\draw (127.center) to (48.center);
		\draw (126.center) to (53.center);
		\draw (128.center) to (92.center);
		\draw (129.center) to (87.center);
		\draw (131.center) to (130.center);
		\draw (130.center) to (132.center);
		\draw (132.center) to (133.center);
		\draw (133.center) to (131.center);
		\draw (134.center) to (107.center);
		\draw (38.center) to (135.center);
		\draw (139.center) to (138.center);
		\draw (138.center) to (140.center);
		\draw (140.center) to (141.center);
		\draw (141.center) to (139.center);
		\draw (142.center) to (109.center);
		\draw (149.center) to (136.center);
		\draw (150.center) to (137.center);
		\draw [in=-90, out=0] (148) to (149.center);
		\draw [in=0, out=-90, looseness=1.25] (150.center) to (145);
		\draw (154.center) to (152.center);
		\draw (153.center) to (151.center);
		\draw [in=-180, out=-90] (153.center) to (147);
		\draw [in=180, out=-90, looseness=1.25] (154.center) to (146);
		\draw (43.center) to (155.center);
		\draw (160.center) to (158.center);
		\draw (161.center) to (159.center);
		\draw [in=90, out=-90, looseness=0.75] (161.center) to (156.center);
		\draw [in=90, out=-90, looseness=0.75] (160.center) to (157.center);
		\draw (120.center) to (20.center);
		\draw (110.center) to (5.center);
	\end{pgfonlayer}
\end{tikzpicture}
    \caption{Example of a local operation with shared randomness from the set $\mathrm{LOSR}^\bullet$. For simplicity, we have omitted the terms $\sum_\lambda p(\lambda)$. Note that classical feedforward and adaptivity are only allowed locally and that a single common source of randomness is associated with both the pre- and post-processing boxes. As argued in Ref.~\cite{wolfe2020quantifying}, this is the decisive operational requirement that allows $\mathrm{LOSR}^\bullet$ to be a convex set.}
    \label{fig: LOSR^bullet}
\end{figure}

Naturally, the two notions differ. Noncontextuality scenarios do not base their operational constraints on some notion of spatial separability between different parties. Operations characterized by some notion of local operation and shared randomness \emph{intentionally} use the operational structure of Bell scenarios to characterize the possible transformations. $\mathrm{LOSR}^\circ$ (and later what we will denote $\mathrm{LOSR}^\bullet$) restrict parties to only process classical information that is locally available, while allowing initial sources of shared randomness. A KS-noncontextuality scenario that is mathematically isomorphic to some Bell scenario has no operational motivation to restrict classical feedforward between measurements that can be jointly performed, since those are not necessarily measured by space-like separated parties, and can be, for example, measured sequentially, as was done, e.g., in Ref.~\cite{wang2022significant}.

\subsubsection{Convex set of local operations and shared randomness}

For completeness, we include a discussion from Ref.~\cite[App. A]{wolfe2020quantifying}.  There, the authors show that the set of local operations with shared randomness of a fixed type  $\mathrm{LOSR}^\circ(\Upsilon^{\mathrm{Bell}}_{\mathrm{in}} \to \Upsilon^{\mathrm{Bell}}_{\mathrm{out}})$ is a set of free operations that is \emph{not convex}~\footnote{They do not prove, but we believe that even if one relaxes the constraint of type-dependence the set $\mathrm{LOSR}^\circ$ might be non-convex as well.}. This is shown by creating a convex combination of free operations such that the final one has a perfect correlation between $x$ and $b'$ that cannot be mediated by $y$, because $y$ in such an example ends up being deterministic (always the same). 

Ref.~\cite{wolfe2020quantifying} considers a different set of operations, that is mathematically described as the convex hull of $\mathrm{LOSR}^\circ$, that we shall denote $\mathrm{LOSR}^\bullet$. This set was introduced and investigated in Refs.~\cite{deVicente2014LOSR,geller2014quantifying}. This convex hull is not only abstractly imposed at a mathematical level, but it is the result of a novel link allowed by an operational description of the free transformations. They allow the initial source of randomness to be both the source of randomness for the pre-processing \emph{and} for the post-processing operations, as shown in Fig.~\ref{fig: LOSR^bullet}. Because of that, it follows immediately that 
\begin{equation}\label{eq: LOSR different than NCW}
    \mathrm{LOSR}^{\bullet} \neq \mathrm{NCW}|_{\text{Bell}}.
\end{equation}
These two sets of free operations are different.  The above follows simply from the same remark made before that in a generic noncontextual wiring both parties can have information about contexts $\{x',y'\}$ considered and joint outcomes $\{x,y\}$ of the pre-processing behaviors. 

Ref.~\cite{wolfe2020quantifying} credits, in our view correctly, the closure of $\mathrm{LOSR}^\bullet$ under convex combinations to this novel perspective on the common source of randomness of pre- and post-processings. This suggests that $\mathrm{NCW}|_{\text{Bell}}$, or even any set $\mathrm{NCW}(\Upsilon_{\mathrm{in}} \to \Upsilon_{\mathrm{out}})$, could be non-convex sets of free operations. In the next section, we investigate this aspect of the geometry of the set of free operations. 

\subsection{Convexity of the set of noncontextual wirings}\label{sec: convexity result}

As mentioned previously, an important technical aspect of a resource theory is the convexity of the chosen set of free operations. Let us begin by considering a general free operation in  $\mathrm{NCW}(\Upsilon_{\mathrm{in}} \to \Upsilon_{\mathrm{out}})$, as given by Eq.~\eqref{eq: Noncontextual wiring behavior}. We are fixing the input/output compatibility scenarios, that define the type of transformation. We are interested in investigating if this set is convex. Since we are fixing the type of a free operation, we fix the initial behaviors to be represented as  $B_{\mathrm{in}}= \{p_\gamma(\mathbf{s})\}$ from $\Upsilon_{\mathrm{in}}$ and the final behaviors to be represented as $B_{\mathrm{out}}=\{p_\beta(\mathbf{t})\}$ from $\Upsilon_{\mathrm{out}}$. 

Formally, we want to show that, for all $\alpha \in [0,1]$, and all $\mathcal{W}_1,\mathcal{W}_2 \in \mathrm{NCW}(\Upsilon_{\mathrm{in}}\to \Upsilon_{\mathrm{out}})$, there exists some wiring $\mathcal{W} \in \mathrm{NCW}(\Upsilon_{\mathrm{in}}\to \Upsilon_{\mathrm{out}})$ such that 
\begin{equation*}
    \mathcal{W}(B) = \alpha \mathcal{W}_1(B)+(1-\alpha)\mathcal{W}_2(B). 
\end{equation*}
We now describe the new noncontextual wiring $\mathcal{W}$. Consider that $\{p^{(1)}_\beta(\mathbf{r})\}$ and $\{p^{(2)}_\beta(\mathbf{r})\}$ are the two pre-processing behaviors for $\mathcal{W}_1$ and $\mathcal{W}_2$, respectively. Since they are both noncontextual, they will be given by

\begin{equation*}
    p_\beta^{(\Lambda)}(\mathbf{r}) := \sum_\lambda p^{(\Lambda)}(\lambda)\prod_{\beta_i \in \beta}p_{\beta_i}^{(\Lambda)}(\mathbf{r}|\lambda),
\end{equation*}
where $\Lambda \in \{1,2\}$ labels the respective wirings. We can describe this $\Lambda$ as a classical variable, and let $p(\Lambda)$ be a distribution over $\Lambda$ such that $p(1)=\alpha$ and $p(2)=1-\alpha$. Let us now construct a new behavior $\{\{p_\beta(\mathbf{r}')\}_{\mathbf{r}}\}_\beta$ such that $\mathbf{r}' \in \mathcal{O}^\beta \times \Lambda$. This behavior is given by, 
\begin{equation*}
    p_\beta(\mathbf{r}')= \sum_\Lambda p(\Lambda)\sum_\lambda p(\lambda|\Lambda)\prod_{\beta_i \in \beta}p_{\beta_i}^{(\Lambda)}(\mathbf{r}|\lambda)
\end{equation*}
where $p(\lambda|1) \equiv p^{(1)}(\lambda)$ the source of randomness associated with the noncontextual behavior $\{p_\beta^{(1)}(\mathbf{r})\}$, and similarly for $p(\lambda|2) \equiv p^{(2)}(\lambda)$. The behavior $p_\beta(\mathbf{r}')$ can be viewed as a new `larger' behavior $\{p_\beta^{(\Lambda)}(\mathbf{r})\}_{\Lambda \in \{1,2\}}$. The specific choice of state $\Lambda$ is passed forward to the outcomes of the new behavior (as a flag), via $\mathbf{r}' = (\mathbf{r},\Lambda)$. This flag is used to make the controlled choice between behavior $p_\beta^{(1)}(\mathbf{r})$ or $p_{\beta}^{(2)}(\mathbf{r})$. A similar thing will happen to the post-processing behavior. As shown in Ref.~\cite{abramsky2019comonadic}, the controlled choice of two noncontextual behaviors is again a noncontextual behavior. 

The function $f$ takes $\mathbf{r}'$ as input and is then defined as $f(\mathbf{r}') = f(\mathbf{r},\Lambda) := f^{(\Lambda)}(\mathbf{r})$. Note that, because information about $\Lambda$ is passed as a flag to the outcomes, this choice is possible. Moreover, we will make a similar choice for the function $g$ where we define $g(\mathbf{s}|\mathbf{r}',\beta) = g(\mathbf{s}|\mathbf{r},\Lambda,\beta) := g^{(\Lambda)}(\mathbf{s}|\mathbf{r},\beta)$. This is an important aspect: the function $f$ allows us to preserve the type of the scenario $\Upsilon_{\mathrm{in}}$. This is also why we can transfer information of $\Lambda$ to the post-processing stage until $g$, because $\mathbf{r}'$ carries information of $\Lambda$.

Because of that, we can define the post-processing behavior of $\mathcal{W}$ to be associated with either that of $\mathcal{W}_1$ or that of $\mathcal{W}_2$, depending on the flag value $\Lambda$ of the copied outcome $\mathbf{r}'$,
\begin{equation*}
    p_{\delta}(\mathbf{t}|\mathbf{s},\mathbf{r}',\beta) = \sum_\phi p(\phi|\Lambda)\prod_{\delta_i \in \delta=g(\mathbf{s}|\mathbf{r}',\beta)}p^{(\Lambda)}_{\delta_i}(t_i|\phi),  
\end{equation*}
i.e., $p_{\delta=g(\mathbf{s}|\mathbf{r}',\beta)}(\mathbf{t}) := p^{(\Lambda)}_{\delta = g(\mathbf{s}|\mathbf{r},\beta)}(\mathbf{t})$. Above, $p(\phi|\Lambda) = p^{(\Lambda)}(\phi)$ where $p^{(1)}(\phi)$ describes the randomness associated to the post-processing from $\mathcal{W}_1$, and similarly for $p^{(2)}(\phi)$. Moreover, this is possible because we are viewing the entire post-processing behavior as a new behavior in which we concatenate the post-processing behaviors from $\mathcal{W}_1$ and $\mathcal{W}_2$ into a single one, and use the flag $\Lambda$ to decide which one to use (similarly as we have done with the pre-processing). This is why we are allowed to write $p(\phi|\Lambda)$. This calculation is somewhat surprising because, effectively, it is possible to signal information from pre-processing randomness towards post-processing randomness with NCW operations, despite claims that this was not the case~\cite{amaral2018noncontextual,amaral2019resource,wolfe2020quantifying}. 

In this manner, we can write down $\mathcal{W}$ as
\begin{equation*}
    p_\beta(\mathbf{t}) = \sum_{\mathbf{r}',\mathbf{s}}p_{\beta}(\mathbf{r}')p_{\gamma=f(\mathbf{r}')}(\mathbf{s})p_{\delta=g(\mathbf{s}|\mathbf{r}',\beta)}(\mathbf{t}).
\end{equation*}
By construction, $\mathcal{W} = \alpha \mathcal{W}_1 + (1-\alpha)\mathcal{W}_2$. From this calculation, we conclude the following
\begin{theorem}\label{Theorem: NCW type-dependent convexity}
    Let $\Upsilon_{\mathrm{in}}, \Upsilon_{\mathrm{out}}$ be two compatibility scenarios. The set of all noncontextual wirings with a fixed type $\mathrm{NCW}(\Upsilon_{\mathrm{in}} \to \Upsilon_{\mathrm{out}})$ is a convex set.
\end{theorem}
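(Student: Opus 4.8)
The plan is to reduce the theorem to closure under binary convex combinations: given $\mathcal{W}_1,\mathcal{W}_2 \in \mathrm{NCW}(\Upsilon_{\mathrm{in}} \to \Upsilon_{\mathrm{out}})$ and $\alpha \in [0,1]$, I would exhibit a single wiring $\mathcal{W} \in \mathrm{NCW}(\Upsilon_{\mathrm{in}} \to \Upsilon_{\mathrm{out}})$ with $\mathcal{W}(B) = \alpha\,\mathcal{W}_1(B) + (1-\alpha)\,\mathcal{W}_2(B)$ for every $B \in \mathrm{ND}(\Upsilon_{\mathrm{in}})$; establishing this for all $\alpha\in[0,1]$ is exactly convexity of the set. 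The construction rests on one device: introduce an auxiliary classical flag $\Lambda \in \{1,2\}$ with $p(1)=\alpha$ and $p(2)=1-\alpha$, produce it inside the pre-processing behavior, append it to the pre-processing outputs, and let every subsequent stage of the wiring branch on its value. Intuitively this merges the two independent randomness sources of $\mathcal{W}_1$ and $\mathcal{W}_2$ into a single shared source, which is precisely the feature the discussion in Ref.~\cite{wolfe2020quantifying} identifies as responsible for convexity.

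Concretely, I would first replace the two noncontextual pre-processing behaviors $\{p^{(1)}_\beta(\mathbf{r})\}$ and $\{p^{(2)}_\beta(\mathbf{r})\}$ by their controlled choice with the flag carried to the output, $p_\beta(\mathbf{r},\Lambda):=p(\Lambda)\,p^{(\Lambda)}_\beta(\mathbf{r})$, living in the scenario $\Upsilon_{PRE}$ with outcome alphabet enlarged by $\Lambda$ but with the same measurements and contexts. Since a controlled choice of noncontextual behaviors is again noncontextual~\cite{abramsky2019comonadic}, condition~1 of Def.~\ref{def: noncontextual wirings} holds. Next I would define the new functions by branching: $f(\mathbf{r},\Lambda):=f^{(\Lambda)}(\mathbf{r})$, which is legitimate because the flag is part of the box output and each $f^{(\Lambda)}$ already maps into $\mathcal{C}$ (so the input type $\Upsilon_{\mathrm{in}}$ is left untouched); and $g(\mathbf{s}|\mathbf{r},\Lambda,\beta):=g^{(\Lambda)}(\mathbf{s}|\mathbf{r},\beta)$, which for each fixed $\Lambda$ still decomposes component-wise as $\cup_i g_i(s_i|\mathbf{r},\Lambda,\beta)$, so condition~3 holds. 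Finally I would take the post-processing family to be the controlled choice on $\Lambda$ of the two post-processing families, $p_{\delta}(\mathbf{t}|\mathbf{s},\mathbf{r},\Lambda,\beta):=p^{(\Lambda)}_{\delta=g^{(\Lambda)}(\mathbf{s}|\mathbf{r},\beta)}(\mathbf{t})$, noncontextual by the same argument, giving condition~2. Substituting into Eq.~\eqref{eq: Noncontextual wiring behavior} and performing the sum over $\Lambda$ first, the expression collapses to $\sum_\Lambda p(\Lambda)\,[\mathcal{W}_\Lambda(B)]_\beta(\mathbf{t}) = \alpha\,[\mathcal{W}_1(B)]_\beta(\mathbf{t}) + (1-\alpha)\,[\mathcal{W}_2(B)]_\beta(\mathbf{t})$, the desired identity.

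The two genuinely routine parts are checking that the three conditions of Def.~\ref{def: noncontextual wirings} survive the construction and the one-line rearrangement of Eq.~\eqref{eq: Noncontextual wiring behavior}. The step I expect to require the most care is verifying that $\mathcal{W}$ has \emph{exactly} the prescribed type $\Upsilon_{\mathrm{in}} \to \Upsilon_{\mathrm{out}}$ rather than a scenario bloated by the bookkeeping flag: for the input this is immediate from the form of $f$, but for the output scenario $\Upsilon_W=(\mathcal{M}_{PRE},\mathcal{C}_{PRE},\mathcal{O}_{POS})$ one must note that appending $\Lambda$ changes neither $\mathcal{M}_{PRE}$ nor $\mathcal{C}_{PRE}$, and that after identifying, for each $(\mathbf{r},\beta)$, the two $\Lambda$-copies of the post-processing scenario — which may be taken to coincide since $\mathcal{W}_1$ and $\mathcal{W}_2$ share the output type — the controlled-choice outcome set $\mathcal{O}_{POS}$ is unchanged up to relabeling. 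As a closing remark I would observe that the same flag argument expresses $\mathrm{NCW}(\Upsilon_{\mathrm{in}} \to \Upsilon_{\mathrm{out}})$ as the convex hull of the finitely many deterministic noncontextual wirings of that type, upgrading ``convex set'' to ``convex polytope'' as announced in the introduction, and that it makes explicit — contrary to earlier claims in Refs.~\cite{amaral2018noncontextual,amaral2019resource,wolfe2020quantifying} — that a common source of randomness between pre- and post-processing is in fact realizable within NCW.
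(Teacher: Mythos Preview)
Your proposal is correct and follows essentially the same approach as the paper: introduce a flag $\Lambda\in\{1,2\}$ with weights $\alpha,1-\alpha$, append it to the pre-processing outputs as $\mathbf{r}'=(\mathbf{r},\Lambda)$, branch $f$, $g$, and the post-processing behavior on $\Lambda$, and invoke the controlled-choice result of Ref.~\cite{abramsky2019comonadic} to keep everything noncontextual. Your treatment of type preservation is in fact more explicit than the paper's, and your closing remarks on the polytope structure and the effective common randomness match the theorems and discussion the paper presents immediately after.
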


The immediate corollary is the following one:

\begin{corollary}
    Let $\Upsilon_{\mathrm{in}}^{\mathrm{Bell}}, \Upsilon_{\mathrm{out}}^{\mathrm{Bell}}$ be two compatibility scenarios mathematically isomorphic to some Bell scenario. The set $\mathrm{NCW}(\Upsilon_{\mathrm{in}}^{\mathrm{Bell}} \to \Upsilon_{\mathrm{out}}^{\mathrm{Bell}})$ is a convex set.   
\end{corollary}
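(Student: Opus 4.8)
The plan is to obtain this statement as an immediate specialization of Theorem~\ref{Theorem: NCW type-dependent convexity}. The only observation needed is that a Bell scenario, once regarded as a compatibility scenario in the way recalled at the start of Sec.~\ref{sec: NCW vs LOSR} --- the measurements being the disjoint union of the parties' settings, the maximal contexts being the tuples selecting one setting per party, and the joint outcomes being the Cartesian product of the per-party outcome sets --- is an ordinary compatibility scenario. Hence $\Upsilon_{\mathrm{in}}^{\mathrm{Bell}}$ and $\Upsilon_{\mathrm{out}}^{\mathrm{Bell}}$ are legitimate choices for $\Upsilon_{\mathrm{in}}$ and $\Upsilon_{\mathrm{out}}$ in the hypotheses of Theorem~\ref{Theorem: NCW type-dependent convexity}, so instantiating that theorem with them yields at once that $\mathrm{NCW}(\Upsilon_{\mathrm{in}}^{\mathrm{Bell}} \to \Upsilon_{\mathrm{out}}^{\mathrm{Bell}})$ is convex (indeed a convex polytope).

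The single point that merits a line of checking is that the mixed wiring $\mathcal{W} = \alpha\mathcal{W}_1 + (1-\alpha)\mathcal{W}_2$ built in the proof of Theorem~\ref{Theorem: NCW type-dependent convexity} remains a \emph{valid} noncontextual wiring when the scenarios are of Bell type --- in particular that its feed-forward function $g(\mathbf{s}\mid\mathbf{r}',\beta) := g^{(\Lambda)}(\mathbf{s}\mid\mathbf{r},\beta)$ still satisfies the componentwise requirement $g = \cup_i g_i$ of Def.~\ref{def: noncontextual wirings}, which in a bipartite Bell scenario is exactly the requirement that $g$ split as $g_{\mathrm{A}}\cup g_{\mathrm{B}}$ (cf. Figs.~\ref{fig: LOSR^circ} and~\ref{fig: LOSR^bullet}). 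This holds automatically: $g^{(1)}$ and $g^{(2)}$ already factorize componentwise since $\mathcal{W}_1$ and $\mathcal{W}_2$ are genuine NCW operations, and the flag $\Lambda$ is a single global classical label copied onto every wire, so $g_i(s_i\mid\mathbf{r}',\beta) := g_i^{(\Lambda)}(s_i\mid\mathbf{r},\beta)$ realizes the combined $g$ as a componentwise map without introducing any cross-party wiring beyond the shared randomness already allowed. The same reasoning covers the pre-processing function $f$ and the preservation of factorizability (noncontextuality) of $B_{PRE}$ and of the $B_{POS}^{(\mathbf{r},\beta)}$, which is precisely the controlled-choice step invoked in that proof.

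I do not anticipate a genuine obstacle: the entire mathematical content is carried by Theorem~\ref{Theorem: NCW type-dependent convexity}, and the Bell-type restriction adds only the remark that the structural constraints defining $\mathrm{NCW}$ are inherited by the convex-combination wiring. The closest thing to a subtlety is checking that restricting to Bell-type scenarios does not accidentally enlarge the class of allowed wirings in a way that would break convexity --- but it cannot, since the componentwise constraint on $g$ is at least as stringent in the Bell case and is respected by the construction above.
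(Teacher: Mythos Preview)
Your proposal is correct and takes essentially the same approach as the paper, which simply states the result as an ``immediate corollary'' of Theorem~\ref{Theorem: NCW type-dependent convexity} without further argument. Your extra paragraph verifying that the componentwise constraint $g = \cup_i g_i$ survives the flag-$\Lambda$ construction is more care than the paper itself provides, but it is consistent with the paper's reasoning and not a different route.
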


In particular, this corollary implies that every convex combination of elements in $\mathrm{LOSR}^\circ$ can be represented both as some noncontextual wiring and as some $\mathrm{LOSR}^\bullet$ operation. Effectively, as we have seen, what is allowed by a noncontextual wiring operation is that the classical randomness in the pre-processing can be fed forward freely toward the classical randomness of the post-processing boxes. This is operationally equivalent to allowing, in general, to have a single common source of shared randomness between pre- and post-processing boxes (in the case described above, that common source of randomness had the form $\sum_{\Lambda,\phi,\lambda}p(\Lambda)p(\phi|\Lambda)p(\lambda|\Lambda)$). 

In other words, the ability to prepare a source of randomness, and forward this information via some side channel and a function $f$ that post-process this information as we did above in the proof of Theorem~\ref{Theorem: NCW type-dependent convexity} is operationally equivalent to the ability to simply have a third side-channel that forward information of the classical source of randomness from the pre-processing behavior towards the post-processing box. We can therefore write any noncontextual wiring in the following form:
\begin{widetext}
\begin{equation}\label{eq: novel noncontextual wiring}
    p_\beta (\tb{t}) = \sum_\lambda p(\lambda)\sum_{\tb{r} \in \mathcal{O}^\beta,\tb{s}\in \mathcal{O}^\gamma} \prod_{\beta_i \in \beta} p_{\beta_i}(r_i|\lambda) p_{\gamma=f(\tb{r})} \left(\tb{s}\right) \prod_{\delta_i \in \delta = g(\mathbf{s}|\mathbf{r},\beta)}p_{\delta_i}(t_i|\lambda),
\end{equation}
\end{widetext}
where now $p(\lambda)$ is a common source of randomness for \emph{both} the pre- and post-processing behaviors. A similar remark was made in Ref.~\cite[Remark 28]{barbosa2023closing}. There, the authors point out the dual relationship between choosing to allow for a common source of shared randomness, or adaptivity in the free operations, that they interpret as simulations of different scenarios.

Therefore, we have also the following Corollary.

\begin{corollary}
    The following inclusions hold
    \begin{equation*}
        \mathrm{LOSR}^\circ \subsetneq \mathrm{LOSR}^\bullet \subsetneq \mathrm{NCW}|_{\mathrm{Bell}}.
    \end{equation*}
\end{corollary}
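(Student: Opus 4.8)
The plan is to prove the chain of inclusions one link at a time, using only (i) that $\mathrm{LOSR}^\bullet$ was \emph{defined} as the convex hull of $\mathrm{LOSR}^\circ$, (ii) the non-convexity of $\mathrm{LOSR}^\circ$ recorded in Ref.~\cite{wolfe2020quantifying}, (iii) the earlier observation that $\mathrm{LOSR}^\circ\subseteq\mathrm{NCW}|_{\mathrm{Bell}}$, and (iv) the convexity result Theorem~\ref{Theorem: NCW type-dependent convexity} and its Corollary. Strictness of each inclusion will be pulled from facts already in hand: the non-convexity of $\mathrm{LOSR}^\circ$ for the first link, and Eq.~\eqref{eq: LOSR different than NCW} for the second.

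For the first link $\mathrm{LOSR}^\circ \subsetneq \mathrm{LOSR}^\bullet$, containment is immediate since $\mathrm{LOSR}^\bullet = \mathrm{conv}(\mathrm{LOSR}^\circ) \supseteq \mathrm{LOSR}^\circ$. For strictness I would invoke the fact that $\mathrm{LOSR}^\circ(\Upsilon_{\mathrm{in}}^{\mathrm{Bell}}\to\Upsilon_{\mathrm{out}}^{\mathrm{Bell}})$ is not convex: the explicit convex combination of two free operations constructed in Ref.~\cite[App.~A]{wolfe2020quantifying} — the one producing a perfect $x$–$b'$ correlation that cannot be mediated through a now-deterministic $y$ — is a point of $\mathrm{LOSR}^\bullet$ that is not in $\mathrm{LOSR}^\circ$. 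Hence the hull is strictly larger.

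For the second link $\mathrm{LOSR}^\bullet \subseteq \mathrm{NCW}|_{\mathrm{Bell}}$ I would argue type by type. Fix Bell-isomorphic scenarios $\Upsilon_{\mathrm{in}}^{\mathrm{Bell}},\Upsilon_{\mathrm{out}}^{\mathrm{Bell}}$. We already know from the discussion of Eq.~\eqref{eq: LOSR^circ} that every $\mathrm{LOSR}^\circ$ operation of this type is a particular noncontextual wiring (one in which $g = g_{\mathrm A}\cup g_{\mathrm B}$), so $\mathrm{LOSR}^\circ(\Upsilon_{\mathrm{in}}^{\mathrm{Bell}}\to\Upsilon_{\mathrm{out}}^{\mathrm{Bell}}) \subseteq \mathrm{NCW}(\Upsilon_{\mathrm{in}}^{\mathrm{Bell}}\to\Upsilon_{\mathrm{out}}^{\mathrm{Bell}})$. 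By the Corollary to Theorem~\ref{Theorem: NCW type-dependent convexity}, the right-hand set is convex, hence it contains the convex hull of its subset $\mathrm{LOSR}^\circ$, which is exactly $\mathrm{LOSR}^\bullet$ of that type; taking the union over all Bell-isomorphic types gives $\mathrm{LOSR}^\bullet \subseteq \mathrm{NCW}|_{\mathrm{Bell}}$. Alternatively, and more directly, the rewriting Eq.~\eqref{eq: novel noncontextual wiring} exhibits a generic noncontextual wiring with a \emph{single} source $p(\lambda)$ shared between pre- and post-processing, which is precisely the operational feature characterizing $\mathrm{LOSR}^\bullet$ (Fig.~\ref{fig: LOSR^bullet}); so any $\mathrm{LOSR}^\bullet$ operation is literally of the form Eq.~\eqref{eq: novel noncontextual wiring}.

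Strictness of the second link is where the only genuine content lies, and I would obtain it by combining the containment just proved with Eq.~\eqref{eq: LOSR different than NCW}, i.e. $\mathrm{LOSR}^\bullet \neq \mathrm{NCW}|_{\mathrm{Bell}}$. To make that inequality fully concrete one should exhibit an NCW of Bell type whose $g$ uses cross-wing classical feedforward — e.g. $g_{\mathrm A}(a\mid x,y,x',y')$ depending on Bob's input and outcome — and verify that its action cannot be reproduced by any $\mathrm{LOSR}^\bullet$ operation, whose wings are forbidden such nonlocal feedforward; a convenient testbed is the $(2,2,2,2)$ Bell scenario. The main obstacle is exactly this separation step: showing that the nonlocal feedforward yields an operation genuinely outside the convex hull of $\mathrm{LOSR}^\circ$ rather than one that merely looks different. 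Since Eq.~\eqref{eq: LOSR different than NCW} already asserts the strict inequality, it suffices to cite it and conclude $\mathrm{LOSR}^\bullet \subsetneq \mathrm{NCW}|_{\mathrm{Bell}}$, which together with the first link completes the proof.
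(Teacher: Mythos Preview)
Your proposal is correct and follows essentially the same approach as the paper: the corollary is stated there without a separate proof, as it is meant to follow from the surrounding discussion, and that discussion is precisely what you have reconstructed --- $\mathrm{LOSR}^\circ\subsetneq\mathrm{LOSR}^\bullet$ from the non-convexity argument of Ref.~\cite{wolfe2020quantifying}, the inclusion $\mathrm{LOSR}^\bullet\subseteq\mathrm{NCW}|_{\mathrm{Bell}}$ by taking convex hulls inside the convex set furnished by Theorem~\ref{Theorem: NCW type-dependent convexity} (or equivalently via Eq.~\eqref{eq: novel noncontextual wiring}), and strictness from Eq.~\eqref{eq: LOSR different than NCW}. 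Your caveat that the separation $\mathrm{LOSR}^\bullet\neq\mathrm{NCW}|_{\mathrm{Bell}}$ is asserted rather than fully worked out via an explicit witness is fair and applies equally to the paper itself.
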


One result that will be instrumental for us, and that can be viewed as a corollary from the convexity of free operations, is the following.
\begin{corollary}\label{corollary: convexity}
    Let $B' \in \mathrm{NC}(\Upsilon)$, and $\alpha \in [0,1]$. For any resourceful behavior $B$ there exists a noncontextual wiring $\mathcal{W}$ such that 
    \begin{equation*}
        \mathcal{W}(B) = \alpha B + (1-\alpha)B'.
    \end{equation*}
\end{corollary}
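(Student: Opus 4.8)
The plan is to realize the convex combination $\alpha B + (1-\alpha)B'$ as the output of a noncontextual wiring applied to $B$, by using a trivial pre-processing that flips a biased classical coin, and then branching between two elementary wirings depending on the coin outcome. Concretely, I would take the pre-processing scenario $\Upsilon_{PRE}$ to be (isomorphic to) $\Upsilon$ itself, and let the pre-processing behavior $\{p_\beta(\mathbf{r})\}$ be a noncontextual behavior whose only role is to output a flag: it ignores its input context and outputs a bit $\Lambda \in \{1,2\}$ with $p(\Lambda=1)=\alpha$, $p(\Lambda=2)=1-\alpha$, together with a faithful copy of the chosen context $\beta$ so that $f$ can be the identity on that component. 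This is plainly a factorizable noncontextual behavior in the sense of Eq.~\eqref{eq: noncontextual factorizable}, with $\lambda$ carrying the coin value.

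Next I would define the adaptive post-processing. For the branch $\Lambda=1$, I would use the post-processing that simply passes $B$'s outcomes through unchanged (the identity wiring on $\Upsilon$), so that conditioned on $\Lambda=1$ the output behavior is exactly $B$. For the branch $\Lambda=2$, I would use the post-processing of Lemma~\ref{Lemma: noncontextual wiring always returning the same box}: discard the outcomes $\mathbf{s}$ coming from $B$ and output the fixed noncontextual behavior $B' \in \mathrm{NC}(\Upsilon)$. Both branch post-processings are factorizable noncontextual behaviors, and the controlled-choice between them, indexed by the flag $\Lambda$ transmitted via the side channel, is again noncontextual — exactly the situation covered by the convexity construction in Sec.~\ref{sec: convexity result} and by Ref.~\cite{abramsky2019comonadic}. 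The function $g(\cdot|\mathbf{r},\beta)$ factorizes coordinate-wise in each branch (identity on each measurement in branch 1; constant in branch 2), so condition 3 of Def.~\ref{def: noncontextual wirings} holds, and Eq.~\eqref{eq: adaptively respecting factorizability} is satisfied in each branch with the appropriate hidden variable.

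Putting these pieces into Eq.~\eqref{eq: Noncontextual wiring behavior}, the wiring output factors as a sum over the flag value: the $\Lambda=1$ term contributes $\alpha \, p_\gamma(\mathbf{s})$, i.e. $\alpha B$, and the $\Lambda=2$ term contributes $(1-\alpha)\,p'_\gamma(\mathbf{s})$, i.e. $(1-\alpha)B'$, since the outcomes of $B$ have been discarded and replaced. Hence $\mathcal{W}(B) = \alpha B + (1-\alpha)B'$, as claimed. Alternatively — and this is the cleaner route given what is already proved — I would simply invoke Theorem~\ref{Theorem: NCW type-dependent convexity}: by Lemma~\ref{Lemma: noncontextual wiring always returning the same box} there is a wiring $\mathcal{W}_2$ with $\mathcal{W}_2(B)=B'$, and the identity wiring $\mathcal{W}_1$ gives $\mathcal{W}_1(B)=B$; convexity of $\mathrm{NCW}(\Upsilon\to\Upsilon)$ then yields a wiring $\mathcal{W}=\alpha\mathcal{W}_1+(1-\alpha)\mathcal{W}_2$ with the desired action.

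The main obstacle is a bookkeeping one rather than a conceptual one: one must check that the type is genuinely preserved, i.e. that the composite wiring is an element of $\mathrm{NCW}(\Upsilon \to \Upsilon)$ and not of some wiring type with an enlarged output scenario. In the branch-2 post-processing the output lives in a scenario isomorphic to $\Upsilon$, and in branch 1 it is literally $\Upsilon$; the controlled-choice glues these into an $\Upsilon_{POS}$ that is, up to relabelling, again $\Upsilon$, so $\Upsilon_W = (\mathcal{M}_{PRE},\mathcal{C}_{PRE},\mathcal{O}_{POS})$ is isomorphic to $\Upsilon$. I would state this identification explicitly but not belabour it. A secondary point to verify is that requiring $B'$ to be \emph{noncontextual} (rather than merely nondisturbing) is exactly what makes the branch-2 post-processing a legal free operation; this is why the statement restricts $B'$ to $\mathrm{NC}(\Upsilon)$, and no strengthening is possible without leaving the class of noncontextual wirings.
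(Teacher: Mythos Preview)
Your proposal is correct, and your ``cleaner route'' is exactly the paper's proof: invoke Lemma~\ref{Lemma: noncontextual wiring always returning the same box} to get $\mathcal{W}_2$ with $\mathcal{W}_2(B)=B'$, take $\mathcal{W}_1=\mathcal{W}_{\mathrm{id}}$, and apply Theorem~\ref{Theorem: NCW type-dependent convexity} to form $\alpha\mathcal{W}_1+(1-\alpha)\mathcal{W}_2$. Your more explicit coin-flip construction is just the specialization of the proof of Theorem~\ref{Theorem: NCW type-dependent convexity} to this pair of wirings, so it is not a genuinely different argument.
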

\begin{proof}
    Consider the noncontextual wiring $\mathcal{W}_{B'}(B) = B'$ returning the noncontextual behavior $B'$ for any $B$ (as in Lemma~\ref{Lemma: noncontextual wiring always returning the same box}), and the noncontextual wiring that does nothing to the behaviors, i.e., $\mathcal{W}_{\mathrm{id}}(B)=B$. Now, consider the convex combination of these wirings via $\alpha$, that from Theorem~\ref{Theorem: NCW type-dependent convexity} describes a new noncontextual wiring. This wiring will be $\mathcal{W}$ that we wanted.
\end{proof}

Now, recalling the discussion on deterministic behaviors from Sec.~\ref{sec: noncontextual deterministic}, the convexity of our set of operations can be immediately shown to have more structure. If we recall that noncontextual behaviors can always be described, without loss of generality, as $\sum_\lambda p(\lambda) \prod_{\beta_i\in\beta} d_{\beta_i}(r_i|\lambda)$ and that we can write any noncontextual wiring acting on a behavior $\{p_\gamma(\mathbf{r})\}$ using Eq.~\eqref{eq: novel noncontextual wiring} we conclude that the action of any noncontextual wiring can be described as the convex combination of the action of some finite set of deterministic noncontextual wirings. The finite set of deterministic noncontextual wirings is determined from the size of $\lambda$. The set of all convex combinations of finitely many points precisely defines convex polytopes.

\begin{theorem}
    The free operations $\mathrm{NCW}(\Upsilon_{\mathrm{in}}\to\Upsilon_{\mathrm{out}})$ form a convex polytope.
\end{theorem}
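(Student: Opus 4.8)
The plan is to identify $\mathrm{NCW}(\Upsilon_{\mathrm{in}}\to\Upsilon_{\mathrm{out}})$ with the convex hull of a \emph{finite} set of points, namely the (actions of) deterministic noncontextual wirings of that type. Together with Theorem~\ref{Theorem: NCW type-dependent convexity}, which already tells us the set is convex, this immediately yields that it is a convex polytope.

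The first step is to decompose an arbitrary $\mathcal{W}\in\mathrm{NCW}(\Upsilon_{\mathrm{in}}\to\Upsilon_{\mathrm{out}})$ into deterministic pieces. By Eq.~\eqref{eq: novel noncontextual wiring}, $\mathcal{W}$ acts on any $B=\{p_\gamma(\mathbf{s})\}$ via a single common source $p(\lambda)$ shared between pre- and post-processing. Applying the standard reduction recalled in Sec.~\ref{sec: noncontextual deterministic} — a noncontextual behavior always admits a model whose response functions take values in $\{0,1\}$ — we may, enlarging $\lambda$ if necessary to absorb all residual randomness, assume that $p_{\beta_i}(\cdot|\lambda)$ and $p_{\delta_i}(\cdot|\lambda)$ are deterministic for each fixed $\lambda$. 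Then for each value $\lambda$ the summand of $\sum_\lambda p(\lambda)(\cdots)$ is exactly $\mathcal{W}_\lambda(B)$, where $\mathcal{W}_\lambda$ is the deterministic noncontextual wiring of type $\Upsilon_{\mathrm{in}}\to\Upsilon_{\mathrm{out}}$ built from the deterministic pre-processing $d_\beta(\mathbf{r})=\prod_{\beta_i}d_{\beta_i}(r_i|\lambda)$, the same functions $f$ and $g$, and the deterministic post-processing $d_\delta(\mathbf{t})=\prod_{\delta_i}d_{\delta_i}(t_i|\lambda)$. Hence $\mathcal{W}=\sum_\lambda p(\lambda)\,\mathcal{W}_\lambda$ is a convex combination of deterministic noncontextual wirings; since each $\mathcal{W}_\lambda$ is itself an element of $\mathrm{NCW}(\Upsilon_{\mathrm{in}}\to\Upsilon_{\mathrm{out}})$ and this set is convex, it equals the convex hull of the set of deterministic noncontextual wirings of that type.

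The second step is to show this generating set is finite. Although the ancillary pre- and post-processing scenarios entering a deterministic wiring may be arbitrarily large, its \emph{action} on behaviors collapses: a deterministic pre-processing assigns to each $\beta\in\mathcal{C}_{\mathrm{out}}$ a unique outcome string and hence, through $f$, a single input context $\gamma(\beta)\in\mathcal{C}_{\mathrm{in}}$; and the deterministic post-processing composed with $g$ turns each $\mathbf{s}\in\mathcal{O}^{\gamma(\beta)}$ into a single output string $h_\beta(\mathbf{s})\in\mathcal{O}_{\mathrm{out}}^{\beta}$. Thus
\[
    [\mathcal{W}(B)]_\beta(\mathbf{t})=\sum_{\mathbf{s}\in\mathcal{O}^{\gamma(\beta)}:\,h_\beta(\mathbf{s})=\mathbf{t}}p_{\gamma(\beta)}(\mathbf{s}),
\]
so the transformation is completely determined by the function $\gamma:\mathcal{C}_{\mathrm{out}}\to\mathcal{C}_{\mathrm{in}}$ together with the family of functions $h_\beta:\mathcal{O}^{\gamma(\beta)}\to\mathcal{O}_{\mathrm{out}}^{\beta}$. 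Once the type $(\Upsilon_{\mathrm{in}},\Upsilon_{\mathrm{out}})$ is fixed, all of $\mathcal{C}_{\mathrm{in}}$, $\mathcal{C}_{\mathrm{out}}$, $\mathcal{O}^{\gamma}$ and $\mathcal{O}_{\mathrm{out}}^{\beta}$ are finite, so there are only finitely many such tuples, i.e.\ finitely many distinct deterministic actions. Combining with the first step, $\mathrm{NCW}(\Upsilon_{\mathrm{in}}\to\Upsilon_{\mathrm{out}})$ is the convex hull of finitely many points inside the finite-dimensional space of transformations $\mathrm{ND}(\Upsilon_{\mathrm{in}})\to\mathrm{ND}(\Upsilon_{\mathrm{out}})$, hence a convex polytope.

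The main obstacle I expect is the second step: one must be careful that the pertinent notion here is equality of the \emph{induced maps} on behaviors — two formally distinct wirings, built from different ancillary scenarios, can realize the same transformation — and that fixing the type genuinely pins down every finite index set appearing in the displayed formula. The deterministic decomposition in the first step, by contrast, is essentially routine given Eq.~\eqref{eq: novel noncontextual wiring} and the already-established reduction of noncontextual behaviors to convex mixtures of deterministic ones.
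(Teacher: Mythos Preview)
Your proof is correct and follows essentially the same approach as the paper: use the common-source form of Eq.~\eqref{eq: novel noncontextual wiring} to decompose an arbitrary wiring as a convex mixture of deterministic noncontextual wirings, then observe that there are only finitely many of the latter for a fixed type. Your second step is in fact more careful than the paper's one-line claim that ``the finite set of deterministic noncontextual wirings is determined from the size of $\lambda$''---you correctly identify that what is finite is the set of \emph{induced actions}, pinned down by the data $(\gamma,\{h_\beta\})$ drawn from the fixed finite sets $\mathcal{C}_{\mathrm{in}}$, $\mathcal{C}_{\mathrm{out}}$, $\mathcal{O}^\gamma$, $\mathcal{O}_{\mathrm{out}}^\beta$, regardless of how large the ancillary scenarios happen to be.
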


The vertices of the convex polytope $\mathrm{NCW}(\Upsilon_{\mathrm{in}}\to\Upsilon_{\mathrm{out}})$ are given by deterministic noncontextual wirings. When restricted to Bell scenarios, because $\mathrm{LOSR}^\bullet$ is also a convex polytope, we have that the inclusion $\mathrm{LOSR}^\bullet \subsetneq \mathrm{NCW}|_{\mathrm{Bell}}$ when restricted to specific types of operations is a convex polytope inclusion.

\subsection{Global comparability properties}
\label{sec: global}

In this section, we investigate the properties from Def.~\ref{def: global properties}. We will prove the following Theorems.

\begin{theorem}\label{Theorem: existence of incomparable objects}
    The pre-order on objects induced by $\mathrm{NCW}$ is not a total pre-order, i.e., there exist pairs of behaviors that are incomparable under noncontextual wirings. 
\end{theorem}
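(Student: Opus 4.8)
My plan is to certify incomparability through a clash of two resource monotones: if $\mathsf{m}_1,\mathsf{m}_2$ are $\mathrm{NCW}$-monotones and $B_1,B_2$ satisfy $\mathsf{m}_1(B_1)>\mathsf{m}_1(B_2)$ while $\mathsf{m}_2(B_1)<\mathsf{m}_2(B_2)$, then $B_1\nrightarrow B_2$ (else $\mathsf{m}_2(B_2)\le\mathsf{m}_2(B_1)$) and $B_2\nrightarrow B_1$ (else $\mathsf{m}_1(B_1)\le\mathsf{m}_1(B_2)$), so $B_1\nleftrightarrow B_2$. The two monotones I would use are the \emph{yield} and \emph{cost} of the algebraic-maximum behavior $R_n$ of an $n$-cycle scenario $\Upsilon_n$ (the behavior saturating $I_k^{(n)}=n$): for any behavior $B$ set $\mathsf{y}(B)=\sup\{\alpha\in[0,1]:B\to\alpha R_n+(1-\alpha)F\}$ and $\mathsf{c}(B)=\inf\{\alpha\in[0,1]:\alpha R_n+(1-\alpha)F\to B\}$, with $F$ ranging over noncontextual behaviors of $\Upsilon_n$ and $\inf\emptyset=+\infty$. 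Transitivity of $\to$ makes both functions $\mathrm{NCW}$-monotones, and Corollary~\ref{corollary: convexity} guarantees that the convex mixtures appearing here are genuine behaviors, reachable from $R_n$; in particular $\mathsf{y}(R_n)=\mathsf{c}(R_n)=1$.

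Next I would record the sandwich $\mathsf{y}(B)\le\mathrm{CF}(B)\le\mathsf{c}(B)$, where $\mathrm{CF}$ is the contextual fraction. This uses that $\mathrm{CF}$ is itself an $\mathrm{NCW}$-monotone: every deterministic noncontextual wiring acts affinely on behaviors and, by Ref.~\cite{amaral2019resource}, maps $\mathrm{ND}\to\mathrm{ND}$ and $\mathrm{NC}\to\mathrm{NC}$, so a decomposition $B=\lambda B^{\mathrm{NC}}+(1-\lambda)B^{\mathrm{C}}$ is carried to one of the image with no larger contextual weight; a general wiring is a convex combination of deterministic ones by Theorem~\ref{Theorem: NCW type-dependent convexity} while $\mathrm{CF}$ is convex; finally $\mathrm{CF}(\alpha R_n+(1-\alpha)F)=\alpha$ since $\mathrm{CF}(R_n)=1$. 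The point of the sandwich is that it isolates the single fact that must be established — the existence of a behavior $B_2$ for which the theory is \emph{irreversible}, i.e.\ $\mathsf{y}(B_2)<\mathsf{c}(B_2)$ strictly. Given such a $B_2$, pick $\alpha_1$ with $\mathsf{y}(B_2)<\alpha_1<\mathsf{c}(B_2)$ and put $B_1:=\alpha_1 R_n+(1-\alpha_1)F_0$ for an arbitrary noncontextual $F_0$ (a behavior, by Corollary~\ref{corollary: convexity}); then $\mathsf{y}(B_1)=\mathsf{c}(B_1)=\alpha_1$, so $\mathsf{c}(B_1)=\alpha_1<\mathsf{c}(B_2)$ forbids $B_1\to B_2$ and $\mathsf{y}(B_1)=\alpha_1>\mathsf{y}(B_2)$ forbids $B_2\to B_1$, whence $B_1\nleftrightarrow B_2$.

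It remains to produce an explicit $B_2\in\Upsilon_n$ with a strict cost--yield gap, and this is the step I expect to be the main obstacle. I would work inside an $n$-cycle scenario for small $n$ (say $n=5$), where the noncontextual polytope is cut out by the known facets $I_k^{(n)}(B)\le n-2$ and a behavior with uniform single-outcome marginals is determined by its two-point correlators $\langle x_ix_{i+1}\rangle$, and choose $B_2$ to be a contextual behavior whose correlator profile is ``spread out'' rather than aligned with any relabelling of an $R_n$-to-free segment — for instance one with all correlators of equal magnitude and a single sign frustration. To bound $\mathsf{c}(B_2)$ strictly above $\mathrm{CF}(B_2)$ I would use Theorem~\ref{Theorem: NCW type-dependent convexity} to reduce any wiring $\alpha R_n+(1-\alpha)F\to B_2$ to a convex combination of deterministic wirings, whose action on the correlator vector is explicit (each output correlator is, up to sign and possible erasure, one input correlator), and argue that reproducing $B_2$'s profile from the very degenerate correlator vector of $R_n$ plus free noise forces $\alpha>\mathrm{CF}(B_2)$; symmetrically, to bound $\mathsf{y}(B_2)$ strictly below $\mathrm{CF}(B_2)$ I would show that no wiring of $B_2$ can concentrate its contextuality into a full-strength $R_n$ component of weight $\mathrm{CF}(B_2)$. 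Carrying out these two polytope/linear-programming estimates rigorously — essentially pinning down enough of the $\mathrm{NCW}$ pre-order restricted to $\Upsilon_5$ to certify one genuine irreversibility — is the technical heart of the argument; the rest is the soft monotone bookkeeping above.
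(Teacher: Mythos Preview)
Your high-level strategy --- certify incomparability by exhibiting two monotones that order a pair of behaviors oppositely --- is exactly what the paper does, and your ``soft monotone bookkeeping'' is correct. The genuine gap is that you never actually produce the irreversible behavior $B_2$: you identify the need for a strict cost--yield gap, outline an LP-style attack on deterministic wirings in $\Upsilon_5$, and then stop, openly labeling this the technical heart left undone. As written this is a plan, not a proof; the sketched combinatorial analysis of how deterministic wirings act on correlator vectors is also more delicate than you suggest (NCW allows adaptivity and the image of a single correlator need not simply be ``$\pm$ one input correlator or zero''), so it is not clear the outlined route closes without substantial further work.

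The paper sidesteps this difficulty entirely by a different choice of parametrization and of cost monotone. Rather than your one-parameter segment $\alpha R_n+(1-\alpha)F$, it uses the \emph{two}-parameter family $B(\alpha,\varepsilon)=\alpha\tilde B+(1-\alpha)B_\varepsilon$, where $\tilde B\in\partial\mathrm{B}_n$ is an extremal noncontextual behavior saturating $I_k^{(n)}=n-2$ and $B_\varepsilon\in\mathrm{P}_k^{(n)}$ lies on the segment from the facet to the contextual ND-vertex. The cost monotone $\mathsf{c}_k$ is defined by minimizing over $\mathrm{P}_k^{(n)}$ (not over all mixtures $\alpha R_n+(1-\alpha)F$), and Lemma~\ref{Lemma: cost and yield for B(e,a)} then gives closed forms: $\mathsf{c}_k(B(\alpha,\varepsilon))=n+2(\varepsilon-1)$ depends only on $\varepsilon$, while $\mathsf{y}_k(B(\alpha,\varepsilon))=n-2+2\varepsilon(1-\alpha)$ depends on both. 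Because the two monotones are now functionally independent, finding an opposing pair is arithmetic: the paper takes $B_1=B(1/4,1/4)$ and $B_2=B(3/4,1/2)$ and checks $\mathsf{c}_k(B_1)<\mathsf{c}_k(B_2)$ while $\mathsf{y}_k(B_1)>\mathsf{y}_k(B_2)$. The missing idea in your proposal is precisely this decoupling: introduce the second degree of freedom $\alpha$ (mixing with a free behavior \emph{on the facet}, not with the maximally mixed $B_\varnothing$) so that the cost is blind to it while the yield is not, and the irreversibility you were hoping to extract by LP comes for free.
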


\begin{theorem}\label{Theorem: weakness}
    The pre-order on objects induced by $\mathrm{NCW}$ is not weak, i.e., there exists triplets of distinct objects $B_1, B_2, B_3$ such that $B_1 \nleftrightarrow B_2, B_2 \nleftrightarrow B_3$ while $B_1$ and $B_3$ are comparable.
\end{theorem}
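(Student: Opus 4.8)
\emph{Proof plan.} The goal is to exhibit three distinct behaviors $B_1,B_2,B_3$ living in $n$-cycle scenarios with $B_3\to B_1$ (so $B_1$ and $B_3$ are comparable) while $B_2$ is incomparable to each of them. The comparable pair will cost nothing. I fix $B_3$ to be a behavior of $\Upsilon_n$ saturating the algebraic maximum of the inequality Eq.~\eqref{eq: noncontextuality inequalities} for some facet $k$, i.e. $I_k^{(n)}(B_3)=n$, and I set $B_1:=\alpha\,B_3+(1-\alpha)\,B_{\mathrm{NC}}$ with $\alpha\in(0,1)$ and $B_{\mathrm{NC}}\in\mathrm{NC}(\Upsilon_n)$ a noncontextual behavior on that facet, $I_k^{(n)}(B_{\mathrm{NC}})=n-2$. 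By Corollary~\ref{corollary: convexity}, $B_3\to B_1$; and $B_1\neq B_3$ since $\alpha<1$ and $B_{\mathrm{NC}}\neq B_3$. With this choice one of the four required incomparability facts is automatic: if $B_1\to B_2$ then $B_3\to B_1\to B_2$, and transitivity of the pre-order forces $B_3\to B_2$; hence it suffices to secure $B_3\nrightarrow B_2$, after which $B_1\nrightarrow B_2$ follows for free.

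The remaining three facts — $B_3\nrightarrow B_2$, $B_2\nrightarrow B_3$, $B_2\nrightarrow B_1$ — I would establish with resource monotones of yield type (Sec.~\ref{subsec: resource monotones}). For each integer $m\ge 3$ and facet label $k'$ put
\[
 \mathsf{Y}_{m,k'}(B)\;:=\;\sup\bigl\{\,I^{(m)}_{k'}\bigl(\mathcal{W}(B)\bigr)\;:\;\mathcal{W}\in\mathrm{NCW}\bigl(\Upsilon_B\to\Upsilon_m\bigr)\,\bigr\},
\]
with $\Upsilon_B$ the scenario of $B$. Because noncontextual wirings compose (so $\to$ is transitive) and $I^{(m)}_{k'}$ is an affine functional of the behavior, each $\mathsf{Y}_{m,k'}$ is a resource monotone: $A\to B\Rightarrow\mathsf{Y}_{m,k'}(B)\le\mathsf{Y}_{m,k'}(A)$. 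The identity wiring moreover gives the lower bound $\mathsf{Y}_{n,k}(B_1)\ge I_k^{(n)}(B_1)=\alpha n+(1-\alpha)(n-2)=n-2(1-\alpha)$. Now take $B_2$ to be, say, a behavior attaining the algebraic maximum of a \emph{different} cycle $\Upsilon_m$, $m\neq n$, for which one proves
\[
 \mathsf{Y}_{n,k}(B_2)<n \qquad\text{and}\qquad \mathsf{Y}_{m,k'}(B_2)=m>\mathsf{Y}_{m,k'}(B_3).
\]
The first inequality gives $B_2\nrightarrow B_3$ (otherwise $n=\mathsf{Y}_{n,k}(B_3)\le\mathsf{Y}_{n,k}(B_2)<n$), and, choosing $\alpha$ close enough to $1$ that $n-2(1-\alpha)>\mathsf{Y}_{n,k}(B_2)$, also $B_2\nrightarrow B_1$ (otherwise $\mathsf{Y}_{n,k}(B_1)\le\mathsf{Y}_{n,k}(B_2)$ contradicts the lower bound). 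The second inequality gives $B_3\nrightarrow B_2$. Together with $B_1\nrightarrow B_2$ from the first paragraph and $B_2$, $B_1$, $B_3$ pairwise distinct, this yields $B_1\nleftrightarrow B_2\nleftrightarrow B_3$ with $B_3\to B_1$, so the pre-order is not weak.

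The main obstacle is the two strict bounds on $\mathsf{Y}_{\bullet,\bullet}(B_2)$: the assertion that a maximally contextual behavior of one cycle cannot be converted by noncontextual wirings into a maximally contextual behavior of a different cycle, in either direction. This is precisely the kind of statement proved in Theorem~\ref{Theorem: existence of incomparable objects}, and in the $n$-cycle setting it is tractable: by the polytope form of a noncontextual wiring in Eq.~\eqref{eq: novel noncontextual wiring}, the supremum defining $\mathsf{Y}_{m,k'}$ is attained on the finitely many deterministic noncontextual wirings, and the known facet and vertex structure of $\mathrm{NC}(\Upsilon_n)$ and $\mathrm{ND}(\Upsilon_n)$~\cite{araujo2013all} turns each such optimisation into a finite combinatorial problem; in particular one only has to rule out reaching the single algebraic-maximum vertex of the target cycle. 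Everything else — the convexity input (Corollary~\ref{corollary: convexity}), monotonicity of $\mathsf{Y}_{m,k'}$, and the identity-wiring bound that lets the incomparability of $(B_2,B_3)$ be inherited by $(B_2,B_1)$ after the noncontextual dilution — is routine. (If one prefers to keep all three behaviors inside a single $\Upsilon_n$, one may instead take any incomparable pair $(B_2,B_3)$ furnished by Theorem~\ref{Theorem: existence of incomparable objects} in which $B_3$ directly saturates the monotone separating it from $B_2$, and run the same dilution argument.)
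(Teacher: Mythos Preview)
Your main argument has a genuine gap. The crux is the pair of strict inequalities $\mathsf{Y}_{n,k}(B_2)<n$ and $\mathsf{Y}_{m,k'}(B_3)<m$, i.e.\ that the algebraic-maximum behavior of one cycle cannot be wired into the algebraic-maximum behavior of a different cycle. You assert that ``this is precisely the kind of statement proved in Theorem~\ref{Theorem: existence of incomparable objects}'', but it is not: that theorem is proved in the paper entirely \emph{inside a single} $\Upsilon_n$, for two specific non-extremal behaviors $B(1/4,1/4)$ and $B(3/4,1/2)$, using the cost and yield monotones of Lemmas~\ref{Lemma: yield}--\ref{Lemma: cost and yield for B(e,a)}. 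Nothing there says anything about cross-cycle conversions. Saying the required check is ``tractable'' because deterministic wirings are finite is not a proof; you would have to actually carry out the check, and it is not obvious a priori that no NCW from $\Upsilon_m$ to $\Upsilon_n$ reaches the contextual vertex.

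Your parenthetical alternative, by contrast, is essentially the paper's route and does work once made concrete. The paper stays in one $\Upsilon_n$, takes the incomparable pair $B_1=B(1/4,1/4)$, $B_2=B(3/4,1/2)$ from Theorem~\ref{Theorem: existence of incomparable objects}, and introduces $B_3=B(1/8,1/4)$: same $\varepsilon$ as $B_1$, smaller $\alpha$, so $B_1$ is a convex mixture of $B_3$ with the free $\tilde B$ and Corollary~\ref{corollary: convexity} gives $B_3\to B_1$. Then it simply recomputes $\mathsf{c}_k$ and $\mathsf{y}_k$ from Lemma~\ref{Lemma: cost and yield for B(e,a)} to witness $B_2\nleftrightarrow B_3$. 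Your transitivity shortcut ($B_1\to B_2\Rightarrow B_3\to B_2$) is a nice touch the paper does not need because it has both monotones available; conversely, the paper avoids any asymptotic ``$\alpha$ close enough to $1$'' argument by just plugging in explicit numbers. If you drop the cross-cycle detour and flesh out the parenthetical with these concrete values, you recover exactly the paper's proof.
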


\begin{theorem}\label{Theorem: height and width}
    The height and the width of the pre-order on objects induced by $\mathrm{NCW}$ are both (uncountably) infinite.
\end{theorem}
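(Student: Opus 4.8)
The plan is to prove Theorem~\ref{Theorem: height and width} by exhibiting, in the $n$-cycle scenarios $\Upsilon_n$, both an infinite chain and an uncountable antichain of behaviors, using a suitable pair of resource monotones to certify the required (non)convertibility relations. Since the excerpt announces that cost and yield monotones will be used, the natural strategy is: (i) for any facet $k$ of $\mathrm{NC}(\Upsilon_n)$, the functional $I_k^{(n)}$ of Eq.~\eqref{eq: noncontextuality inequalities} is not itself a monotone, but one can build from it monotones that are non-increasing under NCW. For instance, a \emph{yield}-type monotone $\mathsf{m}_k(B) := \max\{ I_k^{(n)}(\mathcal{W}(B)) - (n-2) : \mathcal{W}\in \mathrm{NCW}(\Upsilon_n\to\Upsilon_n)\}_{+}$ (truncated at $0$), and dually a \emph{cost}-type monotone, are order-preserving by construction of the pre-order. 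The key input we need is a \emph{lower bound}: that NCW operations cannot increase the contextuality as measured in an appropriate normalized way, so that a behavior sitting deeper inside the contextual region (closer to the algebraic maximum $I_k^{(n)}(B)=n$) cannot be reached from a shallower one.

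First I would set up the chain. Fix $n\ge 3$ and a facet $k$, and consider the one-parameter family $B_t$, $t\in(0,1]$, obtained as noisy mixtures $B_t = t\, B^{\star}_k + (1-t) B_{\mathrm{NC}}$, where $B^{\star}_k$ is the behavior achieving $I_k^{(n)}=n$ and $B_{\mathrm{NC}}\in\mathrm{NC}(\Upsilon_n)$ is a fixed noncontextual point (say the uniform behavior). By Corollary~\ref{corollary: convexity}, $B_s \to B_t$ whenever $s\ge t$, since one can damp a resourceful behavior toward any noncontextual behavior by an NCW operation; this gives $B_1 \to B_{1/2} \to B_{1/4} \to \cdots$, an infinite descending chain. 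To show these are pairwise \emph{inequivalent} — so the chain is genuinely of infinite height, and not collapsed by the equivalence relation — I would invoke a monotone that strictly decreases along the chain: any monotone $\mathsf{m}$ with $\mathsf{m}(B_t)$ strictly monotonic in $t$ does the job. The cleanest candidate is the consistent/contextual fraction restricted to this scenario, or simply $I_k^{(n)}$ \emph{after} verifying it is NCW-monotone on the slice of behaviors of the fixed type $\Upsilon_n\to\Upsilon_n$; because for $n$-cycle scenarios there is a one-to-one correspondence between behaviors and correlations and the facet structure is fully known, such verification reduces to a finite linear-programming check over the vertices (deterministic NCW operations) given by Theorem~\ref{Theorem: NCW type-dependent convexity} and its polytope refinement.

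Next I would build the uncountable antichain, which establishes infinite width. The idea is to pick, for a fixed $n$, behaviors $B^{(k)}_t$ that violate \emph{different} facets $k$ — recall from the excerpt that each contextual behavior violates a \emph{unique} facet $k$. Two behaviors violating distinct facets $k\neq k'$ with comparable "depth" should be incomparable: an NCW operation of type $\Upsilon_n\to\Upsilon_n$ maps the contextual region in a way that, by the uniqueness of the violated facet and by monotonicity of $I_k^{(n)}$ and $I_{k'}^{(n)}$ separately, cannot simultaneously increase violation of $k'$ from a behavior that only violates $k$. To get \emph{uncountably many} mutually incomparable behaviors I would go to a single family that already lives in a continuum: e.g., in a large enough $n$ (or by embedding several $n$-cycle slices via direct-sum / disjoint-union of scenarios), consider a curve of behaviors parametrized by a real angle where each point maximally violates a distinct direction, arranged so that for any two distinct parameters $\theta\neq\theta'$ there is a monotone $\mathsf{m}_\theta$ with $\mathsf{m}_\theta(B_\theta)>\mathsf{m}_\theta(B_{\theta'})$ and also $\mathsf{m}_{\theta'}(B_{\theta'})>\mathsf{m}_{\theta'}(B_\theta)$, which by the monotone property forbids both $B_\theta\to B_{\theta'}$ and $B_{\theta'}\to B_\theta$. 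The quantum/algebraic boundary curves of $\mathrm{ND}(\Upsilon_n)$ furnish exactly such an uncountable family.

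The main obstacle I expect is not the chain — that follows fairly directly from Corollary~\ref{corollary: convexity} plus one strictly-decreasing monotone — but rather establishing genuine \emph{incomparability} for the antichain, i.e., ruling out NCW convertibility in \emph{both} directions between behaviors violating different facets. The subtlety is that NCW operations can relabel measurements and outcomes (the functions $f$, $g$, and the symmetry $\gamma\mapsto f(\mathbf{r})$), so a behavior violating facet $k$ can, under relabeling, be made to violate facet $k'$; hence "different facet" alone is \emph{not} sufficient for incomparability. The fix is to choose the family within a single symmetry orbit's worth of "depth levels" so that the relabelings are already accounted for, and then use a monotone that is \emph{invariant} under the relabeling group of $\Upsilon_n$ but still separates the chosen behaviors — for instance the maximal violation over \emph{all} facets, $\max_k (I_k^{(n)}(B)-(n-2))$, together with a second, "shape"-sensitive monotone. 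Carefully constructing two such monotones that jointly certify the uncountable antichain, and checking they are NCW-monotone via the polytope-of-operations characterization, is the technical heart of the argument; everything else is bookkeeping with the known $n$-cycle polytope data from Ref.~\cite{araujo2013all}.
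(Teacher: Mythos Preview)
Your height argument is essentially the paper's: take the one-parameter family of mixtures between a contextual vertex and a noncontextual behavior, use Corollary~\ref{corollary: convexity} to get convertibility in one direction, and a monotone strictly varying along the family to rule out the reverse direction. The paper does exactly this using the cost monotone $\mathsf{c}_k$ on the family $\mathrm{P}_k^{(n)}$.

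Your width argument, however, has a genuine gap. Your first instinct --- behaviors violating distinct facets $k\neq k'$ --- can produce at most $2^{n-1}$ incomparable objects, not uncountably many, and you correctly note that relabelings make even this weaker than it looks. Your attempted fix (``a curve parametrized by a real angle where each point maximally violates a distinct direction'') does not work: there are only finitely many facet directions, and the vague appeal to ``shape-sensitive'' monotones invariant under the relabeling group is not a construction. The paper avoids all of this by staying inside a \emph{single} facet region and exploiting the two-parameter family $B(\alpha,\varepsilon)$ of Eq.~\eqref{eq: definition B(e,a)}, for which the cost and yield monotones are computed explicitly (Lemma~\ref{Lemma: cost and yield for B(e,a)}): $\mathsf{c}_k(B(\alpha,\varepsilon))=n+2(\varepsilon-1)$ and $\mathsf{y}_k(B(\alpha,\varepsilon))=n-2+2\varepsilon(1-\alpha)$. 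Along the diagonal segment $\{B(x,x):\tfrac12\le x\le 1\}$, $\mathsf{c}_k$ is strictly increasing while $\mathsf{y}_k$ is strictly decreasing in $x$, so the single pair $(\mathsf{c}_k,\mathsf{y}_k)$ witnesses incomparability of \emph{every} pair on this uncountable segment. The idea you are missing is that an uncountable antichain can live entirely within one facet's contextual region, certified by just two monotones varying oppositely along a curve --- no need for multiple facets, embeddings, or angle-indexed families of monotones.
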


\begin{theorem}\label{Theorem: local infiniteness}
    The pre-order described by $\mathrm{NCW}$ is locally infinite, i.e., there exists an interval $B_1 \to B \to B_2$ for which the cardinality of the set of all equivalence classes $[B] := \{B' : B \to B' \text{ and }B' \to B\}$ is (uncountably) infinite.
\end{theorem}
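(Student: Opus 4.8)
The plan is to work in the $n$-cycle scenarios $\Upsilon_n$, where (as recalled around Eq.~\eqref{eq: noncontextuality inequalities}) behaviors are in one-to-one correspondence with their sets of two-point correlators, and where the contextual fraction or, more conveniently, a cost/yield monotone built from the $n$-cycle inequality $I_k^{(n)}$ furnishes a real-valued resource monotone $\mathsf{m}$. The idea is to exhibit two behaviors $B_1, B_2$ with $B_1 \to B \to B_2$ for a whole continuum of intermediate $B$, and to argue that the interval $[B_1,B_2]$ (in the pre-order sense) contains infinitely many --- in fact uncountably many --- equivalence classes. The natural candidate for $B_1$ is a maximally contextual behavior, e.g.\ one saturating the algebraic maximum $I_k^{(n)}(B_1)=n$, and for $B_2$ any noncontextual behavior, say the one with all correlators zero; then $B_1 \to B_2$ by Lemma~\ref{Lemma: noncontextual wiring always returning the same box}.

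The key step is to produce the continuum of intermediate behaviors using Corollary~\ref{corollary: convexity}: for each $\alpha \in [0,1]$, taking $B' = B_2 \in \mathrm{NC}(\Upsilon_n)$, there is a noncontextual wiring sending $B_1$ to $B_\alpha := \alpha B_1 + (1-\alpha)B_2$. Hence $B_1 \to B_\alpha$ for every $\alpha$. Next I would check $B_\alpha \to B_2$: again this is Lemma~\ref{Lemma: noncontextual wiring always returning the same box} provided $B_\alpha$ is still outside $\mathrm{NC}(\Upsilon_n)$, which holds for $\alpha$ in some half-open interval $(\alpha_0,1]$ since the noncontextual polytope is closed and $B_1$ lies strictly outside it. So every $B_\alpha$ with $\alpha \in (\alpha_0,1]$ lies in the interval of $(B_1,B_2)$. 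It remains to show there are infinitely many equivalence classes among the $B_\alpha$: here I invoke the monotone $\mathsf{m}$, chosen so that $\mathsf{m}(B_\alpha)$ is a strictly increasing (in particular injective) function of $\alpha$ --- for instance $\mathsf{m}(B) = \max_k I_k^{(n)}(B) - (n-2)$ truncated at $0$, which on the segment $\{B_\alpha\}$ equals $\alpha\,(I_k^{(n)}(B_1)-(n-2)) = 2\alpha$ and is thus strictly monotincreasing in $\alpha$. If $B_\alpha \sim B_{\alpha'}$ then $B_\alpha \to B_{\alpha'}$ and $B_{\alpha'} \to B_\alpha$, so $\mathsf{m}(B_\alpha) = \mathsf{m}(B_{\alpha'})$, forcing $\alpha = \alpha'$. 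Therefore the $B_\alpha$, $\alpha \in (\alpha_0,1]$, are pairwise inequivalent, giving uncountably many equivalence classes in the interval and establishing local infiniteness.

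The main obstacle I anticipate is not the existence of the monotone --- the $n$-cycle inequalities hand it to us --- but making sure that the specific $B_1$ and $B_2$ I pick are genuinely \emph{inequivalent} (so that the statement "interval of a pair of \emph{inequivalent} objects" in Def.~\ref{def: global properties} applies), and that the monotone is well-defined as a map $\mathcal{U}_{\mathrm{ND}} \to \bar{\mathbb{R}}$ respecting $\to$ across the change of type implicit in a wiring $\Upsilon_n \to \Upsilon_n$; one should verify that $I_k^{(n)}$-based quantities are genuinely NCW-monotone (this follows from the fact that NCW preserves $\mathrm{NC}(\Upsilon_n)$ together with a standard argument that the relevant quantity is the "distance to the free set" along the facet, or cite the contextual fraction of Ref.~\cite{abramsky2017contextual} which is known to be a monotone). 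A minor point is to confirm that $B_1$ saturating $I_k^{(n)}(B_1)=n$ is actually realizable as a nondisturbing behavior in $\Upsilon_n$ (it is --- this is the PR-box analogue, mentioned at the end of Sec.~\ref{sec: contextuality}). Once these are in place, the argument is short, relying essentially only on Corollary~\ref{corollary: convexity}, Lemma~\ref{Lemma: noncontextual wiring always returning the same box}, and monotonicity.
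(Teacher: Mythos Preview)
Your proposal is correct and follows essentially the same route as the paper: both pick a line segment in $\Upsilon_n$ from the maximally contextual vertex toward a noncontextual behavior, use Corollary~\ref{corollary: convexity} to establish the chain $B_1 \to B_\alpha \to B_2$, and then invoke an $I_k^{(n)}$-based monotone (the paper uses the cost $\mathsf{c}_k$ on the family $\mathrm{P}_k^{(n)}$, whose noncontextual endpoint sits on the facet $I_k^{(n)}=n-2$ rather than at $B_\varnothing$) to separate the uncountably many intermediate points into distinct equivalence classes. One small slip: with your choice $B_2=B_\varnothing$ you have $I_k^{(n)}(B_\alpha)=\alpha n$, so $\mathsf{m}(B_\alpha)=\alpha n-(n-2)$ on the contextual part rather than $2\alpha$; this is still strictly increasing in $\alpha$, so the argument is unaffected.
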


The way we show these results is by focusing on specific constructions on the $n$-cycle scenarios and by using noncontextuality monotones. As already mentioned in the introduction, Theorems~\ref{Theorem: existence of incomparable objects}-\ref{Theorem: local infiniteness} were known to hold in the case of the bipartite Bell scenario introduced by Clauser, Horne, Shimony, and Holt (CHSH)~\cite{clauser1969proposed} and when the free operations are taken to be $\mathrm{LOSR}^\bullet$~\cite{wolfe2020quantifying}. In what follows we will generalize these findings to hold for any $n$-cycle compatibility scenario $\Upsilon_n$ and with respect to free operations taken to be NCW. 

The techniques we will employ are straightforward translations from the tools discussed in Ref.~\cite{wolfe2020quantifying} for (bipartite) Bell scenarios to the case of $n$-cycle compatibility scenarios. For a description of the monotones we consider for any abstract resource theory, we refer the reader to Ref.~\cite{gonda2019monotones}. See also Ref.~\cite[Appendix C]{yile2024conceptualformalgroundworkstudy} for some results on global comparability properties (specifically width and weakness) in general resource theories. We will start by introducing the relevant cost and yield resource monotones. Then, we use these constructions to prove each of the Theorems listed above.

\subsubsection{Yield and cost monotones for $n$-cycle compatibility scenarios}

To prove our results for this section we will use two resource monotones known as yield $\mathsf{y}_k$ and cost $\mathsf{c}_k$. These monotones are defined relative to specific (fixed) functionals, that we take here to be given by the functionals $I_k^{(n)}$ defining facet-defining inequalities of the noncontextual polytope for $n$-cycle scenarios, given by Eq.~\eqref{eq: noncontextuality inequalities}. Moreover, these monotones are also defined relative to the specific free operations in our resource theory. 

We define $\mathsf{y}_k: \mathrm{ND}(\Upsilon_n) \to \mathbb{R}$ to be the yield monotone defined over any $n$-cycle scenario $\Upsilon_n$, with $n\geq 3$, as
\begin{equation}
    \mathsf{y}_k(B) := \max_{B' \in \mathrm{ND}(\Upsilon_n)}\{I_k^{(n)}(B'): B \to B'\},
\end{equation}
    where $\{I_k^{(n)}\}_k$ are all facet-defining noncontextuality inequality-functionals of $\mathrm{NC}(\Upsilon_n)$, given by Eq.~\eqref{eq: noncontextuality inequalities}. Above, the arrow $B \to B'$ represents that there exists some $\mathcal{W} \in \mathrm{NCW}(\Upsilon_n \to \Upsilon_n)$, for any fixed $n\geq 3$, such that $B' = \mathcal{W}(B)$. The yield $\mathsf{y}_k$ gives the value of $I_k^{(n)}$ for the most resourceful behavior that can be freely obtained from the behavior $B$. Note also that for every $B$ there exists some $B'$ such that $B \to B'$, hence the value of this monotone is always bounded from below by $n-2$. For any $k$, and any noncontextual wiring $\mathcal{W}\in \mathrm{NCW}(\Upsilon_n \to \Upsilon_n)$, 
    
    \begin{align*}
        \mathsf{y}_k(\mathcal{W}(B)) &= \max_{B'}\{I_k^{(n)}(B'):\mathcal{W}(B) \to B'\}\\
        &=\max_{B'}\{I_k^{(n)}(B')|B' = \mathcal{W}'\circ \mathcal{W}(B)\text{ for some }\mathcal{W}'\}\\
        &\leq \max_{B'}\{I_k^{(n)}(B')|B' = \mathcal{W}'(B)\text{ for some }\mathcal{W}'\}\\
        &= \max_{B'}\{I_k^{(n)}(B')|B \to B'\}\\
        &=\mathsf{y}_k(B).
    \end{align*}
Where the inequality comes from the fact that, for the first optimization, the possible free operations must have the decomposition $\mathcal{W}'\circ \mathcal{W}$, with $\mathcal{W}$ fixed, while the second is left to be any free operation. These calculations show that $\mathsf{y}_k$ are indeed resource monotones.

\begin{lemma}[Yield monotone for $n$-cycle scenarios]\label{Lemma: yield}
 For all functionals $I_k^{(n)}$ and all $B \in \mathrm{NC}(\Upsilon_n)$, we have that  $$\mathsf{y}_k(B)=n-2.$$ Moreover, for any $B \in \mathrm{ND}(\Upsilon_n)\setminus \mathrm{NC}(\Upsilon_n)$ there exists some unique label $k^\star$ such that $$\mathsf{y}_{k^\star}(B) = I_{k^\star}^{(n)}(B).$$ 
\end{lemma}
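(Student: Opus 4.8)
The plan is to prove the two parts separately, exploiting the geometry of the $n$-cycle noncontextual polytope. For the first part, let $B \in \mathrm{NC}(\Upsilon_n)$. Since $\mathsf{y}_k(B)$ is a supremum of $I_k^{(n)}(B')$ over behaviors reachable from $B$, I first observe the lower bound $\mathsf{y}_k(B) \ge n-2$: take any noncontextual $B'$ achieving $I_k^{(n)}(B') = n-2$ (such a behavior sits on the relevant facet and exists since the inequality is facet-defining), and note that $B \to B'$ by Lemma~\ref{Lemma: noncontextual wiring always returning the same box} because $B$ is noncontextual and we may discard it and output $B'$ — wait, that lemma requires $B$ resourceful; instead I invoke the simpler fact that the post-processing which discards the outcomes of $B$ and replaces them with a fixed noncontextual behavior $B'$ is always a valid NCW, so $B \to B'$ for \emph{any} $B$ and any noncontextual $B'$. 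For the upper bound, since NCW operations preserve noncontextuality (stated after Def.~\ref{def: noncontextual wirings}), every $B'$ with $B \to B'$ is noncontextual, hence $I_k^{(n)}(B') \le n-2$ by Eq.~\eqref{eq: noncontextuality inequalities}. Combining, $\mathsf{y}_k(B) = n-2$.

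For the second part, let $B \in \mathrm{ND}(\Upsilon_n) \setminus \mathrm{NC}(\Upsilon_n)$ be contextual. I will use the structural fact quoted from Ref.~\cite{araujo2013all} that for every contextual behavior $B$ there is a \emph{unique} $k^\star$ with $I_{k^\star}^{(n)}(B) > n-2$. The claim is then $\mathsf{y}_{k^\star}(B) = I_{k^\star}^{(n)}(B)$. The inequality $\mathsf{y}_{k^\star}(B) \ge I_{k^\star}^{(n)}(B)$ is immediate since $B \to B$ (identity wiring) is admissible, so $B$ itself is a feasible point in the optimization. The reverse inequality $\mathsf{y}_{k^\star}(B) \le I_{k^\star}^{(n)}(B)$ is the substantive content: I must show no NCW operation $\mathcal{W}:\Upsilon_n \to \Upsilon_n$ can increase the value of the functional $I_{k^\star}^{(n)}$. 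The natural route is to show that $I_{k^\star}^{(n)}$ composed with NCW acting within $\Upsilon_n$ is itself a monotone under NCW — i.e., $I_{k^\star}^{(n)}(\mathcal{W}(B)) \le I_{k^\star}^{(n)}(B)$ for all $\mathcal{W} \in \mathrm{NCW}(\Upsilon_n \to \Upsilon_n)$ and all $B \in \mathrm{ND}(\Upsilon_n)$.

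To establish that $I_{k^\star}^{(n)}$ is itself NCW-monotone, I would use the convex-polytope structure from Theorem~\ref{Theorem: NCW type-dependent convexity} and the companion result that the vertices of $\mathrm{NCW}(\Upsilon_n \to \Upsilon_n)$ are deterministic noncontextual wirings: it suffices to verify $I_{k^\star}^{(n)}(\mathcal{W}_{\mathrm{det}}(B)) \le I_{k^\star}^{(n)}(B)$ for deterministic $\mathcal{W}_{\mathrm{det}}$, because $I_{k^\star}^{(n)}$ is affine in $B$ and $\mathcal{W} \mapsto \mathcal{W}(B)$ is affine in $\mathcal{W}$, so the general case follows by taking convex combinations and the maximum over vertices. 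For a deterministic wiring, the pre-processing picks, for each input context $\beta$, a fixed context $\gamma=f(\mathbf{r})$ of $\Upsilon_n$ (with $\mathbf{r}$ deterministic), the post-processing deterministically relabels outcomes, and the factorizability constraint (condition 3 of Def.~\ref{def: noncontextual wirings}) forces the outcome relabelling to act coordinate-wise on each measurement. In the $n$-cycle scenario this reduces the analysis to a finite bookkeeping over the $2n$ deterministic input assignments composed with local $\mathbb{Z}_2$ relabellings of each dichotomic outcome and the consistent re-wiring of contexts; one checks that each such operation maps the correlation vector $(\langle x_i x_{i+1}\rangle)$ to another valid correlation vector in a way that cannot increase $\sum_i a_i \langle x_i x_{i+1}\rangle$ beyond its original value — essentially because the deterministic pre-processing either keeps a correlation edge intact (possibly with a sign flip absorbable into the $a_i$'s, but any such flip changes $k^\star$, contradicting uniqueness unless the value does not exceed $n-2$) or collapses it to a noncontextual edge with $|\langle \cdot \rangle|$ bounded accordingly. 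The main obstacle I anticipate is exactly this last combinatorial verification: making rigorous that deterministic NCW within $\Upsilon_n$ cannot "rotate" a contextual correlation vector onto a different facet with a larger value, which requires carefully tracking how the allowed relabellings and context re-wirings interact with the parity condition ($a_i=-1$ an odd number of times) that labels the facets. Once that is in hand, $\mathsf{y}_{k^\star}(B) = \max\{I_{k^\star}^{(n)}(B') : B \to B'\} \le I_{k^\star}^{(n)}(B)$, and with the reverse inequality we get equality; uniqueness of $k^\star$ is inherited directly from the uniqueness statement for contextual behaviors in $\mathrm{ND}(\Upsilon_n)$.
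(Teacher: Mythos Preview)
Your treatment of the first claim is correct and matches the paper's argument (you just spell out both inequalities, whereas the paper only mentions the lower one via Lemma~\ref{Lemma: free equivalent} and leaves the upper bound implicit from the fact that NCW preserves $\mathrm{NC}$).

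For the second claim, your route diverges substantially from the paper's. The paper dispatches it in one line: ``trivial, following from the definition and from the fact that any contextual behavior in $\Upsilon_n$ violates one and only one noncontextuality inequality.'' That is, the authors take uniqueness of $k^\star$ directly from Ref.~\cite{araujo2013all} and treat the equality $\mathsf{y}_{k^\star}(B)=I_{k^\star}^{(n)}(B)$ as immediate from the definition of the yield. You instead single out the upper bound $\mathsf{y}_{k^\star}(B)\le I_{k^\star}^{(n)}(B)$ as the ``substantive content'' and propose to prove that $I_{k^\star}^{(n)}$ is itself an NCW monotone by reducing to deterministic wirings and carrying out a combinatorial case analysis on relabellings and context rewirings in $\Upsilon_n$.

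Your concern is legitimate --- nothing in the paper \emph{explicitly} excludes that some wiring could increase $I_{k^\star}^{(n)}$ --- but your proposed resolution is left genuinely unfinished: you yourself flag as the ``main obstacle'' the verification that deterministic NCW cannot rotate a contextual correlation vector onto a different facet with a larger value, and the sketch you give (sign flips ``absorbable into the $a_i$'s'' forcing a change of $k^\star$) is not yet an argument, since it does not handle rewirings that duplicate or collapse contexts. If you want to make your route work, the cleanest fix is to exploit the relabelling symmetry of the $n$-cycle: all the functionals $I_k^{(n)}$ lie in a single orbit under outcome sign-flips (which are free deterministic wirings), so $\mathsf{y}_k(B)$ is independent of $k$; this, combined with the uniqueness of the violated facet, reduces the question to whether NCW can increase $\max_k I_k^{(n)}$, which is a genuine contextuality monotone (equivalently, the contextual fraction on $\Upsilon_n$). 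That gives you the upper bound without the full combinatorial bookkeeping you were dreading.
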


\begin{proof}
    The second part is trivial, following from the definition and from the fact that any contextual behavior in $\Upsilon_n$, for any $n\geq 3$, violates one and only one noncontextuality inequality. The first part follows from the fact that all free resources are equivalent (see Lemma~\ref{Lemma: free equivalent}), implying that one can simply take $B'$ to be a noncontextual behavior for which $I_k^{(n)}(B') = n-2$. 
\end{proof}

We will also need another monotone, termed the cost monotone and denoted as  $\mathsf{c}_k$. However, to define this monotone precisely we will need to introduce new sets of behaviors. For that, we must recall some basic aspects of the polytopes $\mathrm{ND}(\Upsilon_n)$ and $\mathrm{NC}(\Upsilon_n)$ defined for $n$-cycle compatibility scenarios. We have already mentioned that every contextual behavior in such scenarios violates one, and only one, facet-defining noncontextuality inequality.  Recall that, if we denote all the vertices of a convex polytope $P$ as $\text{ext}(P)$, we have that  $\text{ext}(\mathrm{NC}(\Upsilon_n)) \subseteq \text{ext}(\mathrm{ND}(\Upsilon_n))$. In particular, it is therefore also true that each Ineq.~\eqref{eq: noncontextuality inequalities} is violated by one, and only one vertex in $\text{ext}(\mathrm{ND}(\Upsilon_n)) \setminus \text{ext}(\mathrm{NC}(\Upsilon_n))$. Vertices in $\text{ext}(\mathrm{ND}(\Upsilon_n)) \setminus \text{ext}(\mathrm{NC}(\Upsilon_n))$ are always \emph{strongly contextual}~\cite{abramsky2017contextual} and are never \emph{quantum realizable} (see Ref.~\cite{fraser2023estimationtheoreticapproachquantum} for an introduction to quantum realizability problems), i.e., these behaviors cannot be quantum behaviors. 

Let us denote by $$\mathrm{NDC}(\Upsilon_n):= \text{ext}(\mathrm{ND}(\Upsilon_n)) \setminus \text{ext}(\mathrm{NC}(\Upsilon_n))$$
the set of all $2^{n-1}$ non-disturbing contextual vertices~\cite{araujo2013all}. Let us also denote the specific behavior $B_{\varnothing}$ where all outcomes are equally likely, implying that all two-point correlation functions equal zero, and therefore $I_k^{(n)}(B_\varnothing) = 0$. We now define a discrete set, given by all behaviors of the form
\begin{equation*}    
\mathrm{D}_n:= \left\{\frac{n-2}{n}B+\frac{2}{n}B_\varnothing \,:\,  B \in \mathrm{NDC}(\Upsilon_n)\right \}
\end{equation*}
where, of course, $B_\varnothing \in \mathrm{NC}(\Upsilon_n)$. By construction, $\mathrm{D}_n \subseteq \mathrm{NC}(\Upsilon_n)$ as will soon be clear.

For every  $B \in \mathrm{D}_n$ we have that there exists a specific functional $I_k^{(n)}$ that has value $I_k^{(n)}(B) = n-2$. This is simply because $I_k^{(n)}(B_\varnothing) = 0$ and the remaining behavior in the combination returns $n$ since it saturates the algebraic maximum of the functional  $I_k^{(n)}$. Because of that, these points are exactly those that lie in the intersection between the line from $B_\varnothing$ to some extremal contextual behavior, and the facet of all behaviors returning a tight value to the specific inequality $I_k^{(n)}$. We can therefore introduce a continuum of behaviors that lie within these specific points in $\mathrm{D}_n$ and the elements of $\mathrm{NDC}(\Upsilon_n)$ 

\begin{align*}
    \mathrm{P}_n&:= \bigsqcup_k \mathrm{P}_k^{(n)},
\end{align*}
where $\mathrm{P}_n$ is the disjoint union of sets $$\mathrm{P}_k^{(n)}= \{\varepsilon B_k + (1-\varepsilon)B_k'|\varepsilon \in [0,1]\},$$ where $B_k$ is the unique element from $\mathrm{NDC}(\Upsilon_n)$ reaching the algebraic maximum of $I_k^{(n)}$, and $B_k' := (n-2)/n B_k+2/n B_\varnothing$. 

\begin{figure}[t]
    \centering
    \includegraphics[width=\columnwidth]{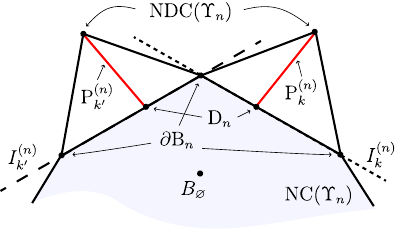}
    \caption{Convex polytopes $\mathrm{ND}(\Upsilon_n)$ and $\mathrm{NC}(\Upsilon_n)$, together with relevant sets of behaviors for this section, for a fixed $n$.  The sets of behaviors $\mathrm{NDC}(\Upsilon_n), \mathrm{D}_n, \mathrm{P}_k^{(n)}, \partial \mathrm{B}_n$ described in the text are shown. Each inequality functional $I_k^{(n)}$ or $I_{k'}^{(n)}$ defines a hyperplane via $I_k^{(n)}(B) = n-2$. The points in $\mathrm{D}_n$ lie in the intersection of these hyperplanes with the line between $B_{\varnothing}$ and elements in $\mathrm{NDC}(\Upsilon_n)$ that are strongly contextual behaviors. For each $k$ labeling a facet-defining noncontextuality inequality the sets $\mathrm{P}_k^{(n)}$ are defined by the convex combinations of the specific points in $\mathrm{D}_n$ and in $\mathrm{NDC}(\Upsilon_n)$  defined via the label $k$.}
    \label{fig: sets of behaviors}
\end{figure}

The last set of behaviors we need to introduce is the set of behaviors that both saturate the noncontextuality inequalities, and hence lie in the boundary of $\mathrm{NC}(\Upsilon_n)$, and also lie in the boundary of $\mathrm{ND}(\Upsilon_n)$. 
 For each $k$ labeling the facet-defining inequalities, if we consider the set $\{B \in \mathrm{ext}(\mathrm{NC}(\Upsilon_n)):I_k^{(n)}=n-2\}$ these sets are  (obviously) non-empty and have cardinality greater than $1$. Let $B^\star_k$ be one such element, for each $k$. We define the set $\partial\mathrm{B}_n := \cup_k\{B_k \in \text{ext}(\mathrm{NC}(\Upsilon_n))|I_k^{(n)}(B_k)=n-2\}$, to be the set of all such choices. In words, each element of $\partial \mathrm{B}_n$ is an extremal element of the noncontextual polytope that saturates at least one of the facet-defining noncontextuality inequalities~\ref{eq: noncontextuality inequalities}. All the relevant sets of behaviors just described are shown in Fig.~\ref{fig: sets of behaviors}.

With the technical ingredients just discussed, we can now define our cost monotone. Let $\mathsf{c}_k: \mathrm{ND}(\Upsilon_n) \to \mathbb{R}\cup \{+\infty,-\infty\}$ be the cost monotone defined  over any $n$-cycle scenario $\Upsilon_n$, with $n\geq 3$, as 
    \begin{equation}
        \mathsf{c}_k(B):= \min_{B' \in \mathrm{P}_k^{(n)}}\{I_k^{(n)}(B'): B' \to B\},
    \end{equation}
    where $\{I_k^{(n)}\}_k$ are all facet-defining noncontextuality inequality-functionals of $\mathrm{NC}(\Upsilon_n)$ from Eq.~\eqref{eq: noncontextuality inequalities}. Following a similar reasoning to the yield $\mathsf{y}_k$ it can be shown that $\mathsf{c}_k$ is also a resource monotone. Note, however, that it is not always the case that for a given behavior $B$ there will be some $B' \in \mathrm{P}_n$ such that $B' \to B$. This \emph{is} the case whenever $B \in \mathrm{NC}(\Upsilon_n)$ or $B$ violates the same noncontextuality inequality as $B'$. However, if $B$ violates some other inequality, then there will be \emph{no} free operation such that $\mathrm{P}_k^{(n)} \ni B' \to B$, and in such cases we set the value $\mathsf{c}_k(B) = +\infty$. This situation corresponds in Fig.~\ref{fig: sets of behaviors} for behaviors $B, B'$ in different regions of contextual behaviors, represented by the two triangles.

As before with the yield monotone, we can show the following:
    
\begin{lemma}[Cost monotone for $n$-cycle scenarios]\label{Lemma: cost}
    For any $I_k^{(n)}$ and any $B \in \mathrm{NC}(\Upsilon_n)$ we have that $$\mathsf{c}_k(B) = n-2.$$ Moreover, for any $B \in \mathrm{P}_n\setminus \mathrm{NC}(\Upsilon_n)$ there exists $k^\star$ such that $$\mathsf{c}_{k^\star}(B) = I_{k^\star}^{(n)}(B).$$ More generally, for any $B \in \mathrm{ND}(\Upsilon_n)$ we have that there exists $k^\star$ and $\varepsilon \in [0,1]$ such that $$\mathsf{c}_{k^\star}(B) = n+2(\varepsilon-1),$$
    where $\varepsilon$ is given by 
    \begin{equation}\label{eq: definition B(e,a)}
        B = B(\alpha,\varepsilon) := \alpha \tilde{B} + (1-\alpha) B_\varepsilon.
    \end{equation}
    Above, $\alpha \in [0,1]$ is some value returned by the optimization and guaranteed to exist, while  $B_\varepsilon \in \mathrm{P}_n$ and  $\tilde{B} = B_{k^\star} \in \partial \mathrm{B}_n$ with $I_{k^\star}(B_{k^\star})=n-2$. 
\end{lemma}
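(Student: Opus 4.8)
The plan is to establish the three claims of Lemma~\ref{Lemma: cost} in increasing generality, reusing the structure already set up for the yield monotone $\mathsf{y}_k$ and Lemma~\ref{Lemma: yield}. First I would dispatch the statement that $\mathsf{c}_k(B) = n-2$ for all $B \in \mathrm{NC}(\Upsilon_n)$. Since all free objects are equivalent (Lemma~\ref{Lemma: free equivalent}, applied with any reference noncontextual behavior, e.g.\ $B_\varnothing$), any $B' \in \mathrm{P}_k^{(n)} \cap \mathrm{NC}(\Upsilon_n)$ satisfies $B' \to B$; the element $B_k' = \tfrac{n-2}{n}B_k + \tfrac{2}{n}B_\varnothing \in \mathrm{P}_k^{(n)}$ lies in $\mathrm{NC}(\Upsilon_n)$ (as noted in the text, $I_k^{(n)}(B_k') = n-2$, so it saturates the facet and is free), and it gives $I_k^{(n)}(B_k') = n-2$. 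Since any $B' \in \mathrm{P}_k^{(n)}$ has $I_k^{(n)}(B') \geq n-2$ (the segment runs from the facet up to the algebraic maximum $n$), the minimum in the definition of $\mathsf{c}_k$ equals $n-2$.

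Next I would prove the intermediate claim: for $B \in \mathrm{P}_n \setminus \mathrm{NC}(\Upsilon_n)$, there is a unique $k^\star$ with $\mathsf{c}_{k^\star}(B) = I_{k^\star}^{(n)}(B)$. If $B \in \mathrm{P}_n \setminus \mathrm{NC}(\Upsilon_n)$, then $B \in \mathrm{P}_{k^\star}^{(n)}$ for exactly one $k^\star$ (the disjoint union defining $\mathrm{P}_n$ is genuinely disjoint outside $\mathrm{NC}$, since distinct contextual vertices $B_k$ violate distinct facets), and $B$ is itself a candidate $B'$ in the optimization for $\mathsf{c}_{k^\star}$, via the identity operation $B \to B$. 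So $\mathsf{c}_{k^\star}(B) \leq I_{k^\star}^{(n)}(B)$. For the reverse inequality I would argue that $\mathsf{c}_{k^\star}$ is a monotone (established by the same three-line telescoping argument given for $\mathsf{y}_k$, with the min over $\mathcal{W}'\circ\mathcal{W}$ decompositions) and hence any $B' \in \mathrm{P}_{k^\star}^{(n)}$ with $B' \to B$ must satisfy $I_{k^\star}^{(n)}(B') \geq I_{k^\star}^{(n)}(B)$, giving $\mathsf{c}_{k^\star}(B) \geq I_{k^\star}^{(n)}(B)$. This is where the monotonicity of $I_{k^\star}^{(n)}$ under NCW on the relevant segment is doing the work, and I would want to double-check that $B' \to B$ with $B'$ on the segment and $B$ more resourceful is genuinely impossible — i.e.\ that $\mathsf{c}_{k^\star}$ really is a monotone with respect to the pre-order restricted to this family.

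For the general statement, I would write an arbitrary $B \in \mathrm{ND}(\Upsilon_n)$ in the form $B = B(\alpha,\varepsilon) = \alpha \tilde B + (1-\alpha) B_\varepsilon$ with $\tilde B = B_{k^\star} \in \partial\mathrm{B}_n$ saturating facet $k^\star$ and $B_\varepsilon \in \mathrm{P}_n$. The existence of such a decomposition is a geometric fact about the $n$-cycle polytope: $\mathrm{ND}(\Upsilon_n)$ is the convex hull of its noncontextual vertices together with the $2^{n-1}$ strongly contextual vertices $\mathrm{NDC}(\Upsilon_n)$, each violating a unique facet, and every point can be expressed on a segment joining a boundary-noncontextual point to a point of the appropriate $\mathrm{P}_k^{(n)}$ ray — here I would lean on the one-to-one correspondence between behaviors and sets of correlations for $\Upsilon_n$ and the explicit facet structure from~\cite{araujo2013all}. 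Then, using Corollary~\ref{corollary: convexity}, since $B' := B_\varepsilon \in \mathrm{P}_{k^\star}^{(n)}$ and the convex mixture with the \emph{free} behavior $\tilde B$ can be realized by an NCW operation applied to $B'$, we get $B' \to B$, so $\mathsf{c}_{k^\star}(B) \leq I_{k^\star}^{(n)}(B_\varepsilon)$. Parametrizing $B_\varepsilon = \varepsilon B_{k^\star} + (1-\varepsilon)B_{k^\star}'$ and computing $I_{k^\star}^{(n)}(B_\varepsilon) = \varepsilon n + (1-\varepsilon)(n-2) = n - 2(1-\varepsilon) = n + 2(\varepsilon-1)$ gives the claimed value; the matching lower bound $\mathsf{c}_{k^\star}(B) \geq n+2(\varepsilon-1)$ again comes from monotonicity plus the observation that no element of $\mathrm{P}_{k^\star}^{(n)}$ strictly below $B_\varepsilon$ on the ray can reach $B$ under NCW.

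\textbf{Main obstacle.} The delicate point is the general decomposition $B = B(\alpha,\varepsilon)$ and the assertion that the optimizing $B'$ can be taken on the single ray $\mathrm{P}_{k^\star}^{(n)}$ with the precise $\varepsilon$ read off from that decomposition — i.e.\ that the cost is controlled by \emph{one} facet and that the geometry of $\mathrm{ND}(\Upsilon_n)$ cooperates so that mixing a $\mathrm{P}_{k^\star}^{(n)}$-point with a facet-saturating free behavior $\tilde B \in \partial\mathrm{B}_n$ sweeps out exactly the slice of $\mathrm{ND}(\Upsilon_n)$ containing $B$. Establishing this cleanly requires the explicit vertex/facet description of the $n$-cycle polytope and care that NCW operations (not merely abstract convex mixtures) realize the needed transformation $B_\varepsilon \to B$, which is exactly what Corollary~\ref{corollary: convexity} supplies; I would treat verifying that $\tilde B$ can be chosen in $\partial\mathrm{B}_n$ consistently with the given $B$ as the crux of the argument.
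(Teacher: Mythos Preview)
Your plan is essentially the same as the paper's: it treats the three claims in the same order, uses the noncontextual endpoint $B_k'\in\mathrm{P}_k^{(n)}$ for the free case, invokes the decomposition $B=B(\alpha,\varepsilon)=\alpha\tilde B+(1-\alpha)B_\varepsilon$ together with Corollary~\ref{corollary: convexity} for the upper bound $\mathsf{c}_{k^\star}(B)\le I_{k^\star}^{(n)}(B_\varepsilon)=n+2(\varepsilon-1)$, and flags the lower bound as the delicate step. The only difference in execution is that the paper phrases the lower bound as a transitivity contradiction (if $B_{\varepsilon'}\to B$ with $\varepsilon'<\varepsilon$ then composing with the free map $B\to B_\varepsilon$ would give $B_{\varepsilon'}\to B_\varepsilon$, which is impossible), whereas you appeal directly to monotonicity on the ray; both arguments target the same obstruction you correctly identified as the crux.
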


We prove this result in the Appendix~\ref{app: proof of lemma cost}. Above, for each $B$ there exists at least one $B_{k^\star} \in \partial \mathrm{B}_n$ describing the decomposition. Using these two lemmas, it is simple to show that the following lemma holds
\begin{lemma}[Cost and yield for the family $B(\alpha,\varepsilon)$]\label{Lemma: cost and yield for B(e,a)}
    Let $B(\alpha,\varepsilon)$ be a behavior from a scenario $\Upsilon_n$ given as in Eq.~\eqref{eq: definition B(e,a)}, violating a  noncontextuality inequality $I_k^{(n)}(B) \leq n-2$, with a fixed $k$. Then, 
    \begin{equation}\label{eq: cost of B(a,e)}        \mathsf{c}_k(B(\alpha,\varepsilon)) = n+2(\varepsilon-1),
    \end{equation}
    and 
    \begin{equation}\label{eq: yield of B(a,e)}
        \mathsf{y}_k(B(\alpha,\varepsilon))=n-2+2\varepsilon(1-\alpha),
    \end{equation}
    for any $\varepsilon,\alpha \in [0,1]$.
\end{lemma}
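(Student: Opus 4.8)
The plan is to reduce the statement to the two preceding lemmas plus one short linear computation, since all the monotonicity work that produces the matching bounds is already carried out in Lemma~\ref{Lemma: yield} and Lemma~\ref{Lemma: cost}. First I would record the value of the (affine) functional $I_k^{(n)}$ on the pieces of the decomposition $B:=B(\alpha,\varepsilon)=\alpha\tilde B+(1-\alpha)B_\varepsilon$. Since $\tilde B\in\partial\mathrm B_n$ lies on the facet $I_k^{(n)}=n-2$ we have $I_k^{(n)}(\tilde B)=n-2$; and since $B_\varepsilon=\varepsilon B_k+(1-\varepsilon)B_k'\in\mathrm P_k^{(n)}$ interpolates between the $\mathrm{NDC}(\Upsilon_n)$ vertex attaining the algebraic maximum $I_k^{(n)}=n$ and the point $B_k'=\tfrac{n-2}{n}B_k+\tfrac2n B_\varnothing$ (for which $I_k^{(n)}(B_k')=n-2$, using $I_k^{(n)}(B_\varnothing)=0$), we get $I_k^{(n)}(B_\varepsilon)=\varepsilon n+(1-\varepsilon)(n-2)=n-2+2\varepsilon$. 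Linearity of $I_k^{(n)}$ then gives
\[
  I_k^{(n)}\big(B(\alpha,\varepsilon)\big)=\alpha(n-2)+(1-\alpha)(n-2+2\varepsilon)=n-2+2\varepsilon(1-\alpha).
\]
I would also note that, by the $n$-cycle structure, if this value exceeds $n-2$ then $k$ is the \emph{unique} noncontextuality inequality $B$ violates, while if it equals $n-2$ then $B\in\mathrm{NC}(\Upsilon_n)$.

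For the yield I would simply invoke Lemma~\ref{Lemma: yield}: in the contextual case it gives $\mathsf y_k(B)=I_k^{(n)}(B)$ for the unique violated label $k$, which equals $n-2+2\varepsilon(1-\alpha)$ by the display above; in the degenerate case $\varepsilon(1-\alpha)=0$ one has $B\in\mathrm{NC}(\Upsilon_n)$ and the lemma gives $\mathsf y_k(B)=n-2$, again equal to $n-2+2\varepsilon(1-\alpha)$. This is Eq.~\eqref{eq: yield of B(a,e)}.

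For the cost, the general formula in Lemma~\ref{Lemma: cost} already asserts $\mathsf c_{k^\star}(B)=n+2(\varepsilon-1)$, where $\varepsilon$ is the parameter of the cost-optimal decomposition~\eqref{eq: definition B(e,a)}; I would pin down $k^\star=k$ by observing that $\mathsf c_{k'}(B)=+\infty$ for every $k'\neq k$ (no element of $\mathrm P_{k'}^{(n)}$ can be wired to a behavior violating only $I_k^{(n)}$), so the finite value must sit at $k^\star=k$, which gives Eq.~\eqref{eq: cost of B(a,e)}. As an independent check of the $\le$ direction one can use Corollary~\ref{corollary: convexity}: for $\varepsilon>0$, since $\tilde B\in\mathrm{NC}(\Upsilon_n)$ there is a noncontextual wiring $\mathcal W$ with $\mathcal W(B_\varepsilon)=(1-\alpha)B_\varepsilon+\alpha\tilde B=B$, so $B_\varepsilon\to B$ with $B_\varepsilon\in\mathrm P_k^{(n)}$ and hence $\mathsf c_k(B)\le I_k^{(n)}(B_\varepsilon)=n+2(\varepsilon-1)$; the case $\varepsilon=0$ puts $B$ in $\mathrm{NC}(\Upsilon_n)$, where both formulas are immediate from Lemmas~\ref{Lemma: yield} and~\ref{Lemma: cost}.

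The main point requiring care — rather than a genuine obstacle — is the bookkeeping around the label $k$: one must verify that the abstractly-produced $k^\star$ of Lemma~\ref{Lemma: cost} is the prescribed $k$, and that the parameter $\varepsilon$ in the statement is read as the one appearing in the cost-optimal decomposition, so that the cost formula $n+2(\varepsilon-1)$ and the yield formula $n-2+2\varepsilon(1-\alpha)$ refer to the same $\varepsilon$. The substantive content — that noncontextual wirings can neither increase $I_k^{(n)}$ past its present value nor reach $B$ from a cheaper point of $\mathrm P_k^{(n)}$ — has already been supplied by the proofs of Lemmas~\ref{Lemma: yield} and~\ref{Lemma: cost}, so nothing new has to be proved at this stage.
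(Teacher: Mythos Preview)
Your proposal is correct and follows essentially the same approach as the paper: reduce the cost to Lemma~\ref{Lemma: cost} ``by construction,'' and for the yield decompose $B(\alpha,\varepsilon)$ into pieces on which $I_k^{(n)}$ is known ($n-2$ on $\tilde B$ and on $B_k'$, $n$ on $B_k$), then invoke Lemma~\ref{Lemma: yield} to identify $\mathsf{y}_k(B)$ with $I_k^{(n)}(B)$. Your version is in fact more careful than the paper's, which omits the degenerate case $\varepsilon(1-\alpha)=0$ and the $k^\star=k$ bookkeeping you flag.
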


\begin{proof}
    The form of the cost follows trivially from Lemma~\ref{Lemma: cost}, by construction. The form of the yield follows from the fact that any such behavior is given as a combination of the form $$\alpha \tilde{B}+(1-\alpha)\varepsilon B + (1-\alpha)(1-\varepsilon)B',$$ where, if contextual, the yield returns the value of the inequality functional that is linear and then we simply need to calculate it for $\tilde{B}, B$, and $B'$. But by construction, $I_k^{(n)}(\tilde{B})=I_k^{(n)}(B')=n-2$ and $I_k^{(n)}(B) =n$ since $\tilde{B} \in \partial \mathrm{B}_n$, $B' \in \mathrm{D}_n$ (both chosen such that $I_k^{(n)}(\tilde{B}) = n-2$) and $B \in \mathrm{NDC}(\Upsilon_n)$ is a non-disturbing contextual vertex of $\mathrm{ND}(\Upsilon_n)$. These imply that,
    \begin{align*}
        \mathsf{y}_k(B(\alpha,\varepsilon)) &= \alpha (n-2)+(1-\alpha)\{(1-\varepsilon)(n-2)+\varepsilon n\}\\
        &=n-2+2\varepsilon(1-\alpha).
    \end{align*}
\end{proof}

Note that the lemma above holds once we fixed some set of contextual behaviors violating a specific inequality $I_k^{(n)}(B) \leq n-2$, represented the triangles in Fig.~\ref{fig: sets of behaviors}.  With these two resource monotones and the specific form they have when applied to $n$-cycle scenarios, we now proceed to prove Theorems~\ref{Theorem: existence of incomparable objects}-\ref{Theorem: local infiniteness}.

\subsubsection{Incomparable behaviors under wirings}

We start showing that there are incomparable objects under noncontextual wirings, proving Theorem~\ref{Theorem: existence of incomparable objects}. To prove this we can use resource monotones. If we have two inequivalent monotones $\mathsf{m}_1, \mathsf{m}_2$ and two objects $B_1,B_2$ such that $\mathsf{m}_1(B_1)<\mathsf{m}_1(B_2)$ and $\mathsf{m}_2(B_2)<\mathsf{m}_2(B_1)$ we get that both objects must be incomparable. This is because, as we have seen in Sec.~\ref{subsec: resource monotones}, the first inequality implies that $B_1$ cannot be freely transformed into $B_2$, while the second inequality implies that $B_2$ cannot be freely transformed into $B_1$. 

For this, consider the following two objects: $B_1 = B(1/4,1/4)$ and $B_2=B(3/4,1/2)$, where we are considering a parametrized family $B(\alpha,\varepsilon)$ given by Eq.~\eqref{eq: definition B(e,a)}. From Lemma~\ref{Lemma: cost and yield for B(e,a)} we have that
\begin{equation*}
    \mathsf{c}_k(B_1)=n-\frac{3}{2}<n-1=\mathsf{c}_k(B_2)
\end{equation*}
and
\begin{equation*}
    \mathsf{y}_k(B_1)=n-\frac{13}{8}>n-\frac{7}{4}=\mathsf{y}_k(B_2).
\end{equation*}

We conclude that \emph{the pre-order is not totally pre-ordered}, proving Theorem~\ref{Theorem: existence of incomparable objects}. Note that this incomparability result remains true for all $n$-cycle scenarios. 

\subsubsection{Weakness of incomparability}

Given that we have incomparable behaviors, we can study the properties of the incomparability relation. Now, consider $B_1$ and $B_2$ as in the last section, and define a new behavior given by $B_3 = B(1/8,1/4)$. Since we have that $B_3$ and $B_1$ have the same value of $\varepsilon$ these are both convex combinations of a free behavior $\tilde{B} \in \partial \mathrm{B}_n \subseteq \mathrm{NC}(\Upsilon_n)$ and a (resourceful) behavior $B_{\varepsilon=1/4} \in \mathrm{P}_n$. In the last section, we have seen that $B_1$ and $B_2$ are incomparable. Following the same reasoning, we can also see that $B_2$ and $B_3$ are incomparable, because
\begin{equation*}
    \mathsf{c}_k(B_3) = \mathsf{c}_k(B_1) = n-\frac{3}{2}<n-1=\mathsf{c}_k(B_2)
\end{equation*}
and
\begin{equation*}
    \mathsf{y}_k(B_3)=n-\frac{25}{16}>n-\frac{7}{4}=\mathsf{y}_k(B_2).
\end{equation*}
Therefore $B_1 \nleftrightarrow B_2$ and $B_2 \nleftrightarrow B_3$. Since $\tilde{B}$ is a free behavior, and $B_1$ can be written as a convex combination of $B_3$ and $\tilde{B}$, there exists a noncontextual wiring $B_3 \to B_1$ (see Corollary~\ref{corollary: convexity}). As a sanity check, we can calculate both the cost and yield, and see that for both, the values for $B_3$ are higher. In fact,  $\mathsf{c}_k(B_1) = \mathsf{c}_k(B_3)$ and $\mathsf{y}_k(B_1)<\mathsf{y}_k(B_3)$, as expected. 

These calculations show that incomparability between objects in our resource theory, denoted as  $\nleftrightarrow$, is not a transitive relation. This implies that the pre-order of noncontextual wirings is not weak, and concludes the proof of Theorem~\ref{Theorem: weakness}.

\subsubsection{Height and width}

Recall that the \emph{height} of the pre-order is the cardinality of the largest chain contained therein (see Def.~\ref{def: global properties}). A finite chain would be for instance a chain of the form $B_1 \to B_2 \to B_3$ and a countable but infinite chain would be a chain of the form $B_1 \to B_2 \to B_3 \to \dots $ that has the same cardinality as the one of the natural numbers. An \emph{uncountable} chain is one that, for any given interval of real numbers $I=[a,b]$ we have that $\forall x_1,x_2 \in I$, $x_1 \leq x_2$ implies that $B_{x_1} \to B_{x_2}$. 

Note that we know that any element of  $\mathrm{P}_n$ is a convex combination (defined by a parameter $\varepsilon \in [0,1]$) of some resourceful behavior $B$ and a free behavior $B'$. Given any such free and resourceful behaviors, noncontextual wirings generate any possible convex combination between $B$ and $B'$, from Corollary~\ref{corollary: convexity}. This implies that for any pair $\varepsilon_1,\varepsilon_2 \in [0,1]$ such that $\varepsilon_1 \leq \varepsilon_2$ we will have that $B_{1-\varepsilon_1} \to B_{1-\varepsilon_2}$, and this forms an uncountable infinite chain. Therefore we conclude that the height of the pre-order is (uncountably) infinite, and has the same cardinality as the set of real numbers. 

To investigate the width of the pre-order, consider the set of points $\left\{ B(x,x) \mid \frac{1}{2} \leq x \leq 1 \right\},$ the line segment between points $B\left(\frac{1}{2}, \frac{1}{2}\right)$ and $B\left(1,1\right)$. By inspection, we notice that within this region, the function $\mathsf{c}_k(B(x,x)) = n+2(x-1)$ is strictly \emph{increasing}, while $\mathsf{y}_k(B(x,x))=n-2+2x(1-x)$ is strictly \emph{decreasing}. Therefore, this pair of monotones witness the incomparability of every pair of objects in this line segment. Hence this line segment constitutes an antichain, and since here there is also an (uncountably) infinite number of incomparable points, by the same logic applied to the height of the pre-order, we conclude that the width of the pre-order is also (uncountably) infinite. This concludes the proof of Theorem~\ref{Theorem: height and width}.

\subsubsection{Local finiteness}

To conclude, we show the claim from Theorem~\ref{Theorem: local infiniteness}. We want to show that there exists an infinite set of inequivalent behaviors within some interval chain. To show that, the infinite chain $\mathrm{P}_n$ is also instrumental. Note that for each interval $B_{1-\varepsilon_1} \to B_{1-\varepsilon} \to B_{1-\varepsilon_2}$ there exists an infinite set of inequivalent behaviors. This is captured, for instance, by the cost monotone $\mathsf{c}_k$ acting on $B(0,\varepsilon)$. For any $\varepsilon_1 < \varepsilon < \varepsilon_2$ we have that $\mathsf{c}_k(B_{1-\varepsilon_1})<\mathsf{c}_k(B_{1-\varepsilon})<\mathsf{c}_k(B_{1-\varepsilon_2})$ and there can exist no free operation from $B_{1-\varepsilon_2} \to B_{1-\varepsilon}$ and neither from $B_{1-\varepsilon} \to B_{1-\varepsilon_1}$. This implies that all these are inequivalent resources, and since this is true for the interval, there are infinitely many classes of inequivalent resources within $\mathrm{P}_k^{(n)}$, for any $k$ and $n$. This shows that the pre-order induced by noncontextual wirings over behaviors in any $n$-cycle scenario is locally infinite (see Def.~\ref{def: global properties}). This concludes the proof of Theorem~\ref{Theorem: local infiniteness}.

As a final remark, we notice that due to Vorobyev's Theorem~\cite{vorobyev1962consistent}, any measurement scenario $\Upsilon$ for which $\mathrm{NC}(\Upsilon)\neq \mathrm{ND}(\Upsilon)$, sometimes called \emph{contextuality-witnessing} scenarios, must contain an $n$-cycle sub scenario. Via lifting~\cite{choudhary2024lifting} any contextuality-witnessing scenario must therefore contain lifted $n$-cycle inequalities, for which one can apply our findings by generalizing straightforwardly the cost and yield monotones we have described, and conclude that all Theorems~\ref{Theorem: existence of incomparable objects}-\ref{Theorem: local infiniteness} hold in any scenario, not only dichotomic cycle scenarios. 

\section{Discussion}\label{sec: discussion}

In this work we have studied the resource theory one obtains when considering behaviors in compatibility scenarios as resource objects, Kochen--Specker noncontextual behaviors as free objects, and noncontextual wirings as free operations. While these free operations have been introduced in previous literature, we believe our presentation of the topic is significantly different than those present therein, and it clarifies some aspects that were either known as folklore or overlooked. One such aspect is the relationship between the set of noncontextual wirings, when applied to compatibility scenarios isomorphic to Bell scenarios, and the set of local operations and shared randomness in such scenarios. We show that noncontextual wirings are not equivalent to local operations with shared randomness, but contain these operations as a subset of possible operations.

One operational aspect we show is the \emph{convexity} of the set of noncontextual wirings. We prove that this set is closed under convex combinations. In proving so, we end up showing that the ability to feedforward input/outputs to future behaviors (side-channels in a wiring) \emph{implies} that both pre- and post-processing noncontextual behaviors can share the same source of randomness. This aspect led to a novel description of how one may define noncontextual wirings, allowing for a common source of shared randomness between input/output behaviors. Clearly, this discussion, in particular, is motivated by the recent findings of Wolfe \emph{et.~al.}~\cite{wolfe2020quantifying} that recognized the important role of having a common source of shared randomness between pre- and post-processing behaviors when discussing $\mathrm{LOSR}^\bullet$. Our results showed that, as with the set of type-dependent local operations with shared randomness, the set of type-dependent noncontextual wirings form a polytope. 

Moreover, motivated by the study of global comparability properties of the resource theory of nonclassical common-cause boxes (Def.~\ref{def: global properties}) by Wolfe \emph{et. al.}~\cite{wolfe2020quantifying}, we have also investigated which ordering properties noncontextual wirings induce over the set of all behaviors.  To do so, we use cost and yield monotones, specifically targeting facets of $n$-cycle compatibility scenarios, to show (i) the existence of incomparable objects under wirings, (ii) that the pre-order is not weak, (iii) that both the height and width of the pre-order are infinite and finally, (iv) there exists an interval having infinitely many inequivalent objects. Along the way to showing these results, we also show elementary facts, such as that every noncontextual behavior is equivalent under noncontextual wirings, that convex mixtures are a subset of all possible wirings, and that any resourceful object can be freely operated towards any free object.

\subsection{Relation with previous work}

Our work is significantly motivated by the findings of Wolfe \emph{et. al.}~\cite{wolfe2020quantifying}. Specifically, we were interested in generalizing the generic features of the global comparability properties they find in the context of the CHSH Bell scenario to any contextuality-witnessing $n$-cycle scenario. Because, as we have shown, noncontextual wirings are a larger set of free operations (within compatibility scenarios) than local operations and shared randomness, it could be that incomparable objects under LOSR become \emph{comparable} under NCW, inflicting in different global properties of the pre-order. Perhaps unsurprisingly, using fairly similar tools as the one employed by Ref.~\cite{wolfe2020quantifying}, we have obtained the same global properties. 

One major aspect of our work is, however, understanding the convexity of free operations NCW. Much of the discussion from Ref.~\cite{wolfe2020quantifying} suggests (in our view, correctly) that unless there exists a mechanism to forward randomness from the pre-processing behaviors towards the post-processing ones, there will be an inevitable non-convexity of the set of operations. Intriguingly, previous literature has explicitly demanded that noncontextual wirings could not forward information of the states $\lambda$. This choice was made specifically because the ability to do so could lead to the creation of contextuality using mechanisms akin to those found in Refs.~\cite{budroni2019contextuality,kleinmann2011memory,fagundes2017memory,budroni2019memory_temporal}. Our work, despite what is suggested by the discussion of Ref.~\cite[Appendix A]{wolfe2020quantifying}, shows that the set of type-dependent noncontextual wirings is a convex polytope, and therefore a convex set. For this conclusion to hold we use the fact that \emph{it is possible} to feedforward information of the states $\lambda$ in the pre-processing towards the post-processing, and that this aspect is equivalent to allowing a common source of randomness for both the pre- and post-processing behaviors, as it happens with $\mathrm{LOSR}^\bullet$. In particular, this also shows that, because noncontextual wirings are free, they cannot create contextuality via the memory mechanisms of Refs.~\cite{budroni2019contextuality,kleinmann2011memory,fagundes2017memory,budroni2019memory_temporal} as initially thought. 

Two connections with existing literature are worth mentioning. First, while references~\cite{amaral2018noncontextual,amaral2019resource} claimed that local operations and shared randomness were equivalent to noncontextual wirings restricted to Bell scenarios, reference~\cite{karvonen2021neither} briefly pointed out that the set of all possible free operations for the resource theory of contextuality needed to be larger than that of the LOSR when restricted to Bell scenarios. Moreover, Ref.~\cite{barbosa2023closing} remarked (see Remark 28 therein) that one can exchange adaptivity and a common source of randomness. Our work brings clarity to the discussions present in the references mentioned above regarding the convexity of NCW and its relation with LOSR.

\subsection{Further directions}

Various aspects are left for future work. For example, as mentioned above, Refs.~\cite{karvonen2021neither,abramsky2019comonadic,abramsky2017contextual,barbosa2023closing} considered a larger set of possible free operations than that of noncontextual wirings. It remains to see what global comparability properties these free operations have. To the best of our knowledge, it is unknown if there are incomparable behaviors for their notion of free operations. 

Another interesting aspect is to understand how generic wirings (not just noncontextual ones) behave. For instance, there are likely classes of contextual wirings that combined can either send a resourceful behavior towards a free behavior, or that act trivially on the resources. The mapping of such possibilities would guarantee that one does not \emph{wastes} resources by acting with resourceful wirings without gaining anything from these operations. In other words, denoting as $\mathcal{W}^{\mathsf{c}}$ a non-free wiring that has contextual pre- or post-processing behaviors, is it possible that $\mathcal{W}^{\mathsf{c}}(B) = \mathcal{W}(B^\star)$, for some free wiring $\mathcal{W}$ and some other behavior $B^\star$? And if so, can this effect be completely characterized, with necessary and sufficient conditions deciding when this is possible? One other possibility is also to investigate `doped' resources~\cite{peres2024nonstabilizerness,haferkamp2022efficient}, when we have a sequence of free transformations $$\mathcal{W}_1 \circ \mathcal{W}_2 \circ \dots \circ \mathcal{W}^{\mathsf{cost}} \circ \dots \circ \mathcal{W}_n(B)$$ `doped' with few non-free wirings $\mathcal{W}^{\mathsf{c}}$. Understanding the quantum information power of these restricted classes of operations may be useful for resource quantification.


\begin{acknowledgments}
We would like to thank Ana Bel\'en Sainz, Rui Soares Barbosa, Laurens Walleghem and Costantino Budroni for  helpful comments and discussions. TW and BA would like to acknowledge financial support from Conselho Nacional de Desenvolvimento Cient\'ifico e Tecnol\'ogico (CNPQ) - Regular program, Grant number $131630/2019-9$, CNPq, Chamada Universal 2018, Grant number $431443/2018-1$, and from Fundação de Amparo \`a Pesquisa do Estado de S\~ao Paulo, Aux\'ilio \`a Pesquisa - Jovem Pesquisador, grant number $2020/06454-7$, and Instituto Serrapilheira, Chamada 2020. RW would like to acknowledge financial support from FCT – Fundação para a Ciência e a Tecnologia (Portugal) through PhD Grant SFRH/BD/151199/2021. This work was also supported by the Digital Horizon Europe project FoQaCiA, GA no.101070558, funded by the European Union, NSERC (Canada), and UKRI (U.K.).
\end{acknowledgments}
 
\bibliography{biblio}

\begin{thebibliography}{92}%
\makeatletter
\providecommand \@ifxundefined [1]{%
 \@ifx{#1\undefined}
}%
\providecommand \@ifnum [1]{%
 \ifnum #1\expandafter \@firstoftwo
 \else \expandafter \@secondoftwo
 \fi
}%
\providecommand \@ifx [1]{%
 \ifx #1\expandafter \@firstoftwo
 \else \expandafter \@secondoftwo
 \fi
}%
\providecommand \natexlab [1]{#1}%
\providecommand \enquote  [1]{``#1''}%
\providecommand \bibnamefont  [1]{#1}%
\providecommand \bibfnamefont [1]{#1}%
\providecommand \citenamefont [1]{#1}%
\providecommand \href@noop [0]{\@secondoftwo}%
\providecommand \href [0]{\begingroup \@sanitize@url \@href}%
\providecommand \@href[1]{\@@startlink{#1}\@@href}%
\providecommand \@@href[1]{\endgroup#1\@@endlink}%
\providecommand \@sanitize@url [0]{\catcode `\\12\catcode `\$12\catcode
  `\&12\catcode `\#12\catcode `\^12\catcode `\_12\catcode `\%12\relax}%
\providecommand \@@startlink[1]{}%
\providecommand \@@endlink[0]{}%
\providecommand \url  [0]{\begingroup\@sanitize@url \@url }%
\providecommand \@url [1]{\endgroup\@href {#1}{\urlprefix }}%
\providecommand \urlprefix  [0]{URL }%
\providecommand \Eprint [0]{\href }%
\providecommand \doibase [0]{https://doi.org/}%
\providecommand \selectlanguage [0]{\@gobble}%
\providecommand \bibinfo  [0]{\@secondoftwo}%
\providecommand \bibfield  [0]{\@secondoftwo}%
\providecommand \translation [1]{[#1]}%
\providecommand \BibitemOpen [0]{}%
\providecommand \bibitemStop [0]{}%
\providecommand \bibitemNoStop [0]{.\EOS\space}%
\providecommand \EOS [0]{\spacefactor3000\relax}%
\providecommand \BibitemShut  [1]{\csname bibitem#1\endcsname}%
\let\auto@bib@innerbib\@empty
\bibitem [{\citenamefont {Budroni}\ \emph {et~al.}(2022)\citenamefont
  {Budroni}, \citenamefont {Cabello}, \citenamefont {G\"uhne}, \citenamefont
  {Kleinmann},\ and\ \citenamefont {Larsson}}]{budroni2021quantum}%
  \BibitemOpen
  \bibfield  {author} {\bibinfo {author} {\bibfnamefont {C.}~\bibnamefont
  {Budroni}}, \bibinfo {author} {\bibfnamefont {A.}~\bibnamefont {Cabello}},
  \bibinfo {author} {\bibfnamefont {O.}~\bibnamefont {G\"uhne}}, \bibinfo
  {author} {\bibfnamefont {M.}~\bibnamefont {Kleinmann}},\ and\ \bibinfo
  {author} {\bibfnamefont {J.-A.}\ \bibnamefont {Larsson}},\ }\bibfield
  {title} {\bibinfo {title} {Kochen-specker contextuality},\ }\href
  {https://doi.org/10.1103/RevModPhys.94.045007} {\bibfield  {journal}
  {\bibinfo  {journal} {Rev. Mod. Phys.}\ }\textbf {\bibinfo {volume} {94}},\
  \bibinfo {pages} {045007} (\bibinfo {year} {2022})}\BibitemShut {NoStop}%
\bibitem [{\citenamefont {Kochen}\ and\ \citenamefont
  {Specker}(1967)}]{kochen1975problem}%
  \BibitemOpen
  \bibfield  {author} {\bibinfo {author} {\bibfnamefont {S.}~\bibnamefont
  {Kochen}}\ and\ \bibinfo {author} {\bibfnamefont {E.}~\bibnamefont
  {Specker}},\ }\bibfield  {title} {\bibinfo {title} {The problem of hidden
  variables in quantum mechanics},\ }\href
  {https://doi.org/https://doi.org/10.1007/978-3-0348-9259-9_21} {\bibfield
  {journal} {\bibinfo  {journal} {Journal of Mathematics and Mechanics}\
  }\textbf {\bibinfo {volume} {17}},\ \bibinfo {pages} {59} (\bibinfo {year}
  {1967})}\BibitemShut {NoStop}%
\bibitem [{\citenamefont {Howard}\ \emph {et~al.}(2014)\citenamefont {Howard},
  \citenamefont {Wallman}, \citenamefont {Veitch},\ and\ \citenamefont
  {Emerson}}]{howard2014contextuality}%
  \BibitemOpen
  \bibfield  {author} {\bibinfo {author} {\bibfnamefont {M.}~\bibnamefont
  {Howard}}, \bibinfo {author} {\bibfnamefont {J.}~\bibnamefont {Wallman}},
  \bibinfo {author} {\bibfnamefont {V.}~\bibnamefont {Veitch}},\ and\ \bibinfo
  {author} {\bibfnamefont {J.}~\bibnamefont {Emerson}},\ }\bibfield  {title}
  {\bibinfo {title} {Contextuality supplies the ‘magic’ for quantum
  computation},\ }\href {https://doi.org/10.1038/nature13460} {\bibfield
  {journal} {\bibinfo  {journal} {Nature}\ }\textbf {\bibinfo {volume} {510}},\
  \bibinfo {pages} {351–355} (\bibinfo {year} {2014})}\BibitemShut {NoStop}%
\bibitem [{\citenamefont {Raussendorf}(2013)}]{raussendorf2013contextuality}%
  \BibitemOpen
  \bibfield  {author} {\bibinfo {author} {\bibfnamefont {R.}~\bibnamefont
  {Raussendorf}},\ }\bibfield  {title} {\bibinfo {title} {Contextuality in
  measurement-based quantum computation},\ }\href
  {https://doi.org/10.1103/PhysRevA.88.022322} {\bibfield  {journal} {\bibinfo
  {journal} {Phys. Rev. A}\ }\textbf {\bibinfo {volume} {88}},\ \bibinfo
  {pages} {022322} (\bibinfo {year} {2013})}\BibitemShut {NoStop}%
\bibitem [{\citenamefont {Abramsky}\ \emph {et~al.}(2017)\citenamefont
  {Abramsky}, \citenamefont {Barbosa},\ and\ \citenamefont
  {Mansfield}}]{abramsky2017contextual}%
  \BibitemOpen
  \bibfield  {author} {\bibinfo {author} {\bibfnamefont {S.}~\bibnamefont
  {Abramsky}}, \bibinfo {author} {\bibfnamefont {R.~S.}\ \bibnamefont
  {Barbosa}},\ and\ \bibinfo {author} {\bibfnamefont {S.}~\bibnamefont
  {Mansfield}},\ }\bibfield  {title} {\bibinfo {title} {Contextual fraction as
  a measure of contextuality},\ }\href
  {https://doi.org/10.1103/PhysRevLett.119.050504} {\bibfield  {journal}
  {\bibinfo  {journal} {Phys. Rev. Lett.}\ }\textbf {\bibinfo {volume} {119}},\
  \bibinfo {pages} {050504} (\bibinfo {year} {2017})}\BibitemShut {NoStop}%
\bibitem [{\citenamefont {Bermejo-Vega}\ \emph {et~al.}(2017)\citenamefont
  {Bermejo-Vega}, \citenamefont {Delfosse}, \citenamefont {Browne},
  \citenamefont {Okay},\ and\ \citenamefont
  {Raussendorf}}]{bermejo2017contextuality}%
  \BibitemOpen
  \bibfield  {author} {\bibinfo {author} {\bibfnamefont {J.}~\bibnamefont
  {Bermejo-Vega}}, \bibinfo {author} {\bibfnamefont {N.}~\bibnamefont
  {Delfosse}}, \bibinfo {author} {\bibfnamefont {D.~E.}\ \bibnamefont
  {Browne}}, \bibinfo {author} {\bibfnamefont {C.}~\bibnamefont {Okay}},\ and\
  \bibinfo {author} {\bibfnamefont {R.}~\bibnamefont {Raussendorf}},\
  }\bibfield  {title} {\bibinfo {title} {Contextuality as a {R}esource for
  {M}odels of {Q}uantum {C}omputation with {Q}ubits},\ }\href
  {https://doi.org/10.1103/PhysRevLett.119.120505} {\bibfield  {journal}
  {\bibinfo  {journal} {Phys. Rev. Lett.}\ }\textbf {\bibinfo {volume} {119}},\
  \bibinfo {pages} {120505} (\bibinfo {year} {2017})}\BibitemShut {NoStop}%
\bibitem [{\citenamefont {Raussendorf}\ \emph {et~al.}(2020)\citenamefont
  {Raussendorf}, \citenamefont {Bermejo-Vega}, \citenamefont {Tyhurst},
  \citenamefont {Okay},\ and\ \citenamefont {Zurel}}]{raussendorf2020phase}%
  \BibitemOpen
  \bibfield  {author} {\bibinfo {author} {\bibfnamefont {R.}~\bibnamefont
  {Raussendorf}}, \bibinfo {author} {\bibfnamefont {J.}~\bibnamefont
  {Bermejo-Vega}}, \bibinfo {author} {\bibfnamefont {E.}~\bibnamefont
  {Tyhurst}}, \bibinfo {author} {\bibfnamefont {C.}~\bibnamefont {Okay}},\ and\
  \bibinfo {author} {\bibfnamefont {M.}~\bibnamefont {Zurel}},\ }\bibfield
  {title} {\bibinfo {title} {Phase-space-simulation method for quantum
  computation with magic states on qubits},\ }\href
  {https://doi.org/10.1103/PhysRevA.101.012350} {\bibfield  {journal} {\bibinfo
   {journal} {Phys. Rev. A}\ }\textbf {\bibinfo {volume} {101}},\ \bibinfo
  {pages} {012350} (\bibinfo {year} {2020})}\BibitemShut {NoStop}%
\bibitem [{\citenamefont {Oestereich}\ and\ \citenamefont
  {Galv\~ao}(2017)}]{oestereich2017reliable}%
  \BibitemOpen
  \bibfield  {author} {\bibinfo {author} {\bibfnamefont {A.~L.}\ \bibnamefont
  {Oestereich}}\ and\ \bibinfo {author} {\bibfnamefont {E.~F.}\ \bibnamefont
  {Galv\~ao}},\ }\bibfield  {title} {\bibinfo {title} {Reliable computation
  from contextual correlations},\ }\href
  {https://doi.org/10.1103/PhysRevA.96.062305} {\bibfield  {journal} {\bibinfo
  {journal} {Phys. Rev. A}\ }\textbf {\bibinfo {volume} {96}},\ \bibinfo
  {pages} {062305} (\bibinfo {year} {2017})}\BibitemShut {NoStop}%
\bibitem [{\citenamefont {Bravyi}\ \emph {et~al.}(2018)\citenamefont {Bravyi},
  \citenamefont {Gosset},\ and\ \citenamefont {K\"onig}}]{bravyi2018quantum}%
  \BibitemOpen
  \bibfield  {author} {\bibinfo {author} {\bibfnamefont {S.}~\bibnamefont
  {Bravyi}}, \bibinfo {author} {\bibfnamefont {D.}~\bibnamefont {Gosset}},\
  and\ \bibinfo {author} {\bibfnamefont {R.}~\bibnamefont {K\"onig}},\
  }\bibfield  {title} {\bibinfo {title} {Quantum advantage with shallow
  circuits},\ }\href {https://doi.org/10.1126/science.aar3106} {\bibfield
  {journal} {\bibinfo  {journal} {Science}\ }\textbf {\bibinfo {volume}
  {362}},\ \bibinfo {pages} {308–311} (\bibinfo {year} {2018})}\BibitemShut
  {NoStop}%
\bibitem [{\citenamefont {Bravyi}\ \emph {et~al.}(2020)\citenamefont {Bravyi},
  \citenamefont {Gosset}, \citenamefont {K\"onig},\ and\ \citenamefont
  {Tomamichel}}]{bravyi2020quantum}%
  \BibitemOpen
  \bibfield  {author} {\bibinfo {author} {\bibfnamefont {S.}~\bibnamefont
  {Bravyi}}, \bibinfo {author} {\bibfnamefont {D.}~\bibnamefont {Gosset}},
  \bibinfo {author} {\bibfnamefont {R.}~\bibnamefont {K\"onig}},\ and\ \bibinfo
  {author} {\bibfnamefont {M.}~\bibnamefont {Tomamichel}},\ }\bibfield  {title}
  {\bibinfo {title} {Quantum advantage with noisy shallow circuits},\ }\href
  {https://doi.org/10.1038/s41567-020-0948-z} {\bibfield  {journal} {\bibinfo
  {journal} {Nature Physics}\ }\textbf {\bibinfo {volume} {16}},\ \bibinfo
  {pages} {1040–1045} (\bibinfo {year} {2020})}\BibitemShut {NoStop}%
\bibitem [{\citenamefont {Gupta}\ \emph {et~al.}(2023)\citenamefont {Gupta},
  \citenamefont {Saha}, \citenamefont {Xu}, \citenamefont {Cabello},\ and\
  \citenamefont {Majumdar}}]{gupta2023quantum}%
  \BibitemOpen
  \bibfield  {author} {\bibinfo {author} {\bibfnamefont {S.}~\bibnamefont
  {Gupta}}, \bibinfo {author} {\bibfnamefont {D.}~\bibnamefont {Saha}},
  \bibinfo {author} {\bibfnamefont {Z.-P.}\ \bibnamefont {Xu}}, \bibinfo
  {author} {\bibfnamefont {A.}~\bibnamefont {Cabello}},\ and\ \bibinfo {author}
  {\bibfnamefont {A.~S.}\ \bibnamefont {Majumdar}},\ }\bibfield  {title}
  {\bibinfo {title} {Quantum contextuality provides communication complexity
  advantage},\ }\href {https://doi.org/10.1103/PhysRevLett.130.080802}
  {\bibfield  {journal} {\bibinfo  {journal} {Phys. Rev. Lett.}\ }\textbf
  {\bibinfo {volume} {130}},\ \bibinfo {pages} {080802} (\bibinfo {year}
  {2023})}\BibitemShut {NoStop}%
\bibitem [{\citenamefont {Saha}\ \emph {et~al.}(2019)\citenamefont {Saha},
  \citenamefont {Horodecki},\ and\ \citenamefont
  {Pawłowski}}]{saha2019stateindependent}%
  \BibitemOpen
  \bibfield  {author} {\bibinfo {author} {\bibfnamefont {D.}~\bibnamefont
  {Saha}}, \bibinfo {author} {\bibfnamefont {P.}~\bibnamefont {Horodecki}},\
  and\ \bibinfo {author} {\bibfnamefont {M.}~\bibnamefont {Pawłowski}},\
  }\bibfield  {title} {\bibinfo {title} {State independent contextuality
  advances one-way communication},\ }\href
  {https://doi.org/10.1088/1367-2630/ab4149} {\bibfield  {journal} {\bibinfo
  {journal} {New Journal of Physics}\ }\textbf {\bibinfo {volume} {21}},\
  \bibinfo {pages} {093057} (\bibinfo {year} {2019})}\BibitemShut {NoStop}%
\bibitem [{\citenamefont {Xu}\ \emph {et~al.}(2023)\citenamefont {Xu},
  \citenamefont {Saha}, \citenamefont {Bharti},\ and\ \citenamefont
  {Cabello}}]{xu2023stateindependent}%
  \BibitemOpen
  \bibfield  {author} {\bibinfo {author} {\bibfnamefont {Z.-P.}\ \bibnamefont
  {Xu}}, \bibinfo {author} {\bibfnamefont {D.}~\bibnamefont {Saha}}, \bibinfo
  {author} {\bibfnamefont {K.}~\bibnamefont {Bharti}},\ and\ \bibinfo {author}
  {\bibfnamefont {A.}~\bibnamefont {Cabello}},\ }\href@noop {} {\bibinfo
  {title} {State-independent certification of quantum observables}} (\bibinfo
  {year} {2023}),\ \Eprint {https://arxiv.org/abs/2309.05735} {arXiv:2309.05735
  [quant-ph]} \BibitemShut {NoStop}%
\bibitem [{\citenamefont {Bharti}\ \emph {et~al.}(2019)\citenamefont {Bharti},
  \citenamefont {Ray}, \citenamefont {Varvitsiotis}, \citenamefont {Warsi},
  \citenamefont {Cabello},\ and\ \citenamefont {Kwek}}]{bharti2019robustself}%
  \BibitemOpen
  \bibfield  {author} {\bibinfo {author} {\bibfnamefont {K.}~\bibnamefont
  {Bharti}}, \bibinfo {author} {\bibfnamefont {M.}~\bibnamefont {Ray}},
  \bibinfo {author} {\bibfnamefont {A.}~\bibnamefont {Varvitsiotis}}, \bibinfo
  {author} {\bibfnamefont {N.~A.}\ \bibnamefont {Warsi}}, \bibinfo {author}
  {\bibfnamefont {A.}~\bibnamefont {Cabello}},\ and\ \bibinfo {author}
  {\bibfnamefont {L.-C.}\ \bibnamefont {Kwek}},\ }\bibfield  {title} {\bibinfo
  {title} {Robust self-testing of quantum systems via noncontextuality
  inequalities},\ }\href {https://doi.org/10.1103/PhysRevLett.122.250403}
  {\bibfield  {journal} {\bibinfo  {journal} {Phys. Rev. Lett.}\ }\textbf
  {\bibinfo {volume} {122}},\ \bibinfo {pages} {250403} (\bibinfo {year}
  {2019})}\BibitemShut {NoStop}%
\bibitem [{\citenamefont {Hu}\ \emph {et~al.}(2023)\citenamefont {Hu},
  \citenamefont {Xie}, \citenamefont {Arora}, \citenamefont {Ai}, \citenamefont
  {Bharti}, \citenamefont {Zhang}, \citenamefont {Wu}, \citenamefont {Chen},
  \citenamefont {Cui}, \citenamefont {Liu}, \citenamefont {Huang},
  \citenamefont {Li}, \citenamefont {Guo}, \citenamefont {Roland},
  \citenamefont {Cabello},\ and\ \citenamefont {Kwek}}]{hu2023self}%
  \BibitemOpen
  \bibfield  {author} {\bibinfo {author} {\bibfnamefont {X.-M.}\ \bibnamefont
  {Hu}}, \bibinfo {author} {\bibfnamefont {Y.}~\bibnamefont {Xie}}, \bibinfo
  {author} {\bibfnamefont {A.~S.}\ \bibnamefont {Arora}}, \bibinfo {author}
  {\bibfnamefont {M.-Z.}\ \bibnamefont {Ai}}, \bibinfo {author} {\bibfnamefont
  {K.}~\bibnamefont {Bharti}}, \bibinfo {author} {\bibfnamefont
  {J.}~\bibnamefont {Zhang}}, \bibinfo {author} {\bibfnamefont
  {W.}~\bibnamefont {Wu}}, \bibinfo {author} {\bibfnamefont {P.-X.}\
  \bibnamefont {Chen}}, \bibinfo {author} {\bibfnamefont {J.-M.}\ \bibnamefont
  {Cui}}, \bibinfo {author} {\bibfnamefont {B.-H.}\ \bibnamefont {Liu}},
  \bibinfo {author} {\bibfnamefont {Y.-F.}\ \bibnamefont {Huang}}, \bibinfo
  {author} {\bibfnamefont {C.-F.}\ \bibnamefont {Li}}, \bibinfo {author}
  {\bibfnamefont {G.-C.}\ \bibnamefont {Guo}}, \bibinfo {author} {\bibfnamefont
  {J.}~\bibnamefont {Roland}}, \bibinfo {author} {\bibfnamefont
  {A.}~\bibnamefont {Cabello}},\ and\ \bibinfo {author} {\bibfnamefont {L.-C.}\
  \bibnamefont {Kwek}},\ }\bibfield  {title} {\bibinfo {title} {Self-testing of
  a single quantum system from theory to experiment},\ }\href
  {https://doi.org/10.1038/s41534-023-00769-7} {\bibfield  {journal} {\bibinfo
  {journal} {npj Quantum Information}\ }\textbf {\bibinfo {volume} {9}},\
  \bibinfo {pages} {103} (\bibinfo {year} {2023})}\BibitemShut {NoStop}%
\bibitem [{\citenamefont {Saha}\ \emph {et~al.}(2020)\citenamefont {Saha},
  \citenamefont {Santos},\ and\ \citenamefont
  {Augusiak}}]{saha2020sumofsquares}%
  \BibitemOpen
  \bibfield  {author} {\bibinfo {author} {\bibfnamefont {D.}~\bibnamefont
  {Saha}}, \bibinfo {author} {\bibfnamefont {R.}~\bibnamefont {Santos}},\ and\
  \bibinfo {author} {\bibfnamefont {R.}~\bibnamefont {Augusiak}},\ }\bibfield
  {title} {\bibinfo {title} {Sum-of-squares decompositions for a family of
  noncontextuality inequalities and self-testing of quantum devices},\ }\href
  {https://doi.org/10.22331/q-2020-08-03-302} {\bibfield  {journal} {\bibinfo
  {journal} {{Quantum}}\ }\textbf {\bibinfo {volume} {4}},\ \bibinfo {pages}
  {302} (\bibinfo {year} {2020})}\BibitemShut {NoStop}%
\bibitem [{\citenamefont {Vidick}\ and\ \citenamefont
  {Wehner}(2011)}]{vidick2011ignorance}%
  \BibitemOpen
  \bibfield  {author} {\bibinfo {author} {\bibfnamefont {T.}~\bibnamefont
  {Vidick}}\ and\ \bibinfo {author} {\bibfnamefont {S.}~\bibnamefont
  {Wehner}},\ }\bibfield  {title} {\bibinfo {title} {Does ignorance of the
  whole imply ignorance of the parts? large violations of noncontextuality in
  quantum theory},\ }\href {https://doi.org/10.1103/PhysRevLett.107.030402}
  {\bibfield  {journal} {\bibinfo  {journal} {Phys. Rev. Lett.}\ }\textbf
  {\bibinfo {volume} {107}},\ \bibinfo {pages} {030402} (\bibinfo {year}
  {2011})}\BibitemShut {NoStop}%
\bibitem [{\citenamefont {G\"uhne}\ \emph {et~al.}(2014)\citenamefont
  {G\"uhne}, \citenamefont {Budroni}, \citenamefont {Cabello}, \citenamefont
  {Kleinmann},\ and\ \citenamefont {Larsson}}]{guhne2014bounding}%
  \BibitemOpen
  \bibfield  {author} {\bibinfo {author} {\bibfnamefont {O.}~\bibnamefont
  {G\"uhne}}, \bibinfo {author} {\bibfnamefont {C.}~\bibnamefont {Budroni}},
  \bibinfo {author} {\bibfnamefont {A.}~\bibnamefont {Cabello}}, \bibinfo
  {author} {\bibfnamefont {M.}~\bibnamefont {Kleinmann}},\ and\ \bibinfo
  {author} {\bibfnamefont {J.-A.}\ \bibnamefont {Larsson}},\ }\bibfield
  {title} {\bibinfo {title} {Bounding the quantum dimension with
  contextuality},\ }\href {https://doi.org/10.1103/PhysRevA.89.062107}
  {\bibfield  {journal} {\bibinfo  {journal} {Phys. Rev. A}\ }\textbf {\bibinfo
  {volume} {89}},\ \bibinfo {pages} {062107} (\bibinfo {year}
  {2014})}\BibitemShut {NoStop}%
\bibitem [{\citenamefont {Ray}\ \emph {et~al.}(2021)\citenamefont {Ray},
  \citenamefont {Boddu}, \citenamefont {Bharti}, \citenamefont {Kwek},\ and\
  \citenamefont {Cabello}}]{ray2021graph}%
  \BibitemOpen
  \bibfield  {author} {\bibinfo {author} {\bibfnamefont {M.}~\bibnamefont
  {Ray}}, \bibinfo {author} {\bibfnamefont {N.~G.}\ \bibnamefont {Boddu}},
  \bibinfo {author} {\bibfnamefont {K.}~\bibnamefont {Bharti}}, \bibinfo
  {author} {\bibfnamefont {L.-C.}\ \bibnamefont {Kwek}},\ and\ \bibinfo
  {author} {\bibfnamefont {A.}~\bibnamefont {Cabello}},\ }\bibfield  {title}
  {\bibinfo {title} {Graph-theoretic approach to dimension witnessing},\ }\href
  {https://doi.org/10.1088/1367-2630/abcacd} {\bibfield  {journal} {\bibinfo
  {journal} {New Journal of Physics}\ }\textbf {\bibinfo {volume} {23}},\
  \bibinfo {pages} {033006} (\bibinfo {year} {2021})}\BibitemShut {NoStop}%
\bibitem [{\citenamefont {Hance}\ \emph {et~al.}(2023)\citenamefont {Hance},
  \citenamefont {Ji},\ and\ \citenamefont {Hofmann}}]{hance2023contextuality}%
  \BibitemOpen
  \bibfield  {author} {\bibinfo {author} {\bibfnamefont {J.~R.}\ \bibnamefont
  {Hance}}, \bibinfo {author} {\bibfnamefont {M.}~\bibnamefont {Ji}},\ and\
  \bibinfo {author} {\bibfnamefont {H.~F.}\ \bibnamefont {Hofmann}},\
  }\bibfield  {title} {\bibinfo {title} {Contextuality, coherences, and quantum
  cheshire cats},\ }\href {https://doi.org/10.1088/1367-2630/ad0bd4} {\bibfield
   {journal} {\bibinfo  {journal} {New Journal of Physics}\ }\textbf {\bibinfo
  {volume} {25}},\ \bibinfo {pages} {113028} (\bibinfo {year}
  {2023})}\BibitemShut {NoStop}%
\bibitem [{\citenamefont {Szangolies}(2023)}]{szangolies2023quantum}%
  \BibitemOpen
  \bibfield  {author} {\bibinfo {author} {\bibfnamefont {J.}~\bibnamefont
  {Szangolies}},\ }\href
  {https://doi.org/https://doi.org/10.48550/arXiv.2011.12716} {\bibinfo {title}
  {The {Q}uantum {R}ashomon {E}ffect: A {S}trengthened {F}rauchiger-{R}enner
  {A}rgument}},\ \bibinfo {howpublished} {arXiv preprint: 2011.12716
  [quant-ph]} (\bibinfo {year} {2023})\BibitemShut {NoStop}%
\bibitem [{\citenamefont {Walleghem}\ \emph {et~al.}(2024)\citenamefont
  {Walleghem}, \citenamefont {Wagner}, \citenamefont {Y\={\i}ng},\ and\
  \citenamefont {Schmid}}]{walleghem2024extended}%
  \BibitemOpen
  \bibfield  {author} {\bibinfo {author} {\bibfnamefont {L.}~\bibnamefont
  {Walleghem}}, \bibinfo {author} {\bibfnamefont {R.}~\bibnamefont {Wagner}},
  \bibinfo {author} {\bibfnamefont {Y.}~\bibnamefont {Y\={\i}ng}},\ and\
  \bibinfo {author} {\bibfnamefont {D.}~\bibnamefont {Schmid}},\ }\href
  {https://doi.org/https://doi.org/10.48550/arXiv.2310.06976} {\bibinfo {title}
  {Extended {W}igner's friend paradoxes do not require nonlocal
  correlations}},\ \bibinfo {howpublished} {arXiv preprint: 2310.06976
  [quant-ph]} (\bibinfo {year} {2024})\BibitemShut {NoStop}%
\bibitem [{\citenamefont {Fritz}(2015)}]{fritz2017resource}%
  \BibitemOpen
  \bibfield  {author} {\bibinfo {author} {\bibfnamefont {T.}~\bibnamefont
  {Fritz}},\ }\bibfield  {title} {\bibinfo {title} {Resource convertibility and
  ordered commutative monoids},\ }\href
  {https://doi.org/10.1017/s0960129515000444} {\bibfield  {journal} {\bibinfo
  {journal} {Mathematical Structures in Computer Science}\ }\textbf {\bibinfo
  {volume} {27}},\ \bibinfo {pages} {850–938} (\bibinfo {year}
  {2015})}\BibitemShut {NoStop}%
\bibitem [{\citenamefont {Chitambar}\ and\ \citenamefont
  {Gour}(2019)}]{chitambar2019quantum}%
  \BibitemOpen
  \bibfield  {author} {\bibinfo {author} {\bibfnamefont {E.}~\bibnamefont
  {Chitambar}}\ and\ \bibinfo {author} {\bibfnamefont {G.}~\bibnamefont
  {Gour}},\ }\bibfield  {title} {\bibinfo {title} {Quantum resource theories},\
  }\href {https://doi.org/10.1103/RevModPhys.91.025001} {\bibfield  {journal}
  {\bibinfo  {journal} {Rev. Mod. Phys.}\ }\textbf {\bibinfo {volume} {91}},\
  \bibinfo {pages} {025001} (\bibinfo {year} {2019})}\BibitemShut {NoStop}%
\bibitem [{\citenamefont {Gour}(2024)}]{gour2024resources}%
  \BibitemOpen
  \bibfield  {author} {\bibinfo {author} {\bibfnamefont {G.}~\bibnamefont
  {Gour}},\ }\href {https://doi.org/https://doi.org/10.48550/arXiv.2402.05474}
  {\bibinfo {title} {Resources of the {Q}uantum {W}orld}},\ \bibinfo
  {howpublished} {arXiv preprint: 2402.05474 [quant-ph]} (\bibinfo {year}
  {2024})\BibitemShut {NoStop}%
\bibitem [{\citenamefont {Gour}\ and\ \citenamefont
  {Scandolo}(2020)}]{gour2020dynamical}%
  \BibitemOpen
  \bibfield  {author} {\bibinfo {author} {\bibfnamefont {G.}~\bibnamefont
  {Gour}}\ and\ \bibinfo {author} {\bibfnamefont {C.~M.}\ \bibnamefont
  {Scandolo}},\ }\bibfield  {title} {\bibinfo {title} {Dynamical
  entanglement},\ }\href {https://doi.org/10.1103/PhysRevLett.125.180505}
  {\bibfield  {journal} {\bibinfo  {journal} {Phys. Rev. Lett.}\ }\textbf
  {\bibinfo {volume} {125}},\ \bibinfo {pages} {180505} (\bibinfo {year}
  {2020})}\BibitemShut {NoStop}%
\bibitem [{\citenamefont {Theurer}\ \emph {et~al.}(2020)\citenamefont
  {Theurer}, \citenamefont {Satyajit},\ and\ \citenamefont
  {Plenio}}]{theurer2020quantifying}%
  \BibitemOpen
  \bibfield  {author} {\bibinfo {author} {\bibfnamefont {T.}~\bibnamefont
  {Theurer}}, \bibinfo {author} {\bibfnamefont {S.}~\bibnamefont {Satyajit}},\
  and\ \bibinfo {author} {\bibfnamefont {M.~B.}\ \bibnamefont {Plenio}},\
  }\bibfield  {title} {\bibinfo {title} {Quantifying dynamical coherence with
  dynamical entanglement},\ }\href
  {https://doi.org/10.1103/PhysRevLett.125.130401} {\bibfield  {journal}
  {\bibinfo  {journal} {Phys. Rev. Lett.}\ }\textbf {\bibinfo {volume} {125}},\
  \bibinfo {pages} {130401} (\bibinfo {year} {2020})}\BibitemShut {NoStop}%
\bibitem [{\citenamefont {Wolfe}\ \emph {et~al.}(2020)\citenamefont {Wolfe},
  \citenamefont {Schmid}, \citenamefont {Sainz}, \citenamefont {Kunjwal},\ and\
  \citenamefont {Spekkens}}]{wolfe2020quantifying}%
  \BibitemOpen
  \bibfield  {author} {\bibinfo {author} {\bibfnamefont {E.}~\bibnamefont
  {Wolfe}}, \bibinfo {author} {\bibfnamefont {D.}~\bibnamefont {Schmid}},
  \bibinfo {author} {\bibfnamefont {A.~B.}\ \bibnamefont {Sainz}}, \bibinfo
  {author} {\bibfnamefont {R.}~\bibnamefont {Kunjwal}},\ and\ \bibinfo {author}
  {\bibfnamefont {R.~W.}\ \bibnamefont {Spekkens}},\ }\bibfield  {title}
  {\bibinfo {title} {Quantifying {B}ell: the {R}esource {T}heory of
  {N}onclassicality of {C}ommon-{C}ause {B}oxes},\ }\href
  {https://doi.org/10.22331/q-2020-06-08-280} {\bibfield  {journal} {\bibinfo
  {journal} {{Quantum}}\ }\textbf {\bibinfo {volume} {4}},\ \bibinfo {pages}
  {280} (\bibinfo {year} {2020})}\BibitemShut {NoStop}%
\bibitem [{\citenamefont {Horodecki}\ \emph {et~al.}(2023)\citenamefont
  {Horodecki}, \citenamefont {Zhou}, \citenamefont {Stankiewicz}, \citenamefont
  {Salazar}, \citenamefont {Horodecki}, \citenamefont {Raussendorf},
  \citenamefont {Horodecki}, \citenamefont {Ramanathan},\ and\ \citenamefont
  {Tyhurst}}]{horodecki2023rank}%
  \BibitemOpen
  \bibfield  {author} {\bibinfo {author} {\bibfnamefont {K.}~\bibnamefont
  {Horodecki}}, \bibinfo {author} {\bibfnamefont {J.}~\bibnamefont {Zhou}},
  \bibinfo {author} {\bibfnamefont {M.}~\bibnamefont {Stankiewicz}}, \bibinfo
  {author} {\bibfnamefont {R.}~\bibnamefont {Salazar}}, \bibinfo {author}
  {\bibfnamefont {P.}~\bibnamefont {Horodecki}}, \bibinfo {author}
  {\bibfnamefont {R.}~\bibnamefont {Raussendorf}}, \bibinfo {author}
  {\bibfnamefont {R.}~\bibnamefont {Horodecki}}, \bibinfo {author}
  {\bibfnamefont {R.}~\bibnamefont {Ramanathan}},\ and\ \bibinfo {author}
  {\bibfnamefont {E.}~\bibnamefont {Tyhurst}},\ }\bibfield  {title} {\bibinfo
  {title} {The rank of contextuality},\ }\href
  {https://doi.org/10.1088/1367-2630/acdf78} {\bibfield  {journal} {\bibinfo
  {journal} {New Journal of Physics}\ }\textbf {\bibinfo {volume} {25}},\
  \bibinfo {pages} {073003} (\bibinfo {year} {2023})}\BibitemShut {NoStop}%
\bibitem [{\citenamefont {Horodecki}\ \emph {et~al.}(2015)\citenamefont
  {Horodecki}, \citenamefont {Grudka}, \citenamefont {Joshi}, \citenamefont
  {K\l{}obus},\ and\ \citenamefont {\L{}odyga}}]{horodecki2015axiomatic}%
  \BibitemOpen
  \bibfield  {author} {\bibinfo {author} {\bibfnamefont {K.}~\bibnamefont
  {Horodecki}}, \bibinfo {author} {\bibfnamefont {A.}~\bibnamefont {Grudka}},
  \bibinfo {author} {\bibfnamefont {P.}~\bibnamefont {Joshi}}, \bibinfo
  {author} {\bibfnamefont {W.}~\bibnamefont {K\l{}obus}},\ and\ \bibinfo
  {author} {\bibfnamefont {J.}~\bibnamefont {\L{}odyga}},\ }\bibfield  {title}
  {\bibinfo {title} {Axiomatic approach to contextuality and nonlocality},\
  }\href {https://doi.org/10.1103/PhysRevA.92.032104} {\bibfield  {journal}
  {\bibinfo  {journal} {Phys. Rev. A}\ }\textbf {\bibinfo {volume} {92}},\
  \bibinfo {pages} {032104} (\bibinfo {year} {2015})}\BibitemShut {NoStop}%
\bibitem [{\citenamefont {Grudka}\ \emph {et~al.}(2014)\citenamefont {Grudka},
  \citenamefont {Horodecki}, \citenamefont {Horodecki}, \citenamefont
  {Horodecki}, \citenamefont {Horodecki}, \citenamefont {Joshi}, \citenamefont
  {K\l{}obus},\ and\ \citenamefont {W\'ojcik}}]{grudka2014quantifying}%
  \BibitemOpen
  \bibfield  {author} {\bibinfo {author} {\bibfnamefont {A.}~\bibnamefont
  {Grudka}}, \bibinfo {author} {\bibfnamefont {K.}~\bibnamefont {Horodecki}},
  \bibinfo {author} {\bibfnamefont {M.}~\bibnamefont {Horodecki}}, \bibinfo
  {author} {\bibfnamefont {P.}~\bibnamefont {Horodecki}}, \bibinfo {author}
  {\bibfnamefont {R.}~\bibnamefont {Horodecki}}, \bibinfo {author}
  {\bibfnamefont {P.}~\bibnamefont {Joshi}}, \bibinfo {author} {\bibfnamefont
  {W.}~\bibnamefont {K\l{}obus}},\ and\ \bibinfo {author} {\bibfnamefont
  {A.}~\bibnamefont {W\'ojcik}},\ }\bibfield  {title} {\bibinfo {title}
  {Quantifying contextuality},\ }\href
  {https://doi.org/10.1103/PhysRevLett.112.120401} {\bibfield  {journal}
  {\bibinfo  {journal} {Phys. Rev. Lett.}\ }\textbf {\bibinfo {volume} {112}},\
  \bibinfo {pages} {120401} (\bibinfo {year} {2014})}\BibitemShut {NoStop}%
\bibitem [{\citenamefont {Abramsky}\ \emph {et~al.}(2019)\citenamefont
  {Abramsky}, \citenamefont {Barbosa}, \citenamefont {Karvonen},\ and\
  \citenamefont {Mansfield}}]{abramsky2019comonadic}%
  \BibitemOpen
  \bibfield  {author} {\bibinfo {author} {\bibfnamefont {S.}~\bibnamefont
  {Abramsky}}, \bibinfo {author} {\bibfnamefont {R.~S.}\ \bibnamefont
  {Barbosa}}, \bibinfo {author} {\bibfnamefont {M.}~\bibnamefont {Karvonen}},\
  and\ \bibinfo {author} {\bibfnamefont {S.}~\bibnamefont {Mansfield}},\
  }\bibfield  {title} {\bibinfo {title} {A comonadic view of simulation and
  quantum resources},\ }in\ \href {https://doi.org/10.1109/lics.2019.8785677}
  {\emph {\bibinfo {booktitle} {2019 34th Annual ACM/IEEE Symposium on Logic in
  Computer Science (LICS)}}}\ (\bibinfo  {publisher} {IEEE},\ \bibinfo {year}
  {2019})\ pp.\ \bibinfo {pages} {1--12}\BibitemShut {NoStop}%
\bibitem [{\citenamefont {Karvonen}(2021)}]{karvonen2021neither}%
  \BibitemOpen
  \bibfield  {author} {\bibinfo {author} {\bibfnamefont {M.}~\bibnamefont
  {Karvonen}},\ }\bibfield  {title} {\bibinfo {title} {Neither contextuality
  nor nonlocality admits catalysts},\ }\href
  {https://doi.org/10.1103/PhysRevLett.127.160402} {\bibfield  {journal}
  {\bibinfo  {journal} {Phys. Rev. Lett.}\ }\textbf {\bibinfo {volume} {127}},\
  \bibinfo {pages} {160402} (\bibinfo {year} {2021})}\BibitemShut {NoStop}%
\bibitem [{\citenamefont {Barbosa}\ \emph {et~al.}(2023)\citenamefont
  {Barbosa}, \citenamefont {Karvonen},\ and\ \citenamefont
  {Mansfield}}]{barbosa2023closing}%
  \BibitemOpen
  \bibfield  {author} {\bibinfo {author} {\bibfnamefont {R.~S.}\ \bibnamefont
  {Barbosa}}, \bibinfo {author} {\bibfnamefont {M.}~\bibnamefont {Karvonen}},\
  and\ \bibinfo {author} {\bibfnamefont {S.}~\bibnamefont {Mansfield}},\
  }\bibfield  {title} {\bibinfo {title} {Closing {B}ell: Boxing black box
  simulations in the resource theory of contextuality},\ }in\ \href
  {https://link.springer.com/book/9783031241161} {\emph {\bibinfo {booktitle}
  {Samson Abramsky on Logic and Structure in Computer Science and Beyond}}},\
  \bibinfo {series} {Outstanding Contributions to Logic}, Vol.~\bibinfo
  {volume} {25},\ \bibinfo {editor} {edited by\ \bibinfo {editor}
  {\bibfnamefont {A.}~\bibnamefont {Palmigiano}}\ and\ \bibinfo {editor}
  {\bibfnamefont {M.}~\bibnamefont {Sadrzadeh}}}\ (\bibinfo  {publisher}
  {Springer Cham},\ \bibinfo {year} {2023})\BibitemShut {NoStop}%
\bibitem [{\citenamefont {Amaral}\ \emph
  {et~al.}(2018{\natexlab{a}})\citenamefont {Amaral}, \citenamefont {Cabello},
  \citenamefont {Cunha},\ and\ \citenamefont
  {Aolita}}]{amaral2018noncontextual}%
  \BibitemOpen
  \bibfield  {author} {\bibinfo {author} {\bibfnamefont {B.}~\bibnamefont
  {Amaral}}, \bibinfo {author} {\bibfnamefont {A.}~\bibnamefont {Cabello}},
  \bibinfo {author} {\bibfnamefont {M.~T.}\ \bibnamefont {Cunha}},\ and\
  \bibinfo {author} {\bibfnamefont {L.}~\bibnamefont {Aolita}},\ }\bibfield
  {title} {\bibinfo {title} {Noncontextual wirings},\ }\href
  {https://doi.org/10.1103/PhysRevLett.120.130403} {\bibfield  {journal}
  {\bibinfo  {journal} {Phys. Rev. Lett.}\ }\textbf {\bibinfo {volume} {120}},\
  \bibinfo {pages} {130403} (\bibinfo {year} {2018}{\natexlab{a}})}\BibitemShut
  {NoStop}%
\bibitem [{\citenamefont {Amaral}(2019)}]{amaral2019resource}%
  \BibitemOpen
  \bibfield  {author} {\bibinfo {author} {\bibfnamefont {B.}~\bibnamefont
  {Amaral}},\ }\bibfield  {title} {\bibinfo {title} {Resource theory of
  contextuality},\ }\href {https://doi.org/10.1098/rsta.2019.0010} {\bibfield
  {journal} {\bibinfo  {journal} {Philosophical Transactions of the Royal
  Society A: Mathematical, Physical and Engineering Sciences}\ }\textbf
  {\bibinfo {volume} {377}},\ \bibinfo {pages} {20190010} (\bibinfo {year}
  {2019})}\BibitemShut {NoStop}%
\bibitem [{\citenamefont {Gallego}\ and\ \citenamefont
  {Aolita}(2017)}]{gallego2017nonlocality}%
  \BibitemOpen
  \bibfield  {author} {\bibinfo {author} {\bibfnamefont {R.}~\bibnamefont
  {Gallego}}\ and\ \bibinfo {author} {\bibfnamefont {L.}~\bibnamefont
  {Aolita}},\ }\bibfield  {title} {\bibinfo {title} {Nonlocality free wirings
  and the distinguishability between {B}ell boxes},\ }\href
  {https://doi.org/10.1103/PhysRevA.95.032118} {\bibfield  {journal} {\bibinfo
  {journal} {Phys. Rev. A}\ }\textbf {\bibinfo {volume} {95}},\ \bibinfo
  {pages} {032118} (\bibinfo {year} {2017})}\BibitemShut {NoStop}%
\bibitem [{\citenamefont {de~Vicente}(2014)}]{deVicente2014LOSR}%
  \BibitemOpen
  \bibfield  {author} {\bibinfo {author} {\bibfnamefont {J.~I.}\ \bibnamefont
  {de~Vicente}},\ }\bibfield  {title} {\bibinfo {title} {On nonlocality as a
  resource theory and nonlocality measures},\ }\href
  {https://doi.org/10.1088/1751-8113/47/42/424017} {\bibfield  {journal}
  {\bibinfo  {journal} {Journal of Physics A: Mathematical and Theoretical}\
  }\textbf {\bibinfo {volume} {47}},\ \bibinfo {pages} {424017} (\bibinfo
  {year} {2014})}\BibitemShut {NoStop}%
\bibitem [{\citenamefont {Geller}\ and\ \citenamefont
  {Piani}(2014)}]{geller2014quantifying}%
  \BibitemOpen
  \bibfield  {author} {\bibinfo {author} {\bibfnamefont {J.}~\bibnamefont
  {Geller}}\ and\ \bibinfo {author} {\bibfnamefont {M.}~\bibnamefont {Piani}},\
  }\bibfield  {title} {\bibinfo {title} {Quantifying non-classical and
  beyond-quantum correlations in the unified operator formalism},\ }\href
  {https://doi.org/10.1088/1751-8113/47/42/424030} {\bibfield  {journal}
  {\bibinfo  {journal} {Journal of Physics A: Mathematical and Theoretical}\
  }\textbf {\bibinfo {volume} {47}},\ \bibinfo {pages} {424030} (\bibinfo
  {year} {2014})}\BibitemShut {NoStop}%
\bibitem [{\citenamefont {Clauser}\ \emph {et~al.}(1969)\citenamefont
  {Clauser}, \citenamefont {Horne}, \citenamefont {Shimony},\ and\
  \citenamefont {Holt}}]{clauser1969proposed}%
  \BibitemOpen
  \bibfield  {author} {\bibinfo {author} {\bibfnamefont {J.~F.}\ \bibnamefont
  {Clauser}}, \bibinfo {author} {\bibfnamefont {M.~A.}\ \bibnamefont {Horne}},
  \bibinfo {author} {\bibfnamefont {A.}~\bibnamefont {Shimony}},\ and\ \bibinfo
  {author} {\bibfnamefont {R.~A.}\ \bibnamefont {Holt}},\ }\bibfield  {title}
  {\bibinfo {title} {Proposed experiment to test local hidden-variable
  theories},\ }\href {https://doi.org/10.1103/PhysRevLett.23.880} {\bibfield
  {journal} {\bibinfo  {journal} {Phys. Rev. Lett.}\ }\textbf {\bibinfo
  {volume} {23}},\ \bibinfo {pages} {880} (\bibinfo {year} {1969})}\BibitemShut
  {NoStop}%
\bibitem [{\citenamefont {Ara\'ujo}\ \emph {et~al.}(2013)\citenamefont
  {Ara\'ujo}, \citenamefont {Quintino}, \citenamefont {Budroni}, \citenamefont
  {Cunha},\ and\ \citenamefont {Cabello}}]{araujo2013all}%
  \BibitemOpen
  \bibfield  {author} {\bibinfo {author} {\bibfnamefont {M.}~\bibnamefont
  {Ara\'ujo}}, \bibinfo {author} {\bibfnamefont {M.~T.}\ \bibnamefont
  {Quintino}}, \bibinfo {author} {\bibfnamefont {C.}~\bibnamefont {Budroni}},
  \bibinfo {author} {\bibfnamefont {M.~T.}\ \bibnamefont {Cunha}},\ and\
  \bibinfo {author} {\bibfnamefont {A.}~\bibnamefont {Cabello}},\ }\bibfield
  {title} {\bibinfo {title} {All noncontextuality inequalities for the
  $n$-cycle scenario},\ }\href {https://doi.org/10.1103/PhysRevA.88.022118}
  {\bibfield  {journal} {\bibinfo  {journal} {Phys. Rev. A}\ }\textbf {\bibinfo
  {volume} {88}},\ \bibinfo {pages} {022118} (\bibinfo {year}
  {2013})}\BibitemShut {NoStop}%
\bibitem [{\citenamefont {Coecke}\ \emph {et~al.}(2016)\citenamefont {Coecke},
  \citenamefont {Fritz},\ and\ \citenamefont
  {Spekkens}}]{coecke2016mathematical}%
  \BibitemOpen
  \bibfield  {author} {\bibinfo {author} {\bibfnamefont {B.}~\bibnamefont
  {Coecke}}, \bibinfo {author} {\bibfnamefont {T.}~\bibnamefont {Fritz}},\ and\
  \bibinfo {author} {\bibfnamefont {R.~W.}\ \bibnamefont {Spekkens}},\
  }\bibfield  {title} {\bibinfo {title} {A mathematical theory of resources},\
  }\href {https://doi.org/10.1016/j.ic.2016.02.008} {\bibfield  {journal}
  {\bibinfo  {journal} {Information and Computation}\ }\textbf {\bibinfo
  {volume} {250}},\ \bibinfo {pages} {59–86} (\bibinfo {year}
  {2016})}\BibitemShut {NoStop}%
\bibitem [{\citenamefont {Gonda}\ and\ \citenamefont
  {Spekkens}(2023)}]{gonda2019monotones}%
  \BibitemOpen
  \bibfield  {author} {\bibinfo {author} {\bibfnamefont {T.}~\bibnamefont
  {Gonda}}\ and\ \bibinfo {author} {\bibfnamefont {R.~W.}\ \bibnamefont
  {Spekkens}},\ }\bibfield  {title} {\bibinfo {title} {Monotones in general
  resource theories},\ }\href {https://doi.org/10.32408/compositionality-5-7}
  {\bibfield  {journal} {\bibinfo  {journal} {Compositionality}\ }\textbf
  {\bibinfo {volume} {5}},\ \bibinfo {pages} {7} (\bibinfo {year}
  {2023})}\BibitemShut {NoStop}%
\bibitem [{\citenamefont {Veitch}\ \emph {et~al.}(2014)\citenamefont {Veitch},
  \citenamefont {Hamed~Mousavian}, \citenamefont {Gottesman},\ and\
  \citenamefont {Emerson}}]{veitch2014resourcetheory}%
  \BibitemOpen
  \bibfield  {author} {\bibinfo {author} {\bibfnamefont {V.}~\bibnamefont
  {Veitch}}, \bibinfo {author} {\bibfnamefont {S.~A.}\ \bibnamefont
  {Hamed~Mousavian}}, \bibinfo {author} {\bibfnamefont {D.}~\bibnamefont
  {Gottesman}},\ and\ \bibinfo {author} {\bibfnamefont {J.}~\bibnamefont
  {Emerson}},\ }\bibfield  {title} {\bibinfo {title} {The resource theory of
  stabilizer quantum computation},\ }\href
  {https://doi.org/10.1088/1367-2630/16/1/013009} {\bibfield  {journal}
  {\bibinfo  {journal} {New Journal of Physics}\ }\textbf {\bibinfo {volume}
  {16}},\ \bibinfo {pages} {013009} (\bibinfo {year} {2014})}\BibitemShut
  {NoStop}%
\bibitem [{\citenamefont {Bravyi}\ and\ \citenamefont
  {Gosset}(2016)}]{braviy2016improved}%
  \BibitemOpen
  \bibfield  {author} {\bibinfo {author} {\bibfnamefont {S.}~\bibnamefont
  {Bravyi}}\ and\ \bibinfo {author} {\bibfnamefont {D.}~\bibnamefont
  {Gosset}},\ }\bibfield  {title} {\bibinfo {title} {Improved classical
  simulation of quantum circuits dominated by clifford gates},\ }\href
  {https://doi.org/10.1103/PhysRevLett.116.250501} {\bibfield  {journal}
  {\bibinfo  {journal} {Phys. Rev. Lett.}\ }\textbf {\bibinfo {volume} {116}},\
  \bibinfo {pages} {250501} (\bibinfo {year} {2016})}\BibitemShut {NoStop}%
\bibitem [{\citenamefont {Bluvstein}\ \emph {et~al.}(2023)\citenamefont
  {Bluvstein}, \citenamefont {Evered}, \citenamefont {Geim}, \citenamefont
  {Li}, \citenamefont {Zhou}, \citenamefont {Manovitz}, \citenamefont {Ebadi},
  \citenamefont {Cain}, \citenamefont {Kalinowski}, \citenamefont {Hangleiter},
  \citenamefont {Bonilla~Ataides}, \citenamefont {Maskara}, \citenamefont
  {Cong}, \citenamefont {Gao}, \citenamefont {Sales~Rodriguez}, \citenamefont
  {Karolyshyn}, \citenamefont {Semeghini}, \citenamefont {Gullans},
  \citenamefont {Greiner}, \citenamefont {Vuletić},\ and\ \citenamefont
  {Lukin}}]{bluvstein2023logical}%
  \BibitemOpen
  \bibfield  {author} {\bibinfo {author} {\bibfnamefont {D.}~\bibnamefont
  {Bluvstein}}, \bibinfo {author} {\bibfnamefont {S.~J.}\ \bibnamefont
  {Evered}}, \bibinfo {author} {\bibfnamefont {A.~A.}\ \bibnamefont {Geim}},
  \bibinfo {author} {\bibfnamefont {S.~H.}\ \bibnamefont {Li}}, \bibinfo
  {author} {\bibfnamefont {H.}~\bibnamefont {Zhou}}, \bibinfo {author}
  {\bibfnamefont {T.}~\bibnamefont {Manovitz}}, \bibinfo {author}
  {\bibfnamefont {S.}~\bibnamefont {Ebadi}}, \bibinfo {author} {\bibfnamefont
  {M.}~\bibnamefont {Cain}}, \bibinfo {author} {\bibfnamefont {M.}~\bibnamefont
  {Kalinowski}}, \bibinfo {author} {\bibfnamefont {D.}~\bibnamefont
  {Hangleiter}}, \bibinfo {author} {\bibfnamefont {J.~P.}\ \bibnamefont
  {Bonilla~Ataides}}, \bibinfo {author} {\bibfnamefont {N.}~\bibnamefont
  {Maskara}}, \bibinfo {author} {\bibfnamefont {I.}~\bibnamefont {Cong}},
  \bibinfo {author} {\bibfnamefont {X.}~\bibnamefont {Gao}}, \bibinfo {author}
  {\bibfnamefont {P.}~\bibnamefont {Sales~Rodriguez}}, \bibinfo {author}
  {\bibfnamefont {T.}~\bibnamefont {Karolyshyn}}, \bibinfo {author}
  {\bibfnamefont {G.}~\bibnamefont {Semeghini}}, \bibinfo {author}
  {\bibfnamefont {M.~J.}\ \bibnamefont {Gullans}}, \bibinfo {author}
  {\bibfnamefont {M.}~\bibnamefont {Greiner}}, \bibinfo {author} {\bibfnamefont
  {V.}~\bibnamefont {Vuletić}},\ and\ \bibinfo {author} {\bibfnamefont
  {M.~D.}\ \bibnamefont {Lukin}},\ }\bibfield  {title} {\bibinfo {title}
  {Logical quantum processor based on reconfigurable atom arrays},\ }\href
  {https://doi.org/10.1038/s41586-023-06927-3} {\bibfield  {journal} {\bibinfo
  {journal} {Nature}\ }\textbf {\bibinfo {volume} {626}},\ \bibinfo {pages}
  {58–65} (\bibinfo {year} {2023})}\BibitemShut {NoStop}%
\bibitem [{\citenamefont {Moses}\ \emph {et~al.}(2023)\citenamefont {Moses},
  \citenamefont {Baldwin}, \citenamefont {Allman}, \citenamefont {Ancona},
  \citenamefont {Ascarrunz}, \citenamefont {Barnes}, \citenamefont
  {Bartolotta}, \citenamefont {Bjork}, \citenamefont {Blanchard}, \citenamefont
  {Bohn}, \citenamefont {Bohnet}, \citenamefont {Brown}, \citenamefont
  {Burdick}, \citenamefont {Burton}, \citenamefont {Campbell}, \citenamefont
  {Campora}, \citenamefont {Carron}, \citenamefont {Chambers}, \citenamefont
  {Chan}, \citenamefont {Chen}, \citenamefont {Chernoguzov}, \citenamefont
  {Chertkov}, \citenamefont {Colina}, \citenamefont {Curtis}, \citenamefont
  {Daniel}, \citenamefont {DeCross}, \citenamefont {Deen}, \citenamefont
  {Delaney}, \citenamefont {Dreiling}, \citenamefont {Ertsgaard}, \citenamefont
  {Esposito}, \citenamefont {Estey}, \citenamefont {Fabrikant}, \citenamefont
  {Figgatt}, \citenamefont {Foltz}, \citenamefont {Foss-Feig}, \citenamefont
  {Francois}, \citenamefont {Gaebler}, \citenamefont {Gatterman}, \citenamefont
  {Gilbreth}, \citenamefont {Giles}, \citenamefont {Glynn}, \citenamefont
  {Hall}, \citenamefont {Hankin}, \citenamefont {Hansen}, \citenamefont
  {Hayes}, \citenamefont {Higashi}, \citenamefont {Hoffman}, \citenamefont
  {Horning}, \citenamefont {Hout}, \citenamefont {Jacobs}, \citenamefont
  {Johansen}, \citenamefont {Jones}, \citenamefont {Karcz}, \citenamefont
  {Klein}, \citenamefont {Lauria}, \citenamefont {Lee}, \citenamefont {Liefer},
  \citenamefont {Lu}, \citenamefont {Lucchetti}, \citenamefont {Lytle},
  \citenamefont {Malm}, \citenamefont {Matheny}, \citenamefont {Mathewson},
  \citenamefont {Mayer}, \citenamefont {Miller}, \citenamefont {Mills},
  \citenamefont {Neyenhuis}, \citenamefont {Nugent}, \citenamefont {Olson},
  \citenamefont {Parks}, \citenamefont {Price}, \citenamefont {Price},
  \citenamefont {Pugh}, \citenamefont {Ransford}, \citenamefont {Reed},
  \citenamefont {Roman}, \citenamefont {Rowe}, \citenamefont {Ryan-Anderson},
  \citenamefont {Sanders}, \citenamefont {Sedlacek}, \citenamefont {Shevchuk},
  \citenamefont {Siegfried}, \citenamefont {Skripka}, \citenamefont {Spaun},
  \citenamefont {Sprenkle}, \citenamefont {Stutz}, \citenamefont {Swallows},
  \citenamefont {Tobey}, \citenamefont {Tran}, \citenamefont {Tran},
  \citenamefont {Vogt}, \citenamefont {Volin}, \citenamefont {Walker},
  \citenamefont {Zolot},\ and\ \citenamefont {Pino}}]{moses2023race_track}%
  \BibitemOpen
  \bibfield  {author} {\bibinfo {author} {\bibfnamefont {S.~A.}\ \bibnamefont
  {Moses}}, \bibinfo {author} {\bibfnamefont {C.~H.}\ \bibnamefont {Baldwin}},
  \bibinfo {author} {\bibfnamefont {M.~S.}\ \bibnamefont {Allman}}, \bibinfo
  {author} {\bibfnamefont {R.}~\bibnamefont {Ancona}}, \bibinfo {author}
  {\bibfnamefont {L.}~\bibnamefont {Ascarrunz}}, \bibinfo {author}
  {\bibfnamefont {C.}~\bibnamefont {Barnes}}, \bibinfo {author} {\bibfnamefont
  {J.}~\bibnamefont {Bartolotta}}, \bibinfo {author} {\bibfnamefont
  {B.}~\bibnamefont {Bjork}}, \bibinfo {author} {\bibfnamefont
  {P.}~\bibnamefont {Blanchard}}, \bibinfo {author} {\bibfnamefont
  {M.}~\bibnamefont {Bohn}}, \bibinfo {author} {\bibfnamefont {J.~G.}\
  \bibnamefont {Bohnet}}, \bibinfo {author} {\bibfnamefont {N.~C.}\
  \bibnamefont {Brown}}, \bibinfo {author} {\bibfnamefont {N.~Q.}\ \bibnamefont
  {Burdick}}, \bibinfo {author} {\bibfnamefont {W.~C.}\ \bibnamefont {Burton}},
  \bibinfo {author} {\bibfnamefont {S.~L.}\ \bibnamefont {Campbell}}, \bibinfo
  {author} {\bibfnamefont {J.~P.}\ \bibnamefont {Campora}}, \bibinfo {author}
  {\bibfnamefont {C.}~\bibnamefont {Carron}}, \bibinfo {author} {\bibfnamefont
  {J.}~\bibnamefont {Chambers}}, \bibinfo {author} {\bibfnamefont {J.~W.}\
  \bibnamefont {Chan}}, \bibinfo {author} {\bibfnamefont {Y.~H.}\ \bibnamefont
  {Chen}}, \bibinfo {author} {\bibfnamefont {A.}~\bibnamefont {Chernoguzov}},
  \bibinfo {author} {\bibfnamefont {E.}~\bibnamefont {Chertkov}}, \bibinfo
  {author} {\bibfnamefont {J.}~\bibnamefont {Colina}}, \bibinfo {author}
  {\bibfnamefont {J.~P.}\ \bibnamefont {Curtis}}, \bibinfo {author}
  {\bibfnamefont {R.}~\bibnamefont {Daniel}}, \bibinfo {author} {\bibfnamefont
  {M.}~\bibnamefont {DeCross}}, \bibinfo {author} {\bibfnamefont
  {D.}~\bibnamefont {Deen}}, \bibinfo {author} {\bibfnamefont {C.}~\bibnamefont
  {Delaney}}, \bibinfo {author} {\bibfnamefont {J.~M.}\ \bibnamefont
  {Dreiling}}, \bibinfo {author} {\bibfnamefont {C.~T.}\ \bibnamefont
  {Ertsgaard}}, \bibinfo {author} {\bibfnamefont {J.}~\bibnamefont {Esposito}},
  \bibinfo {author} {\bibfnamefont {B.}~\bibnamefont {Estey}}, \bibinfo
  {author} {\bibfnamefont {M.}~\bibnamefont {Fabrikant}}, \bibinfo {author}
  {\bibfnamefont {C.}~\bibnamefont {Figgatt}}, \bibinfo {author} {\bibfnamefont
  {C.}~\bibnamefont {Foltz}}, \bibinfo {author} {\bibfnamefont
  {M.}~\bibnamefont {Foss-Feig}}, \bibinfo {author} {\bibfnamefont
  {D.}~\bibnamefont {Francois}}, \bibinfo {author} {\bibfnamefont {J.~P.}\
  \bibnamefont {Gaebler}}, \bibinfo {author} {\bibfnamefont {T.~M.}\
  \bibnamefont {Gatterman}}, \bibinfo {author} {\bibfnamefont {C.~N.}\
  \bibnamefont {Gilbreth}}, \bibinfo {author} {\bibfnamefont {J.}~\bibnamefont
  {Giles}}, \bibinfo {author} {\bibfnamefont {E.}~\bibnamefont {Glynn}},
  \bibinfo {author} {\bibfnamefont {A.}~\bibnamefont {Hall}}, \bibinfo {author}
  {\bibfnamefont {A.~M.}\ \bibnamefont {Hankin}}, \bibinfo {author}
  {\bibfnamefont {A.}~\bibnamefont {Hansen}}, \bibinfo {author} {\bibfnamefont
  {D.}~\bibnamefont {Hayes}}, \bibinfo {author} {\bibfnamefont
  {B.}~\bibnamefont {Higashi}}, \bibinfo {author} {\bibfnamefont {I.~M.}\
  \bibnamefont {Hoffman}}, \bibinfo {author} {\bibfnamefont {B.}~\bibnamefont
  {Horning}}, \bibinfo {author} {\bibfnamefont {J.~J.}\ \bibnamefont {Hout}},
  \bibinfo {author} {\bibfnamefont {R.}~\bibnamefont {Jacobs}}, \bibinfo
  {author} {\bibfnamefont {J.}~\bibnamefont {Johansen}}, \bibinfo {author}
  {\bibfnamefont {L.}~\bibnamefont {Jones}}, \bibinfo {author} {\bibfnamefont
  {J.}~\bibnamefont {Karcz}}, \bibinfo {author} {\bibfnamefont
  {T.}~\bibnamefont {Klein}}, \bibinfo {author} {\bibfnamefont
  {P.}~\bibnamefont {Lauria}}, \bibinfo {author} {\bibfnamefont
  {P.}~\bibnamefont {Lee}}, \bibinfo {author} {\bibfnamefont {D.}~\bibnamefont
  {Liefer}}, \bibinfo {author} {\bibfnamefont {S.~T.}\ \bibnamefont {Lu}},
  \bibinfo {author} {\bibfnamefont {D.}~\bibnamefont {Lucchetti}}, \bibinfo
  {author} {\bibfnamefont {C.}~\bibnamefont {Lytle}}, \bibinfo {author}
  {\bibfnamefont {A.}~\bibnamefont {Malm}}, \bibinfo {author} {\bibfnamefont
  {M.}~\bibnamefont {Matheny}}, \bibinfo {author} {\bibfnamefont
  {B.}~\bibnamefont {Mathewson}}, \bibinfo {author} {\bibfnamefont
  {K.}~\bibnamefont {Mayer}}, \bibinfo {author} {\bibfnamefont {D.~B.}\
  \bibnamefont {Miller}}, \bibinfo {author} {\bibfnamefont {M.}~\bibnamefont
  {Mills}}, \bibinfo {author} {\bibfnamefont {B.}~\bibnamefont {Neyenhuis}},
  \bibinfo {author} {\bibfnamefont {L.}~\bibnamefont {Nugent}}, \bibinfo
  {author} {\bibfnamefont {S.}~\bibnamefont {Olson}}, \bibinfo {author}
  {\bibfnamefont {J.}~\bibnamefont {Parks}}, \bibinfo {author} {\bibfnamefont
  {G.~N.}\ \bibnamefont {Price}}, \bibinfo {author} {\bibfnamefont
  {Z.}~\bibnamefont {Price}}, \bibinfo {author} {\bibfnamefont
  {M.}~\bibnamefont {Pugh}}, \bibinfo {author} {\bibfnamefont {A.}~\bibnamefont
  {Ransford}}, \bibinfo {author} {\bibfnamefont {A.~P.}\ \bibnamefont {Reed}},
  \bibinfo {author} {\bibfnamefont {C.}~\bibnamefont {Roman}}, \bibinfo
  {author} {\bibfnamefont {M.}~\bibnamefont {Rowe}}, \bibinfo {author}
  {\bibfnamefont {C.}~\bibnamefont {Ryan-Anderson}}, \bibinfo {author}
  {\bibfnamefont {S.}~\bibnamefont {Sanders}}, \bibinfo {author} {\bibfnamefont
  {J.}~\bibnamefont {Sedlacek}}, \bibinfo {author} {\bibfnamefont
  {P.}~\bibnamefont {Shevchuk}}, \bibinfo {author} {\bibfnamefont
  {P.}~\bibnamefont {Siegfried}}, \bibinfo {author} {\bibfnamefont
  {T.}~\bibnamefont {Skripka}}, \bibinfo {author} {\bibfnamefont
  {B.}~\bibnamefont {Spaun}}, \bibinfo {author} {\bibfnamefont {R.~T.}\
  \bibnamefont {Sprenkle}}, \bibinfo {author} {\bibfnamefont {R.~P.}\
  \bibnamefont {Stutz}}, \bibinfo {author} {\bibfnamefont {M.}~\bibnamefont
  {Swallows}}, \bibinfo {author} {\bibfnamefont {R.~I.}\ \bibnamefont {Tobey}},
  \bibinfo {author} {\bibfnamefont {A.}~\bibnamefont {Tran}}, \bibinfo {author}
  {\bibfnamefont {T.}~\bibnamefont {Tran}}, \bibinfo {author} {\bibfnamefont
  {E.}~\bibnamefont {Vogt}}, \bibinfo {author} {\bibfnamefont {C.}~\bibnamefont
  {Volin}}, \bibinfo {author} {\bibfnamefont {J.}~\bibnamefont {Walker}},
  \bibinfo {author} {\bibfnamefont {A.~M.}\ \bibnamefont {Zolot}},\ and\
  \bibinfo {author} {\bibfnamefont {J.~M.}\ \bibnamefont {Pino}},\ }\bibfield
  {title} {\bibinfo {title} {A race-track trapped-ion quantum processor},\
  }\href {https://doi.org/10.1103/PhysRevX.13.041052} {\bibfield  {journal}
  {\bibinfo  {journal} {Phys. Rev. X}\ }\textbf {\bibinfo {volume} {13}},\
  \bibinfo {pages} {041052} (\bibinfo {year} {2023})}\BibitemShut {NoStop}%
\bibitem [{\citenamefont {Gross}\ \emph {et~al.}(2009)\citenamefont {Gross},
  \citenamefont {Flammia},\ and\ \citenamefont {Eisert}}]{gross2009most}%
  \BibitemOpen
  \bibfield  {author} {\bibinfo {author} {\bibfnamefont {D.}~\bibnamefont
  {Gross}}, \bibinfo {author} {\bibfnamefont {S.~T.}\ \bibnamefont {Flammia}},\
  and\ \bibinfo {author} {\bibfnamefont {J.}~\bibnamefont {Eisert}},\
  }\bibfield  {title} {\bibinfo {title} {Most quantum states are too entangled
  to be useful as computational resources},\ }\href
  {https://doi.org/10.1103/PhysRevLett.102.190501} {\bibfield  {journal}
  {\bibinfo  {journal} {Phys. Rev. Lett.}\ }\textbf {\bibinfo {volume} {102}},\
  \bibinfo {pages} {190501} (\bibinfo {year} {2009})}\BibitemShut {NoStop}%
\bibitem [{\citenamefont {Horodecki}\ \emph {et~al.}(2009)\citenamefont
  {Horodecki}, \citenamefont {Horodecki}, \citenamefont {Horodecki},\ and\
  \citenamefont {Horodecki}}]{horodecki2009quantum}%
  \BibitemOpen
  \bibfield  {author} {\bibinfo {author} {\bibfnamefont {R.}~\bibnamefont
  {Horodecki}}, \bibinfo {author} {\bibfnamefont {P.}~\bibnamefont
  {Horodecki}}, \bibinfo {author} {\bibfnamefont {M.}~\bibnamefont
  {Horodecki}},\ and\ \bibinfo {author} {\bibfnamefont {K.}~\bibnamefont
  {Horodecki}},\ }\bibfield  {title} {\bibinfo {title} {Quantum entanglement},\
  }\href {https://doi.org/10.1103/RevModPhys.81.865} {\bibfield  {journal}
  {\bibinfo  {journal} {Rev. Mod. Phys.}\ }\textbf {\bibinfo {volume} {81}},\
  \bibinfo {pages} {865} (\bibinfo {year} {2009})}\BibitemShut {NoStop}%
\bibitem [{\citenamefont {Giordani}\ \emph {et~al.}(2021)\citenamefont
  {Giordani}, \citenamefont {Esposito}, \citenamefont {Hoch}, \citenamefont
  {Carvacho}, \citenamefont {Brod}, \citenamefont {Galv\~ao}, \citenamefont
  {Spagnolo},\ and\ \citenamefont {Sciarrino}}]{giordani2021witnesses}%
  \BibitemOpen
  \bibfield  {author} {\bibinfo {author} {\bibfnamefont {T.}~\bibnamefont
  {Giordani}}, \bibinfo {author} {\bibfnamefont {C.}~\bibnamefont {Esposito}},
  \bibinfo {author} {\bibfnamefont {F.}~\bibnamefont {Hoch}}, \bibinfo {author}
  {\bibfnamefont {G.}~\bibnamefont {Carvacho}}, \bibinfo {author}
  {\bibfnamefont {D.~J.}\ \bibnamefont {Brod}}, \bibinfo {author}
  {\bibfnamefont {E.~F.}\ \bibnamefont {Galv\~ao}}, \bibinfo {author}
  {\bibfnamefont {N.}~\bibnamefont {Spagnolo}},\ and\ \bibinfo {author}
  {\bibfnamefont {F.}~\bibnamefont {Sciarrino}},\ }\bibfield  {title} {\bibinfo
  {title} {Witnesses of coherence and dimension from multiphoton
  indistinguishability tests},\ }\href
  {https://doi.org/10.1103/PhysRevResearch.3.023031} {\bibfield  {journal}
  {\bibinfo  {journal} {Phys. Rev. Res.}\ }\textbf {\bibinfo {volume} {3}},\
  \bibinfo {pages} {023031} (\bibinfo {year} {2021})}\BibitemShut {NoStop}%
\bibitem [{\citenamefont {Giordani}\ \emph {et~al.}(2023)\citenamefont
  {Giordani}, \citenamefont {Wagner}, \citenamefont {Esposito}, \citenamefont
  {Camillini}, \citenamefont {Hoch}, \citenamefont {Carvacho}, \citenamefont
  {Pentangelo}, \citenamefont {Ceccarelli}, \citenamefont {Piacentini},
  \citenamefont {Crespi}, \citenamefont {Spagnolo}, \citenamefont {Osellame},
  \citenamefont {Galv\~ao},\ and\ \citenamefont
  {Sciarrino}}]{giordani2023experimental}%
  \BibitemOpen
  \bibfield  {author} {\bibinfo {author} {\bibfnamefont {T.}~\bibnamefont
  {Giordani}}, \bibinfo {author} {\bibfnamefont {R.}~\bibnamefont {Wagner}},
  \bibinfo {author} {\bibfnamefont {C.}~\bibnamefont {Esposito}}, \bibinfo
  {author} {\bibfnamefont {A.}~\bibnamefont {Camillini}}, \bibinfo {author}
  {\bibfnamefont {F.}~\bibnamefont {Hoch}}, \bibinfo {author} {\bibfnamefont
  {G.}~\bibnamefont {Carvacho}}, \bibinfo {author} {\bibfnamefont
  {C.}~\bibnamefont {Pentangelo}}, \bibinfo {author} {\bibfnamefont
  {F.}~\bibnamefont {Ceccarelli}}, \bibinfo {author} {\bibfnamefont
  {S.}~\bibnamefont {Piacentini}}, \bibinfo {author} {\bibfnamefont
  {A.}~\bibnamefont {Crespi}}, \bibinfo {author} {\bibfnamefont
  {N.}~\bibnamefont {Spagnolo}}, \bibinfo {author} {\bibfnamefont
  {R.}~\bibnamefont {Osellame}}, \bibinfo {author} {\bibfnamefont {E.~F.}\
  \bibnamefont {Galv\~ao}},\ and\ \bibinfo {author} {\bibfnamefont
  {F.}~\bibnamefont {Sciarrino}},\ }\bibfield  {title} {\bibinfo {title}
  {Experimental certification of contextuality, coherence, and dimension in a
  programmable universal photonic processor},\ }\href
  {https://doi.org/10.1126/sciadv.adj4249} {\bibfield  {journal} {\bibinfo
  {journal} {Science Advances}\ }\textbf {\bibinfo {volume} {9}},\ \bibinfo
  {pages} {eadj4249} (\bibinfo {year} {2023})}\BibitemShut {NoStop}%
\bibitem [{\citenamefont {Baumgratz}\ \emph {et~al.}(2014)\citenamefont
  {Baumgratz}, \citenamefont {Cramer},\ and\ \citenamefont
  {Plenio}}]{baumgratz2014quantifying}%
  \BibitemOpen
  \bibfield  {author} {\bibinfo {author} {\bibfnamefont {T.}~\bibnamefont
  {Baumgratz}}, \bibinfo {author} {\bibfnamefont {M.}~\bibnamefont {Cramer}},\
  and\ \bibinfo {author} {\bibfnamefont {M.~B.}\ \bibnamefont {Plenio}},\
  }\bibfield  {title} {\bibinfo {title} {Quantifying coherence},\ }\href
  {https://doi.org/10.1103/PhysRevLett.113.140401} {\bibfield  {journal}
  {\bibinfo  {journal} {Physical Review Letters}\ }\textbf {\bibinfo {volume}
  {113}},\ \bibinfo {pages} {140401} (\bibinfo {year} {2014})}\BibitemShut
  {NoStop}%
\bibitem [{\citenamefont {Streltsov}\ \emph {et~al.}(2017)\citenamefont
  {Streltsov}, \citenamefont {Adesso},\ and\ \citenamefont
  {Plenio}}]{streltsov17colloquium}%
  \BibitemOpen
  \bibfield  {author} {\bibinfo {author} {\bibfnamefont {A.}~\bibnamefont
  {Streltsov}}, \bibinfo {author} {\bibfnamefont {G.}~\bibnamefont {Adesso}},\
  and\ \bibinfo {author} {\bibfnamefont {M.~B.}\ \bibnamefont {Plenio}},\
  }\bibfield  {title} {\bibinfo {title} {Colloquium: Quantum coherence as a
  resource},\ }\href {https://doi.org/10.1103/RevModPhys.89.041003} {\bibfield
  {journal} {\bibinfo  {journal} {Reviews of Modern Physics}\ }\textbf
  {\bibinfo {volume} {89}},\ \bibinfo {pages} {041003} (\bibinfo {year}
  {2017})}\BibitemShut {NoStop}%
\bibitem [{\citenamefont {Dukaric}\ and\ \citenamefont
  {Wolf}(2008)}]{dukaric2008limit}%
  \BibitemOpen
  \bibfield  {author} {\bibinfo {author} {\bibfnamefont {D.~D.}\ \bibnamefont
  {Dukaric}}\ and\ \bibinfo {author} {\bibfnamefont {S.}~\bibnamefont {Wolf}},\
  }\href {https://doi.org/https://doi.org/10.48550/arXiv.0808.3317} {\bibinfo
  {title} {A {L}imit on {N}on-{L}ocality {D}istillation}},\ \bibinfo
  {howpublished} {arXiv preprint:0808.3317 [quant-ph]} (\bibinfo {year}
  {2008})\BibitemShut {NoStop}%
\bibitem [{\citenamefont {Buscemi}(2012)}]{buscemi2012all}%
  \BibitemOpen
  \bibfield  {author} {\bibinfo {author} {\bibfnamefont {F.}~\bibnamefont
  {Buscemi}},\ }\bibfield  {title} {\bibinfo {title} {All entangled quantum
  states are nonlocal},\ }\href
  {https://doi.org/10.1103/PhysRevLett.108.200401} {\bibfield  {journal}
  {\bibinfo  {journal} {Phys. Rev. Lett.}\ }\textbf {\bibinfo {volume} {108}},\
  \bibinfo {pages} {200401} (\bibinfo {year} {2012})}\BibitemShut {NoStop}%
\bibitem [{\citenamefont {Schmid}\ \emph {et~al.}(2020)\citenamefont {Schmid},
  \citenamefont {Rosset},\ and\ \citenamefont
  {Buscemi}}]{schmid2020typeindependent}%
  \BibitemOpen
  \bibfield  {author} {\bibinfo {author} {\bibfnamefont {D.}~\bibnamefont
  {Schmid}}, \bibinfo {author} {\bibfnamefont {D.}~\bibnamefont {Rosset}},\
  and\ \bibinfo {author} {\bibfnamefont {F.}~\bibnamefont {Buscemi}},\
  }\bibfield  {title} {\bibinfo {title} {The type-independent resource theory
  of local operations and shared randomness},\ }\href
  {https://doi.org/10.22331/q-2020-04-30-262} {\bibfield  {journal} {\bibinfo
  {journal} {{Quantum}}\ }\textbf {\bibinfo {volume} {4}},\ \bibinfo {pages}
  {262} (\bibinfo {year} {2020})}\BibitemShut {NoStop}%
\bibitem [{\citenamefont {Li}\ \emph {et~al.}(2020)\citenamefont {Li},
  \citenamefont {Bu},\ and\ \citenamefont {Wu}}]{li2020contextualrobustness}%
  \BibitemOpen
  \bibfield  {author} {\bibinfo {author} {\bibfnamefont {L.}~\bibnamefont
  {Li}}, \bibinfo {author} {\bibfnamefont {K.}~\bibnamefont {Bu}},\ and\
  \bibinfo {author} {\bibfnamefont {J.}~\bibnamefont {Wu}},\ }\bibfield
  {title} {\bibinfo {title} {Contextual robustness: An operational measure of
  contextuality},\ }\href {https://doi.org/10.1103/PhysRevA.101.012120}
  {\bibfield  {journal} {\bibinfo  {journal} {Phys. Rev. A}\ }\textbf {\bibinfo
  {volume} {101}},\ \bibinfo {pages} {012120} (\bibinfo {year}
  {2020})}\BibitemShut {NoStop}%
\bibitem [{\citenamefont {Takagi}\ and\ \citenamefont
  {Regula}(2019)}]{takagi2019general}%
  \BibitemOpen
  \bibfield  {author} {\bibinfo {author} {\bibfnamefont {R.}~\bibnamefont
  {Takagi}}\ and\ \bibinfo {author} {\bibfnamefont {B.}~\bibnamefont
  {Regula}},\ }\bibfield  {title} {\bibinfo {title} {General resource theories
  in quantum mechanics and beyond: Operational characterization via
  discrimination tasks},\ }\href {https://doi.org/10.1103/PhysRevX.9.031053}
  {\bibfield  {journal} {\bibinfo  {journal} {Phys. Rev. X}\ }\textbf {\bibinfo
  {volume} {9}},\ \bibinfo {pages} {031053} (\bibinfo {year}
  {2019})}\BibitemShut {NoStop}%
\bibitem [{\citenamefont {Takagi}\ \emph {et~al.}(2019)\citenamefont {Takagi},
  \citenamefont {Regula}, \citenamefont {Bu}, \citenamefont {Liu},\ and\
  \citenamefont {Adesso}}]{takagi2019operational}%
  \BibitemOpen
  \bibfield  {author} {\bibinfo {author} {\bibfnamefont {R.}~\bibnamefont
  {Takagi}}, \bibinfo {author} {\bibfnamefont {B.}~\bibnamefont {Regula}},
  \bibinfo {author} {\bibfnamefont {K.}~\bibnamefont {Bu}}, \bibinfo {author}
  {\bibfnamefont {Z.-W.}\ \bibnamefont {Liu}},\ and\ \bibinfo {author}
  {\bibfnamefont {G.}~\bibnamefont {Adesso}},\ }\bibfield  {title} {\bibinfo
  {title} {Operational advantage of quantum resources in subchannel
  discrimination},\ }\href {https://doi.org/10.1103/PhysRevLett.122.140402}
  {\bibfield  {journal} {\bibinfo  {journal} {Phys. Rev. Lett.}\ }\textbf
  {\bibinfo {volume} {122}},\ \bibinfo {pages} {140402} (\bibinfo {year}
  {2019})}\BibitemShut {NoStop}%
\bibitem [{\citenamefont {Uola}\ \emph {et~al.}(2019)\citenamefont {Uola},
  \citenamefont {Kraft}, \citenamefont {Shang}, \citenamefont {Yu},\ and\
  \citenamefont {G\"uhne}}]{uola2019quantifying}%
  \BibitemOpen
  \bibfield  {author} {\bibinfo {author} {\bibfnamefont {R.}~\bibnamefont
  {Uola}}, \bibinfo {author} {\bibfnamefont {T.}~\bibnamefont {Kraft}},
  \bibinfo {author} {\bibfnamefont {J.}~\bibnamefont {Shang}}, \bibinfo
  {author} {\bibfnamefont {X.-D.}\ \bibnamefont {Yu}},\ and\ \bibinfo {author}
  {\bibfnamefont {O.}~\bibnamefont {G\"uhne}},\ }\bibfield  {title} {\bibinfo
  {title} {Quantifying quantum resources with conic programming},\ }\href
  {https://doi.org/10.1103/PhysRevLett.122.130404} {\bibfield  {journal}
  {\bibinfo  {journal} {Phys. Rev. Lett.}\ }\textbf {\bibinfo {volume} {122}},\
  \bibinfo {pages} {130404} (\bibinfo {year} {2019})}\BibitemShut {NoStop}%
\bibitem [{\citenamefont {D\"ur}\ \emph {et~al.}(2000)\citenamefont {D\"ur},
  \citenamefont {Vidal},\ and\ \citenamefont {Cirac}}]{dur2000three}%
  \BibitemOpen
  \bibfield  {author} {\bibinfo {author} {\bibfnamefont {W.}~\bibnamefont
  {D\"ur}}, \bibinfo {author} {\bibfnamefont {G.}~\bibnamefont {Vidal}},\ and\
  \bibinfo {author} {\bibfnamefont {J.~I.}\ \bibnamefont {Cirac}},\ }\bibfield
  {title} {\bibinfo {title} {Three qubits can be entangled in two inequivalent
  ways},\ }\href {https://doi.org/10.1103/PhysRevA.62.062314} {\bibfield
  {journal} {\bibinfo  {journal} {Phys. Rev. A}\ }\textbf {\bibinfo {volume}
  {62}},\ \bibinfo {pages} {062314} (\bibinfo {year} {2000})}\BibitemShut
  {NoStop}%
\bibitem [{\citenamefont {Verstraete}\ \emph {et~al.}(2002)\citenamefont
  {Verstraete}, \citenamefont {Dehaene}, \citenamefont {De~Moor},\ and\
  \citenamefont {Verschelde}}]{verstraete2002four}%
  \BibitemOpen
  \bibfield  {author} {\bibinfo {author} {\bibfnamefont {F.}~\bibnamefont
  {Verstraete}}, \bibinfo {author} {\bibfnamefont {J.}~\bibnamefont {Dehaene}},
  \bibinfo {author} {\bibfnamefont {B.}~\bibnamefont {De~Moor}},\ and\ \bibinfo
  {author} {\bibfnamefont {H.}~\bibnamefont {Verschelde}},\ }\bibfield  {title}
  {\bibinfo {title} {Four qubits can be entangled in nine different ways},\
  }\href {https://doi.org/10.1103/PhysRevA.65.052112} {\bibfield  {journal}
  {\bibinfo  {journal} {Phys. Rev. A}\ }\textbf {\bibinfo {volume} {65}},\
  \bibinfo {pages} {052112} (\bibinfo {year} {2002})}\BibitemShut {NoStop}%
\bibitem [{\citenamefont {Duarte}\ \emph {et~al.}(2018)\citenamefont {Duarte},
  \citenamefont {Brito}, \citenamefont {Amaral},\ and\ \citenamefont
  {Chaves}}]{duarte2018concentration}%
  \BibitemOpen
  \bibfield  {author} {\bibinfo {author} {\bibfnamefont {C.}~\bibnamefont
  {Duarte}}, \bibinfo {author} {\bibfnamefont {S.}~\bibnamefont {Brito}},
  \bibinfo {author} {\bibfnamefont {B.}~\bibnamefont {Amaral}},\ and\ \bibinfo
  {author} {\bibfnamefont {R.}~\bibnamefont {Chaves}},\ }\bibfield  {title}
  {\bibinfo {title} {Concentration phenomena in the geometry of bell
  correlations},\ }\href {https://doi.org/10.1103/PhysRevA.98.062114}
  {\bibfield  {journal} {\bibinfo  {journal} {Phys. Rev. A}\ }\textbf {\bibinfo
  {volume} {98}},\ \bibinfo {pages} {062114} (\bibinfo {year}
  {2018})}\BibitemShut {NoStop}%
\bibitem [{\citenamefont {Amaral}\ and\ \citenamefont
  {Cunha}(2018)}]{amaral2018graph}%
  \BibitemOpen
  \bibfield  {author} {\bibinfo {author} {\bibfnamefont {B.}~\bibnamefont
  {Amaral}}\ and\ \bibinfo {author} {\bibfnamefont {M.~T.}\ \bibnamefont
  {Cunha}},\ }\href {https://doi.org/https://doi.org/10.1007/978-3-319-93827-1}
  {\emph {\bibinfo {title} {On graph approaches to contextuality and their role
  in quantum theory}}}\ (\bibinfo  {publisher} {Springer},\ \bibinfo {year}
  {2018})\BibitemShut {NoStop}%
\bibitem [{\citenamefont {Abramsky}\ and\ \citenamefont
  {Brandenburger}(2011)}]{abramsky2011sheaf}%
  \BibitemOpen
  \bibfield  {author} {\bibinfo {author} {\bibfnamefont {S.}~\bibnamefont
  {Abramsky}}\ and\ \bibinfo {author} {\bibfnamefont {A.}~\bibnamefont
  {Brandenburger}},\ }\bibfield  {title} {\bibinfo {title} {The sheaf-theoretic
  structure of non-locality and contextuality},\ }\href
  {https://doi.org/10.1088/1367-2630/13/11/113036} {\bibfield  {journal}
  {\bibinfo  {journal} {New Journal of Physics}\ }\textbf {\bibinfo {volume}
  {13}},\ \bibinfo {pages} {113036} (\bibinfo {year} {2011})}\BibitemShut
  {NoStop}%
\bibitem [{\citenamefont {Barbosa}\ \emph {et~al.}(2022)\citenamefont
  {Barbosa}, \citenamefont {Douce}, \citenamefont {Emeriau}, \citenamefont
  {Kashefi},\ and\ \citenamefont {Mansfield}}]{barbosa2022continuous}%
  \BibitemOpen
  \bibfield  {author} {\bibinfo {author} {\bibfnamefont {R.~S.}\ \bibnamefont
  {Barbosa}}, \bibinfo {author} {\bibfnamefont {T.}~\bibnamefont {Douce}},
  \bibinfo {author} {\bibfnamefont {P.-E.}\ \bibnamefont {Emeriau}}, \bibinfo
  {author} {\bibfnamefont {E.}~\bibnamefont {Kashefi}},\ and\ \bibinfo {author}
  {\bibfnamefont {S.}~\bibnamefont {Mansfield}},\ }\bibfield  {title} {\bibinfo
  {title} {Continuous-variable nonlocality and contextuality},\ }\href
  {https://doi.org/10.1007/s00220-021-04285-7} {\bibfield  {journal} {\bibinfo
  {journal} {Communications in Mathematical Physics}\ }\textbf {\bibinfo
  {volume} {391}},\ \bibinfo {pages} {1047–1089} (\bibinfo {year}
  {2022})}\BibitemShut {NoStop}%
\bibitem [{\citenamefont {Kujala}\ \emph {et~al.}(2015)\citenamefont {Kujala},
  \citenamefont {Dzhafarov},\ and\ \citenamefont
  {Larsson}}]{kujala2015necessary}%
  \BibitemOpen
  \bibfield  {author} {\bibinfo {author} {\bibfnamefont {J.~V.}\ \bibnamefont
  {Kujala}}, \bibinfo {author} {\bibfnamefont {E.~N.}\ \bibnamefont
  {Dzhafarov}},\ and\ \bibinfo {author} {\bibfnamefont {J.-A.}\ \bibnamefont
  {Larsson}},\ }\bibfield  {title} {\bibinfo {title} {Necessary and sufficient
  conditions for an extended noncontextuality in a broad class of quantum
  mechanical systems},\ }\href {https://doi.org/10.1103/PhysRevLett.115.150401}
  {\bibfield  {journal} {\bibinfo  {journal} {Phys. Rev. Lett.}\ }\textbf
  {\bibinfo {volume} {115}},\ \bibinfo {pages} {150401} (\bibinfo {year}
  {2015})}\BibitemShut {NoStop}%
\bibitem [{\citenamefont {Amaral}\ \emph
  {et~al.}(2018{\natexlab{b}})\citenamefont {Amaral}, \citenamefont {Duarte},\
  and\ \citenamefont {Oliveira}}]{amaral2018necessary}%
  \BibitemOpen
  \bibfield  {author} {\bibinfo {author} {\bibfnamefont {B.}~\bibnamefont
  {Amaral}}, \bibinfo {author} {\bibfnamefont {C.}~\bibnamefont {Duarte}},\
  and\ \bibinfo {author} {\bibfnamefont {R.~I.}\ \bibnamefont {Oliveira}},\
  }\bibfield  {title} {\bibinfo {title} {Necessary conditions for extended
  noncontextuality in general sets of random variables},\ }\href
  {https://doi.org/10.1063/1.5024885} {\bibfield  {journal} {\bibinfo
  {journal} {Journal of Mathematical Physics}\ }\textbf {\bibinfo {volume}
  {59}},\ \bibinfo {pages} {072202} (\bibinfo {year}
  {2018}{\natexlab{b}})}\BibitemShut {NoStop}%
\bibitem [{\citenamefont {Amaral}\ and\ \citenamefont
  {Duarte}(2019)}]{amaral2019characterizing}%
  \BibitemOpen
  \bibfield  {author} {\bibinfo {author} {\bibfnamefont {B.}~\bibnamefont
  {Amaral}}\ and\ \bibinfo {author} {\bibfnamefont {C.}~\bibnamefont
  {Duarte}},\ }\bibfield  {title} {\bibinfo {title} {Characterizing and
  quantifying extended contextuality},\ }\href
  {https://doi.org/10.1103/PhysRevA.100.062103} {\bibfield  {journal} {\bibinfo
   {journal} {Phys. Rev. A}\ }\textbf {\bibinfo {volume} {100}},\ \bibinfo
  {pages} {062103} (\bibinfo {year} {2019})}\BibitemShut {NoStop}%
\bibitem [{\citenamefont {Tezzin}\ \emph {et~al.}(2020)\citenamefont {Tezzin},
  \citenamefont {Wagner},\ and\ \citenamefont
  {Amaral}}]{tezzin2020contextualitybydefault}%
  \BibitemOpen
  \bibfield  {author} {\bibinfo {author} {\bibfnamefont {A.}~\bibnamefont
  {Tezzin}}, \bibinfo {author} {\bibfnamefont {R.}~\bibnamefont {Wagner}},\
  and\ \bibinfo {author} {\bibfnamefont {B.}~\bibnamefont {Amaral}},\ }\href
  {https://doi.org/https://doi.org/10.48550/arXiv.2008.02273} {\bibinfo {title}
  {Contextuality-by-default for behaviours in compatibility scenarios}},\
  \bibinfo {howpublished} {arXiv preprint: 2008.02273 [quant-ph]} (\bibinfo
  {year} {2020})\BibitemShut {NoStop}%
\bibitem [{\citenamefont {Tezzin}\ \emph {et~al.}(2022)\citenamefont {Tezzin},
  \citenamefont {Wolfe}, \citenamefont {Amaral},\ and\ \citenamefont
  {Jones}}]{tezzin2022impossibility}%
  \BibitemOpen
  \bibfield  {author} {\bibinfo {author} {\bibfnamefont {A.}~\bibnamefont
  {Tezzin}}, \bibinfo {author} {\bibfnamefont {E.}~\bibnamefont {Wolfe}},
  \bibinfo {author} {\bibfnamefont {B.}~\bibnamefont {Amaral}},\ and\ \bibinfo
  {author} {\bibfnamefont {M.}~\bibnamefont {Jones}},\ }\href
  {https://doi.org/https://doi.org/10.48550/arXiv.2212.06976} {\bibinfo {title}
  {Impossibility theorem for extending contextuality to disturbing systems}}
  (\bibinfo {year} {2022}),\ \Eprint {https://arxiv.org/abs/2212.06976}
  {arXiv:2212.06976 [quant-ph]} \BibitemShut {NoStop}%
\bibitem [{\citenamefont {Choudhary}\ \emph {et~al.}(2024)\citenamefont
  {Choudhary}, \citenamefont {Barbosa},\ and\ \citenamefont
  {Cabello}}]{choudhary2024lifting}%
  \BibitemOpen
  \bibfield  {author} {\bibinfo {author} {\bibfnamefont {R.}~\bibnamefont
  {Choudhary}}, \bibinfo {author} {\bibfnamefont {R.~S.}\ \bibnamefont
  {Barbosa}},\ and\ \bibinfo {author} {\bibfnamefont {A.}~\bibnamefont
  {Cabello}},\ }\href@noop {} {\bibinfo {title} {Lifting noncontextuality
  inequalities}} (\bibinfo {year} {2024}),\ \Eprint
  {https://arxiv.org/abs/2401.12349} {arXiv:2401.12349 [quant-ph]} \BibitemShut
  {NoStop}%
\bibitem [{Note1()}]{Note1}%
  \BibitemOpen
  \bibinfo {note} {To ease the notation, we write here and in the remaining of
  our work $\protect \mathcal {O}_{PRE} \equiv \protect \mathcal
  {O}_{PRE}^{\protect \mathcal {M}_{PRE}}$, and similarly for the scenarios
  $\Upsilon _{POS}$.}\BibitemShut {Stop}%
\bibitem [{\citenamefont {Wagner}\ \emph {et~al.}(2023)\citenamefont {Wagner},
  \citenamefont {Baldij\~ao}, \citenamefont {Tezzin},\ and\ \citenamefont
  {Amaral}}]{wagner2023using}%
  \BibitemOpen
  \bibfield  {author} {\bibinfo {author} {\bibfnamefont {R.}~\bibnamefont
  {Wagner}}, \bibinfo {author} {\bibfnamefont {R.~D.}\ \bibnamefont
  {Baldij\~ao}}, \bibinfo {author} {\bibfnamefont {A.}~\bibnamefont {Tezzin}},\
  and\ \bibinfo {author} {\bibfnamefont {B.}~\bibnamefont {Amaral}},\
  }\bibfield  {title} {\bibinfo {title} {Using a resource theoretic perspective
  to witness and engineer quantum generalized contextuality for
  prepare-and-measure scenarios},\ }\href
  {https://doi.org/10.1088/1751-8121/ad0bcc} {\bibfield  {journal} {\bibinfo
  {journal} {Journal of Physics A: Mathematical and Theoretical}\ }\textbf
  {\bibinfo {volume} {56}},\ \bibinfo {pages} {505303} (\bibinfo {year}
  {2023})}\BibitemShut {NoStop}%
\bibitem [{Note2()}]{Note2}%
  \BibitemOpen
  \bibinfo {note} {It is useful, for clarity and consistency, to distinguish
  our usage of `type' from others encountered in the literature. For example,
  Refs.~\cite {schmid2021unscrambling} use the word `type' to denote the
  systems considered (classical, quantum or GPT) within a given process theory.
  Hence, a type labels a different system. Ref.~\cite {barbosa2023closing}
  considers a `type' to be the same as considered here, i.e., the compatibility
  scenario (that they term measurement scenario), and equivalently the type of
  a simulation to be the pair of scenarios in the domain and codomain of a free
  operation. Refs.~\cite {rosset2020type} considered `type' to denote
  input/output structure of choices of systems of a given process, e.g. a
  generic bipartite channel has the type $\protect \mathsf {QQ} \to \protect
  \mathsf {QQ}$ because it has inputs and outputs two quantum systems, while a
  bipartite shared randomness has the type $\protect \mathsf {II} \to \protect
  \mathsf {CC}$ that inputs the trivial system and outputs two classical
  systems. Finally, Ref.~\cite {wolfe2020quantifying} used `type' to denote the
  prescription of the number of inputs/outputs in a bipartite Bell scenario,
  for example, they write $\left (\begin {matrix} 2 & 2 \\ 2 & 2 \end
  {matrix}\right )$ corresponding to two labels as inputs/outputs per `wing' in
  a Bell causal structure. The whole matrix is called the `type' of the
  specific common cause scenario.}\BibitemShut {Stop}%
\bibitem [{\citenamefont {Brunner}\ \emph {et~al.}(2014)\citenamefont
  {Brunner}, \citenamefont {Cavalcanti}, \citenamefont {Pironio}, \citenamefont
  {Scarani},\ and\ \citenamefont {Wehner}}]{brunner2014bell}%
  \BibitemOpen
  \bibfield  {author} {\bibinfo {author} {\bibfnamefont {N.}~\bibnamefont
  {Brunner}}, \bibinfo {author} {\bibfnamefont {D.}~\bibnamefont {Cavalcanti}},
  \bibinfo {author} {\bibfnamefont {S.}~\bibnamefont {Pironio}}, \bibinfo
  {author} {\bibfnamefont {V.}~\bibnamefont {Scarani}},\ and\ \bibinfo {author}
  {\bibfnamefont {S.}~\bibnamefont {Wehner}},\ }\bibfield  {title} {\bibinfo
  {title} {Bell nonlocality},\ }\href
  {https://doi.org/10.1103/RevModPhys.86.419} {\bibfield  {journal} {\bibinfo
  {journal} {Rev. Mod. Phys.}\ }\textbf {\bibinfo {volume} {86}},\ \bibinfo
  {pages} {419} (\bibinfo {year} {2014})}\BibitemShut {NoStop}%
\bibitem [{\citenamefont {Abramsky}\ \emph {et~al.}(2016)\citenamefont
  {Abramsky}, \citenamefont {Barbosa}, \citenamefont {Kishida}, \citenamefont
  {Lal},\ and\ \citenamefont {Mansfield}}]{abramsky2016possibilities}%
  \BibitemOpen
  \bibfield  {author} {\bibinfo {author} {\bibfnamefont {S.}~\bibnamefont
  {Abramsky}}, \bibinfo {author} {\bibfnamefont {R.~S.}\ \bibnamefont
  {Barbosa}}, \bibinfo {author} {\bibfnamefont {K.}~\bibnamefont {Kishida}},
  \bibinfo {author} {\bibfnamefont {R.}~\bibnamefont {Lal}},\ and\ \bibinfo
  {author} {\bibfnamefont {S.}~\bibnamefont {Mansfield}},\ }\bibfield  {title}
  {\bibinfo {title} {Possibilities determine the combinatorial structure of
  probability polytopes},\ }\href {https://doi.org/10.1016/j.jmp.2016.03.006}
  {\bibfield  {journal} {\bibinfo  {journal} {Journal of Mathematical
  Psychology}\ }\textbf {\bibinfo {volume} {74}},\ \bibinfo {pages} {58–65}
  (\bibinfo {year} {2016})}\BibitemShut {NoStop}%
\bibitem [{\citenamefont {Abramsky}\ \emph {et~al.}(2012)\citenamefont
  {Abramsky}, \citenamefont {Mansfield},\ and\ \citenamefont
  {Barbosa}}]{abramsky2012cohomology}%
  \BibitemOpen
  \bibfield  {author} {\bibinfo {author} {\bibfnamefont {S.}~\bibnamefont
  {Abramsky}}, \bibinfo {author} {\bibfnamefont {S.}~\bibnamefont
  {Mansfield}},\ and\ \bibinfo {author} {\bibfnamefont {R.~S.}\ \bibnamefont
  {Barbosa}},\ }\bibfield  {title} {\bibinfo {title} {The cohomology of
  non-locality and contextuality},\ }\href {https://doi.org/10.4204/eptcs.95.1}
  {\bibfield  {journal} {\bibinfo  {journal} {Electronic Proceedings in
  Theoretical Computer Science}\ }\textbf {\bibinfo {volume} {95}},\ \bibinfo
  {pages} {1–14} (\bibinfo {year} {2012})}\BibitemShut {NoStop}%
\bibitem [{\citenamefont {Amaral}\ and\ \citenamefont
  {Terra~Cunha}(2024)}]{amaral2017geometrical}%
  \BibitemOpen
  \bibfield  {author} {\bibinfo {author} {\bibfnamefont {B.}~\bibnamefont
  {Amaral}}\ and\ \bibinfo {author} {\bibfnamefont {M.}~\bibnamefont
  {Terra~Cunha}},\ }\bibfield  {title} {\bibinfo {title} {On geometrical
  aspects of the graph approach to contextuality},\ }\href
  {https://doi.org/10.1098/rsta.2023.0008} {\bibfield  {journal} {\bibinfo
  {journal} {Philosophical Transactions of the Royal Society A: Mathematical,
  Physical and Engineering Sciences}\ }\textbf {\bibinfo {volume} {382}},\
  \bibinfo {pages} {20230008} (\bibinfo {year} {2024})}\BibitemShut {NoStop}%
\bibitem [{\citenamefont {Wang}\ \emph {et~al.}(2022)\citenamefont {Wang},
  \citenamefont {Zhang}, \citenamefont {Luan}, \citenamefont {Um},
  \citenamefont {Wang}, \citenamefont {Qiao}, \citenamefont {Xie},
  \citenamefont {Zhang}, \citenamefont {Cabello},\ and\ \citenamefont
  {Kim}}]{wang2022significant}%
  \BibitemOpen
  \bibfield  {author} {\bibinfo {author} {\bibfnamefont {P.}~\bibnamefont
  {Wang}}, \bibinfo {author} {\bibfnamefont {J.}~\bibnamefont {Zhang}},
  \bibinfo {author} {\bibfnamefont {C.-Y.}\ \bibnamefont {Luan}}, \bibinfo
  {author} {\bibfnamefont {M.}~\bibnamefont {Um}}, \bibinfo {author}
  {\bibfnamefont {Y.}~\bibnamefont {Wang}}, \bibinfo {author} {\bibfnamefont
  {M.}~\bibnamefont {Qiao}}, \bibinfo {author} {\bibfnamefont {T.}~\bibnamefont
  {Xie}}, \bibinfo {author} {\bibfnamefont {J.-N.}\ \bibnamefont {Zhang}},
  \bibinfo {author} {\bibfnamefont {A.}~\bibnamefont {Cabello}},\ and\ \bibinfo
  {author} {\bibfnamefont {K.}~\bibnamefont {Kim}},\ }\bibfield  {title}
  {\bibinfo {title} {Significant loophole-free test of {K}ochen-{S}pecker
  contextuality using two species of atomic ions},\ }\href
  {https://doi.org/10.1126/sciadv.abk1660} {\bibfield  {journal} {\bibinfo
  {journal} {Science Advances}\ }\textbf {\bibinfo {volume} {8}},\ \bibinfo
  {pages} {eabk1660} (\bibinfo {year} {2022})}\BibitemShut {NoStop}%
\bibitem [{Note3()}]{Note3}%
  \BibitemOpen
  \bibinfo {note} {They do not prove, but we believe that even if one relaxes
  the constraint of type-dependence the set $\protect \mathrm {LOSR}^\circ $
  might be non-convex as well.}\BibitemShut {Stop}%
\bibitem [{\citenamefont {Y\={\i}ng}\ \emph {et~al.}(2024)\citenamefont
  {Y\={\i}ng}, \citenamefont {Gonda},\ and\ \citenamefont
  {Spekkens}}]{yile2024conceptualformalgroundworkstudy}%
  \BibitemOpen
  \bibfield  {author} {\bibinfo {author} {\bibfnamefont {Y.}~\bibnamefont
  {Y\={\i}ng}}, \bibinfo {author} {\bibfnamefont {T.}~\bibnamefont {Gonda}},\
  and\ \bibinfo {author} {\bibfnamefont {R.}~\bibnamefont {Spekkens}},\ }\href
  {https://arxiv.org/abs/2407.00164} {\bibinfo {title} {Conceptual and formal
  groundwork for the study of resource dependence relations}},\ \bibinfo
  {howpublished} {arXiv:2407.00164 [quant-ph]} (\bibinfo {year}
  {2024})\BibitemShut {NoStop}%
\bibitem [{\citenamefont
  {Fraser}(2023)}]{fraser2023estimationtheoreticapproachquantum}%
  \BibitemOpen
  \bibfield  {author} {\bibinfo {author} {\bibfnamefont {T.~C.}\ \bibnamefont
  {Fraser}},\ }\href
  {https://doi.org/https://doi.org/10.48550/arXiv.2402.10902} {\bibinfo {title}
  {An estimation theoretic approach to quantum realizability problems}},\
  \bibinfo {howpublished} {arXiv:2402.10902 [quant-ph]} (\bibinfo {year}
  {2023})\BibitemShut {NoStop}%
\bibitem [{\citenamefont {Vorob’ev}(1962)}]{vorobyev1962consistent}%
  \BibitemOpen
  \bibfield  {author} {\bibinfo {author} {\bibfnamefont {N.~N.}\ \bibnamefont
  {Vorob’ev}},\ }\bibfield  {title} {\bibinfo {title} {Consistent families of
  measures and their extensions},\ }\href {https://doi.org/10.1137/1107014}
  {\bibfield  {journal} {\bibinfo  {journal} {Theory of Probability \& Its
  Applications}\ }\textbf {\bibinfo {volume} {7}},\ \bibinfo {pages}
  {147–163} (\bibinfo {year} {1962})}\BibitemShut {NoStop}%
\bibitem [{\citenamefont {Budroni}(2019)}]{budroni2019contextuality}%
  \BibitemOpen
  \bibfield  {author} {\bibinfo {author} {\bibfnamefont {C.}~\bibnamefont
  {Budroni}},\ }\bibfield  {title} {\bibinfo {title} {Contextuality, memory
  cost and non-classicality for sequential measurements},\ }\href
  {https://doi.org/10.1098/rsta.2019.0141} {\bibfield  {journal} {\bibinfo
  {journal} {Philosophical Transactions of the Royal Society A: Mathematical,
  Physical and Engineering Sciences}\ }\textbf {\bibinfo {volume} {377}},\
  \bibinfo {pages} {20190141} (\bibinfo {year} {2019})}\BibitemShut {NoStop}%
\bibitem [{\citenamefont {Kleinmann}\ \emph {et~al.}(2011)\citenamefont
  {Kleinmann}, \citenamefont {G\"uhne}, \citenamefont {Portillo}, \citenamefont
  {Larsson},\ and\ \citenamefont {Cabello}}]{kleinmann2011memory}%
  \BibitemOpen
  \bibfield  {author} {\bibinfo {author} {\bibfnamefont {M.}~\bibnamefont
  {Kleinmann}}, \bibinfo {author} {\bibfnamefont {O.}~\bibnamefont {G\"uhne}},
  \bibinfo {author} {\bibfnamefont {J.~R.}\ \bibnamefont {Portillo}}, \bibinfo
  {author} {\bibfnamefont {J.-A.}\ \bibnamefont {Larsson}},\ and\ \bibinfo
  {author} {\bibfnamefont {A.}~\bibnamefont {Cabello}},\ }\bibfield  {title}
  {\bibinfo {title} {Memory cost of quantum contextuality},\ }\href
  {https://doi.org/10.1088/1367-2630/13/11/113011} {\bibfield  {journal}
  {\bibinfo  {journal} {New Journal of Physics}\ }\textbf {\bibinfo {volume}
  {13}},\ \bibinfo {pages} {113011} (\bibinfo {year} {2011})}\BibitemShut
  {NoStop}%
\bibitem [{\citenamefont {Fagundes}\ and\ \citenamefont
  {Kleinmann}(2017)}]{fagundes2017memory}%
  \BibitemOpen
  \bibfield  {author} {\bibinfo {author} {\bibfnamefont {G.}~\bibnamefont
  {Fagundes}}\ and\ \bibinfo {author} {\bibfnamefont {M.}~\bibnamefont
  {Kleinmann}},\ }\bibfield  {title} {\bibinfo {title} {Memory cost for
  simulating all quantum correlations from the {P}eres-–{M}ermin scenario},\
  }\href {https://doi.org/10.1088/1751-8121/aa7ab3} {\bibfield  {journal}
  {\bibinfo  {journal} {Journal of Physics A: Mathematical and Theoretical}\
  }\textbf {\bibinfo {volume} {50}},\ \bibinfo {pages} {325302} (\bibinfo
  {year} {2017})}\BibitemShut {NoStop}%
\bibitem [{\citenamefont {Budroni}\ \emph {et~al.}(2019)\citenamefont
  {Budroni}, \citenamefont {Fagundes},\ and\ \citenamefont
  {Kleinmann}}]{budroni2019memory_temporal}%
  \BibitemOpen
  \bibfield  {author} {\bibinfo {author} {\bibfnamefont {C.}~\bibnamefont
  {Budroni}}, \bibinfo {author} {\bibfnamefont {G.}~\bibnamefont {Fagundes}},\
  and\ \bibinfo {author} {\bibfnamefont {M.}~\bibnamefont {Kleinmann}},\
  }\bibfield  {title} {\bibinfo {title} {Memory cost of temporal
  correlations},\ }\href {https://doi.org/10.1088/1367-2630/ab3cb4} {\bibfield
  {journal} {\bibinfo  {journal} {New Journal of Physics}\ }\textbf {\bibinfo
  {volume} {21}},\ \bibinfo {pages} {093018} (\bibinfo {year}
  {2019})}\BibitemShut {NoStop}%
\bibitem [{\citenamefont {Peres}\ \emph {et~al.}(2024)\citenamefont {Peres},
  \citenamefont {Wagner},\ and\ \citenamefont
  {Galv\~ao}}]{peres2024nonstabilizerness}%
  \BibitemOpen
  \bibfield  {author} {\bibinfo {author} {\bibfnamefont {F.~C.~R.}\
  \bibnamefont {Peres}}, \bibinfo {author} {\bibfnamefont {R.}~\bibnamefont
  {Wagner}},\ and\ \bibinfo {author} {\bibfnamefont {E.~F.}\ \bibnamefont
  {Galv\~ao}},\ }\bibfield  {title} {\bibinfo {title} {Non-stabilizerness and
  entanglement from cat-state injection},\ }\href
  {https://doi.org/10.1088/1367-2630/ad1b80} {\bibfield  {journal} {\bibinfo
  {journal} {New Journal of Physics}\ }\textbf {\bibinfo {volume} {26}},\
  \bibinfo {pages} {013051} (\bibinfo {year} {2024})}\BibitemShut {NoStop}%
\bibitem [{\citenamefont {Haferkamp}\ \emph {et~al.}(2022)\citenamefont
  {Haferkamp}, \citenamefont {Montealegre-Mora}, \citenamefont {Heinrich},
  \citenamefont {Eisert}, \citenamefont {Gross},\ and\ \citenamefont
  {Roth}}]{haferkamp2022efficient}%
  \BibitemOpen
  \bibfield  {author} {\bibinfo {author} {\bibfnamefont {J.}~\bibnamefont
  {Haferkamp}}, \bibinfo {author} {\bibfnamefont {F.}~\bibnamefont
  {Montealegre-Mora}}, \bibinfo {author} {\bibfnamefont {M.}~\bibnamefont
  {Heinrich}}, \bibinfo {author} {\bibfnamefont {J.}~\bibnamefont {Eisert}},
  \bibinfo {author} {\bibfnamefont {D.}~\bibnamefont {Gross}},\ and\ \bibinfo
  {author} {\bibfnamefont {I.}~\bibnamefont {Roth}},\ }\bibfield  {title}
  {\bibinfo {title} {Efficient {U}nitary {D}esigns with a {S}ystem-{S}ize
  {I}ndependent {N}umber of {N}on-{C}lifford {G}ates},\ }\href
  {https://doi.org/10.1007/s00220-022-04507-6} {\bibfield  {journal} {\bibinfo
  {journal} {Communications in Mathematical Physics}\ }\textbf {\bibinfo
  {volume} {397}},\ \bibinfo {pages} {995–1041} (\bibinfo {year}
  {2022})}\BibitemShut {NoStop}%
\bibitem [{\citenamefont {Schmid}\ \emph {et~al.}(2021)\citenamefont {Schmid},
  \citenamefont {Selby},\ and\ \citenamefont
  {Spekkens}}]{schmid2021unscrambling}%
  \BibitemOpen
  \bibfield  {author} {\bibinfo {author} {\bibfnamefont {D.}~\bibnamefont
  {Schmid}}, \bibinfo {author} {\bibfnamefont {J.~H.}\ \bibnamefont {Selby}},\
  and\ \bibinfo {author} {\bibfnamefont {R.~W.}\ \bibnamefont {Spekkens}},\
  }\href@noop {} {\bibinfo {title} {Unscrambling the omelette of causation and
  inference: The framework of causal-inferential theories}} (\bibinfo {year}
  {2021}),\ \Eprint {https://arxiv.org/abs/2009.03297} {arXiv:2009.03297
  [quant-ph]} \BibitemShut {NoStop}%
\bibitem [{\citenamefont {Rosset}\ \emph {et~al.}(2020)\citenamefont {Rosset},
  \citenamefont {Schmid},\ and\ \citenamefont {Buscemi}}]{rosset2020type}%
  \BibitemOpen
  \bibfield  {author} {\bibinfo {author} {\bibfnamefont {D.}~\bibnamefont
  {Rosset}}, \bibinfo {author} {\bibfnamefont {D.}~\bibnamefont {Schmid}},\
  and\ \bibinfo {author} {\bibfnamefont {F.}~\bibnamefont {Buscemi}},\
  }\bibfield  {title} {\bibinfo {title} {Type-independent characterization of
  spacelike separated resources},\ }\href
  {https://doi.org/10.1103/PhysRevLett.125.210402} {\bibfield  {journal}
  {\bibinfo  {journal} {Phys. Rev. Lett.}\ }\textbf {\bibinfo {volume} {125}},\
  \bibinfo {pages} {210402} (\bibinfo {year} {2020})}\BibitemShut {NoStop}%
\end{thebibliography}%

\begin{appendix}

\section{Proof of Lemma~\ref{Lemma: cost}}\label{app: proof of lemma cost}

Let us suppose that $B \in \mathrm{NC}(\Upsilon_n)$. Then, by definition of the monotone we have that 
\begin{align*}
\mathsf{c}_k(B) &= \min_{B' \in \mathrm{P}_n} \{ I_k^{(n)}(B'): B' \to B \} = n-2,
\end{align*}
since, because $B$ is a noncontextual behavior there exists some wiring sending every $B' \in \mathrm{P}_n$ towards the only noncontextual element $B \in \mathrm{P}_n$ that saturates some noncontextuality inequality $I_k^{(n)} = n-2$. From the definition, if $B \in \mathrm{P}_n \setminus \mathrm{NC}(\Upsilon_n)$ we have that there exists some $k^\star$ such that $\mathsf{c}_{k^\star}(B) = I_{k^\star}^{(n)}(B)$.

Let us consider more generally any $B \in \mathrm{ND}(\Upsilon_n)$. If $B \in \mathrm{ND}(\Upsilon_n) \setminus \mathrm{NC}(\Upsilon_n)$, let $k^\star$ be the label for the inequality functional $I_{k^\star}^{(n)}$ that is violated. $B$ can always be written as $$B = B(\alpha,\varepsilon) := \alpha \tilde{B} + (1-\alpha) B_\varepsilon$$
for $\tilde{B}$ some extremal noncontextual behavior satisfying that $I_{k^\star}^{(n)} = n-2$, and $B_\varepsilon \in \mathrm{P}_{k^\star}^{(n)}$.

Since $B$ is a convex combination of $\tilde{B}$ (which is a free behavior) and $B_\varepsilon$, there exists a noncontextual wiring such that $B_\varepsilon \mapsto B$. Clearly, for any $\varepsilon' > \varepsilon$ we increase the value of $I_{k^\star}^{(n)}(B'_{\varepsilon'})$, and hence make the minimization defining $\mathsf{c}_{k^\star}$ worse. For values $\varepsilon' < \varepsilon$ the value $I_{k^\star}^{(n)}(B'_{\varepsilon'})$ decreases, but there can be no free operation towards $B$. One can see this by supposing that there exists a free operation from $B_{\varepsilon'} \to B = \alpha \tilde{B}+(1-\alpha)B_\varepsilon \equiv B({\alpha,\varepsilon})$ with $\varepsilon' < \varepsilon$. Clearly, there is a free operation from $B \to B(0,\varepsilon) = \tilde{B}$, due to the convexity of NCW. Therefore, we would have that $B_{\varepsilon'} \to B(\alpha,\varepsilon) \to B(0,\varepsilon)$. From transitivity, we would have that there exists a free operation from $B_{\varepsilon'}$ to $B_\varepsilon$, which is absurd. 

Therefore, we can simply use
\begin{align*}
    &\mathsf{c}_{k^\star}(B) = I_{k^\star}(B_\varepsilon) = \varepsilon n + (1-\varepsilon)(n-2)\\
    &=\varepsilon n + n-2 -\varepsilon n+2\varepsilon\\
    &=n+2(\varepsilon-1),
\end{align*}
as we wanted to show.

\end{appendix}

\end{document}